\definecolor{myred}{rgb}{0.77, 0.0, 0.1}
\definecolor{crimson}{rgb}{0.86, 0.08, 0.24}
\definecolor{awesome}{rgb}{1.0, 0.13, 0.32}
\definecolor{newgreen}{rgb}{0.0,0.6,0.0}
\definecolor{trueblue}{rgb}{0.0, 0.45, 0.81}
\definecolor{softblue}{RGB}{6,90,255}
\definecolor{bluemod}{RGB}{0,0,205}
\renewcommand{\leq}{\leqslant}
\renewcommand{\geq}{\geqslant}
\newcommand{\di}{\mathrm{d}}
\newcommand{\eps}{\varepsilon}
\newcommand{\wt}{\widetilde}
\newcommand{\ol}{\overline}
\newcommand{\pp}{\, : \;}
\newcommand{\N}{\mathbf N}
\newcommand{\Z}{\mathbf Z}
\newcommand{\R}{\mathbf R}
\newcommand{\ie}{i.e.\@\xspace} 
\newcommand{\eg}{e.g.\@\xspace}
\newcommand{\iid}{i.i.d.\@\xspace}
\newcommand{\ceil}[1]{\lceil #1 \rceil}
\newcommand{\innerp}[2]{\langle #1 , #2 \rangle}
\newcommand{\tr}{\mathrm{Tr}}
\renewcommand{\O}{\mathsf{O}}
\newcommand{\norm}[1]{\|#1\|}
\newcommand{\diam}{\mathop{\mathrm{diam}}}
\DeclareMathOperator{\vol}{Vol}
\newcommand{\dist}{\mathrm{dist}}
\newcommand{\conv}{\mathrm{conv}}
\newcommand{\E}{\mathbb E}
\renewcommand{\P}{\mathbb P}
\newcommand{\probas}{\mathcal{P}}
\newcommand{\indic}[1]{\bm 1 ( #1 )}
\newcommand{\kl}{\mathrm{KL}}%
\newcommand{\kll}[2]{\kl(#1, #2)}%
\newcommand{\tv}[1]{\norm{#1}_{\mathrm{TV}}}%
\newcommand{\gaussdist}{\mathcal{N}}
\newcommand{\F}{\mathcal{F}}
\newcommand{\Y}{\mathcal{Y}}
\newcommand{\Zs}{\mathcal{Z}}
\newcommand{\A}{\mathcal{A}}
\newcommand{\regret}{\mathrm{Regret}}
\newcommand{\mmr}{\mathsf{R}^*}
\newcommand{\mred}{\mathsf{Red}}%
\newcommand{\corr}{\mathsf{Corr}}
\newcommand{\Ts}{\mathcal{T}}
\newcommand{\pr}{\mathsf{p}}
\newcommand{\fs}{\mathsf{f}}
\newtheorem{proposition}{Proposition}[section]%
\newtheorem{theorem}{Theorem}[section]
\newtheorem{lemma}{Lemma}[section]
\newtheorem{corollary}{Corollary}[section]
\theoremstyle{definition}
\newtheorem{definition}{Definition}[section]
\theoremstyle{remark}
\newtheorem{remark}{Remark}[section]
\numberwithin{equation}{section}
\title{Universal coding, intrinsic volumes, and metric complexity%
}
\author{Jaouad Mourtada\thanks{%
    CREST, ENSAE,
    Institut polytechnique de Paris; \href{mailto:jaouad.mourtada@ensae.fr}{jaouad.mourtada@ensae.fr}}
}
\date{\today%
}
\begin{document}
\maketitle

\begin{abstract}
  We study sequential probability assignment in the Gaussian setting, where the goal is to predict, or equivalently compress, a sequence of real-valued observations almost as well as the best Gaussian distribution with mean constrained to a given subset of $\R^n$.
  First, in the case of a convex constraint set $K$, we express the hardness of the prediction problem (the minimax regret) in terms of the intrinsic volumes of $K$; specifically, it equals the logarithm of the Wills functional from convex geometry.
  We then establish a comparison inequality for the Wills functional in the general nonconvex case, which underlines the metric nature of this quantity and generalizes the Slepian-Sudakov-Fernique comparison principle for the Gaussian width.
  Motivated by this inequality, we characterize the exact order of magnitude of the considered functional for a general nonconvex set, in terms of global covering numbers and local Gaussian widths.
  This implies sharp estimates, of metric nature, on the log-Laplace transform of the intrinsic volume sequence of a convex body.
  As part of our analysis, we also characterize the minimax redundancy for a general constraint set.
  We finally relate and contrast our findings with classical asymptotic results in information theory.
\end{abstract}

{\small 
  \textbf{Keywords:} Universal coding, sequential prediction, convex bodies, intrinsic volumes, Wills functional, redundancy, Gaussian processes.
}

\tableofcontents

\section{Introduction}
\label{sec:introduction}

In this work we study universal coding in a Gaussian setting, %
and relate the hardness of this prediction problem to the geometry of the ``model''.
This information-theoretic problem will be first connected to convex geometry, and then studied from a distinct metric perspective inspired by the theory of Gaussian processes.
In an attempt to make this text accessible to readers from various fields, we have included some background material in the first two sections.

\paragraph{Sequential prediction.}

Sequential probability assignment is a basic problem in information theory, statistics and learning theory, with connections to
sequential prediction~\cite{cesabianchi2006plg,vovk1998mixability}, %
lossless coding and data compression~\cite{cover2006elements,csiszar2004information,merhav1998universal,gassiat2018universal,polyanskiy2022information},
minimum description length~\cite{rissanen1985mdl,barron1998minimum,grunwald2007mdl}, as well as aggregation and estimation theory in statistics~\cite{catoni2004statistical,yang1999information,yang2000mixing}.
Informally, it can be formulated as follows: given a sequence of observations unknown a priori, the goal is to assign a joint probability as large as possible to this sequence.

Specifically, let $y_{1:n} = (y_1, \dots, y_n) \in \Y^n$ be a sequence of {observations} taking value in some measurable space $\Y$, which %
we aim to predict.
Here, the forecast is not ``punctual'' (that is, consisting of another sequence that should be ``close'' to the true one in a suitable sense) but rather ``probabilistic'': it consists in assigning probabilities to all possible sequences, with the aim of attributing a probability as large as possible to the actual data sequence $y_{1:n}$.
Given a reference measure $\mu$ on $\Y$, a prediction then consists of a positive probability density $p : \Y^n \to \R^+$ with respect to $\mu^n$, %
whose error on the sequence $y_{1:n}$ is measured by the \emph{logarithmic loss} (or \emph{negative log-likelihood})
\begin{equation}
  \label{eq:log-loss}
  \ell (p, y_{1:n})
  = - \log p (y_1, \dots, y_n)
  \, .
\end{equation}

A few comments on the definition and interpretation of the problem are in order.

First, the loss function~\eqref{eq:log-loss} has a precise and classical interpretation in terms of coding and data compression, see \eg \cite[Chapter~5]{cover2006elements},~\cite[Chapter~1]{catoni2004statistical},~\cite[Chapter~1]{gassiat2018universal} and~\cite[Chapter~13]{polyanskiy2022information}. %
Roughly speaking, if $\Y$ is finite and $\mu$ is the counting measure, there is an almost exact correspondence between probability densities $p$ on $\Y^n$ on the one hand, and on the other hand \emph{codes} $\phi : \Y^n \to \bigcup_{k \geq 1} \{0, 1\}^k$ compressing each sequence $y_{1:n}$ into a string of bits $\phi (y_{1:n})$, with a suitable form of injectivity allowing for decoding.
In addition, under this correspondence, the length %
of the code $\phi (y_{1:n})$ associated to the sequence $y_{1:n}$ coincides (up to an integer rounding error) to the logarithmic loss $- \log_2 p (y_{1}, \dots, y_n)$ in base $2$.
(The precise statement underpinning this correspondence, namely \emph{Kraft's inequality}, is
included in Section~\ref{sec:coding-minimax-regret} and 
in the aforementioned references. 
In addition, the correspondence extends from finite sets to continuous spaces $\Y$ by discretization using an increasing sequence of finite partitions.)
Because of this, in what follows we will sometimes refer to probability densities $p$ as ``codes'', and to the log-likelihood $\ell (p, y_{1:n})$ as the ``code-length''.

Second, the problem admits an equivalent %
``sequential'' formulation.
Indeed, in the formulation above, a code $p$
on $\Y^n$ must be chosen ahead of time, prior to seeing any observation.
A perhaps more appealing formulation %
is in terms of \emph{sequential prediction} (see~\cite{cesabianchi2006plg} for a thorough exposition of this topic).
Here, observations %
are revealed one at a time; at each step, one wishes to issue a probabilistic prediction of the next observation %
knowing the previous ones.
In this context, a \emph{sequential prediction strategy}, or sequential coding scheme, is a sequence 
$(p_1, \dots, p_n)$, where %
$p_i$ is a function associating a density $p_i (\cdot | y_1, \dots, y_{i-1})$ on $\Y$ to each $(y_1, \dots, y_{i-1}) \in \Y^{i-1}$.
Now, each such $p_i$ can be seen as a conditional density on $y_i$ given $y_{1}, \dots, y_{i-1}$, hence a sequential prediction strategy $(p_1, \dots, p_n)$ amounts to a family of such conditional densities, which in turns amounts to a single joint density $p$ on $\Y^n$ defined by $p (y_1, \dots, y_n) = p_1 (y_1) p_2 (y_2|y_1) \cdots p_n (y_n | y_1, \dots, y_{n-1})$.
In addition, the cumulative loss (total code-length) of the sequential prediction scheme associated to $p$ coincides with the loss~\eqref{eq:log-loss}:
\begin{equation*}
  - \sum_{i=1}^n \log p_i (y_i | y_1, \dots, y_{i-1})
  = - \log p (y_1, \dots, y_n)
  = \ell (p, y_{1:n})
  \, .
\end{equation*}
This remarkable tensorization property %
shows that
the previously defined problem is in fact equivalent to sequential prediction.

\paragraph{Statistical model and minimax regret.}

As formulated, the problem of minimizing the loss~\eqref{eq:log-loss} is ill-defined.
Indeed, the normalization constraint---that $p$ must be a probability density on $\Y^n$---implies that no density can assign a high probability to all sequences.
This expresses the fact that no code can efficiently compress all sequences.
It follows from this obstruction that one should specify on which ``structured'' sequences a small loss should be incurred, and quantify what constitutes a good performance.

A common way to achieve this goal is to resort to statistical modeling of the data sequence.
Concretely, a \emph{statistical model} $\probas$ is a set of probability densities on $\Y^n$.
Perhaps the most classical way to use a model $\probas$ is to assume that the sequence of observations $Y_{1:n}$ is random, with density $p^*$ for some (unknown) $p^* \in \probas$.
In this case, the code with smallest expected code-length is the true density $p^*$, and if $p$ is another density then $\E [ \ell (p, Y_{1:n}) - \ell (p^*, Y_{1:n}) ] = \kll{p^*}{p}$, the Kullback-Leibler divergence between $p^*$ and $p$ \cite{kullback1951information,cover2006elements} (see Section~\ref{sec:coding-minimax-regret} for more details).

An alternative approach (see~\cite{cesabianchi2006plg} for a historical account on this perspective) %
is to avoid making explicit assumptions on the data sequence $y_{1:n} \in \Y^n$, which may be arbitrary, and instead use the statistical model $\probas$ as a benchmark.
Specifically, define the \emph{regret} (sometimes also called \emph{redundancy}) of a density $q$ (not necessarily belonging to $\probas$) with respect to the model $\probas$ over the sequence $y_{1:n} \in \Y^n$ as 
\begin{equation}
  \label{eq:regret}
  \regret (q, \probas, y_{1:n})
  = \ell (q, y_{1:n}) - \inf_{p \in \probas} \ell (p, y_{1:n})
    \, .
\end{equation}
That is, the regret is the difference in code-length between the code $q$ and the best code (a posteriori) in the model $\probas$ for the sequence $y_{1:n}$.
Since no assumption is made on the sequence $y_{1:n}$, one can evaluate the performance of a distribution $q$ relative to a class $\probas$ by the worst-case (maximum) regret over all sequences.
This leads to considering the \emph{minimax regret}, which is the best regret guarantee (with respect to $\probas$) one can ensure uniformly over all sequences:
\begin{equation}
  \label{eq:minimax-regret}
  \mmr (\probas)
  = \inf_{q} \sup_{y_{1:n} \in \Y^n} \regret (q, \probas, y_{1:n})
  \, ,
\end{equation}
where the infimum is taken over all probability densities $q$ on $\Y^n$.
Clearly, the richer the model $\probas$, the harder it is to compete against it, and hence the larger $\mmr (\probas)$ is.
As such, the minimax regret measures the complexity of the model,
from the perspective of worst-case sequential prediction.

\paragraph{Gaussian setting.}

In this work, we consider sequential probability assignment in a Gaussian setting, where the observations $y_1, \dots, y_n$ are real-valued and modeled by a subset of the Gaussian sequence model.
This problem can be seen as a coding-theoretic analogue of Gaussian regression.
Specifically, for $\theta \in \R^n$, denote by $p_{\theta} : y \mapsto (2\pi)^{-n/2} e^{-\norm{y-\theta}^2/2}$ the density of the Gaussian measure $\gaussdist (\theta, I_n)$ with respect to the Lebesgue measure $\mu (\di y) = \di y$ on $\R^n$.
To any nonempty subset $A \subset \R^n$ (called model or constraint set), one can associate the corresponding subset of the Gaussian model and its corresponding regret:
\begin{equation}
  \label{eq:gaussian-model-subset}
  \probas_A
  = \big\{ p_\theta : \theta \in A \big\}
  \, ;
  \qquad
  \mmr (A)
  = \mmr (\probas_A)
  \, .
\end{equation}
With these definitions at hand, the general question we consider is the following:
\begin{quote}
  \emph{How does the complexity $\mmr (A)$ of sequential prediction depend on the geometry of the constraint set $A \subset \R^n$?}
\end{quote}
Part of the results we will obtain will be specific to the case where $A$ is a convex body, but most of the main results remain in fact valid for general (possibly non-convex) sets.
In addition, while our primary focus will be on the worst-case deterministic setting (where hardness is measured by the minimax regret $\mmr (A)$), we will also study the aforementioned statistical setting.

\paragraph{Summary of contributions.}

In order to convey the main thread of ideas in this work, 
we provide a sample of our %
results.
We refer to the corresponding sections for more complete statements and discussion.

The first result states that, in the case where $A = K$ is a convex body in $\R^n$, the minimax regret $\mmr (K)$ can be expressed in terms of its \emph{intrinsic volumes} $V_j (K)$, $j = 0, \dots, n$.
We refer to Section~\ref{sec:convex-bodies} for a definition of the intrinsic volumes of a convex body.

\begin{theorem}[see Theorem~\ref{thm:minimax-wills-volumes}]
  \label{thm:short-minimax-volumes}
  If $K \subset \R^n$ is a convex body, the minimax regret $\mmr (K)$ equals
  \begin{equation*}
    \mmr (K)
    = \log \bigg( \sum_{j=0}^n V_j %
    \big( K / \sqrt{2\pi} \big)
    \bigg)
    \, .
  \end{equation*}
\end{theorem}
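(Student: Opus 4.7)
The plan is to combine two well-known identities: Shtarkov's characterization of the minimax regret under log-loss, and the integral representation of the Wills functional from convex geometry.

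First I would invoke Shtarkov's theorem (presumably stated in Section~\ref{sec:coding-minimax-regret}), which for a model $\probas$ whose maximum likelihood normalizer is finite, asserts that the minimax regret equals the log of the normalized maximum likelihood integral:
\begin{equation*}
  \mmr(\probas) = \log \int_{\Y^n} \sup_{p \in \probas} p(y)\, d\mu(y),
\end{equation*}
and the optimal predictor is the corresponding normalized maximum likelihood (Shtarkov) distribution. For $\probas_K = \{p_\theta : \theta \in K\}$ with $p_\theta(y) = (2\pi)^{-n/2} e^{-\|y-\theta\|^2/2}$, a direct calculation gives
\begin{equation*}
  \sup_{\theta \in K} p_\theta(y) = (2\pi)^{-n/2} \exp\!\bigl(-\tfrac{1}{2} d(y,K)^2\bigr),
\end{equation*}
where $d(y,K) = \inf_{\theta \in K} \|y - \theta\|$. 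Since $K$ is a convex body (hence compact), the supremum is attained at the metric projection, so the function inside the integral is measurable and integrable.

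Next I would bring in the Wills functional. For a convex body $L \subset \R^n$, one has the classical identity
\begin{equation*}
  \int_{\R^n} e^{-\pi d(y,L)^2}\, dy = \sum_{j=0}^n V_j(L),
\end{equation*}
which follows from the Steiner formula applied to the tubes $\{y : d(y,L) \leq t\}$ and an integration in $t$ against the Gaussian weight, together with the fact that inside $L$ the distance vanishes and contributes $V_n(L)$ to the sum. This is exactly the Hadwiger/Wills formula, and I would cite it from convex geometry (e.g.\@ Schneider's book).

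Finally I would match the two expressions by the rescaling $y = \sqrt{2\pi}\, z$. Under this substitution, $d(y,K) = \sqrt{2\pi}\, d(z, K/\sqrt{2\pi})$ and $dy = (2\pi)^{n/2} dz$, so
\begin{equation*}
  \int_{\R^n} (2\pi)^{-n/2} \exp\!\bigl(-\tfrac{1}{2} d(y,K)^2\bigr)\, dy = \int_{\R^n} e^{-\pi d(z, K/\sqrt{2\pi})^2}\, dz = \sum_{j=0}^n V_j(K/\sqrt{2\pi}),
\end{equation*}
and the claim follows by taking the logarithm. I do not expect any real obstacle: the two ingredients are both standard, and the factor $\sqrt{2\pi}$ in the statement is precisely the bookkeeping that converts between the Gaussian normalization $e^{-t^2/2}$ used in statistics and the normalization $e^{-\pi t^2}$ natural for the Wills functional. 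The only thing to verify is the applicability of Shtarkov's theorem, which reduces to finiteness of the above integral; this is immediate because $d(y,K) \geq \|y\| - \diam(K)/2 - \|c\|$ for a centre $c$ of $K$, giving Gaussian decay at infinity.
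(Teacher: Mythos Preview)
Your proposal is correct and follows essentially the same route as the paper: Shtarkov's integral representation of the minimax regret, followed by Hadwiger's integral formula for the Wills functional, with the $\sqrt{2\pi}$ rescaling handling the passage between the two Gaussian normalizations. The only cosmetic difference is that the paper re-derives Hadwiger's identity from Steiner's formula (via the layer-cake representation) rather than citing it, and carries out the computation directly in the $e^{-t^2/2}$ normalization instead of your change of variables; neither affects the substance of the argument.
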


This result connects universal coding to classical convex geometry~\cite{schneider2013convex}.
In particular, it shows that for convex bodies, the minimax regret can be understood in terms of the volume behavior of its random projections.
This connection is however specific to convex bodies, as it involves quantities such as intrinsic volumes, which are naturally defined on convex sets.

A rather different perspective, which we describe next, extends to general nonconvex sets.
An important property of the functional $\mmr$ is that it is a ``metric'' quantity, which can be understood %
in terms of distances.
This is suggested by the following comparison inequality, which is analogous to the Slepian-Sudakov-Fernique comparison inequality for suprema of Gaussian processes (see Section~\ref{sec:comp-theor-wills} for additional discussion).
Here, $\norm{\cdot}$ denotes the Euclidean norm.

\begin{theorem}[see Theorem~\ref{thm:comparison-wills}]
  \label{thm:short-comparison}
  Let $A, B \subset \R^n$ be two nonempty sets.
  If there exists a function $\varphi : A \to B$ such that
  $\varphi (A) = B$ and $\norm{\varphi (x) - \varphi (y)} \leq \norm{x - y}$ for every $x, y \in A$,
  then $\mmr (A) \leq \mmr (B)$.
\end{theorem}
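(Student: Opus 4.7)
The plan is to express $\mmr(A)$ as the logarithm of a Gaussian Laplace transform and then run a Gaussian interpolation between the processes indexed by $A$ and $B$. The Shtarkov identity, applied to the subset $\probas_A$ of the Gaussian model, yields $e^{\mmr(A)} = \int_{\R^n} \sup_{\theta \in A} p_\theta(y)\, \di y$, which can be rewritten as
\[
W(A) := e^{\mmr(A)} = (2\pi)^{-n/2} \int_{\R^n} e^{-d(y, A)^2/2} \di y = \E_G \sup_{\theta \in A} \exp\!\big(\innerp{G}{\theta} - \norm{\theta}^2/2\big),
\]
for $G \sim \gaussdist(0, \Id_n)$. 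Since $\varphi(A) \subset B$, the integral form makes it clear that $W(\varphi(A)) \leq W(B)$, so it suffices to prove $W(A) \leq W(\varphi(A))$. By monotone convergence on an increasing sequence of finite subsets, one further reduces to the case of a finite set $A = \{\theta_1, \ldots, \theta_N\}$.

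For such finite $A$, I introduce an interpolation between the two processes. Let $G_0, G_1 \sim \gaussdist(0, \Id_n)$ be independent, and for $t \in [0, 1]$ and $\theta \in A$ set
\[
U^t_\theta = \sqrt{1-t}\, \innerp{G_0}{\theta} + \sqrt{t}\, \innerp{G_1}{\varphi(\theta)} - \frac{(1-t)\norm{\theta}^2 + t\,\norm{\varphi(\theta)}^2}{2}.
\]
The Gaussian vector $(U^t_\theta)_{\theta \in A}$ has covariance $\Sigma^t_{\theta\theta'} = (1-t)\innerp{\theta}{\theta'} + t\innerp{\varphi(\theta)}{\varphi(\theta')}$ and mean $\mu^t_\theta = -\Sigma^t_{\theta\theta}/2$, so each $e^{U^t_\theta}$ has expectation one. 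The function $F(t) := \E \max_\theta e^{U^t_\theta}$ equals $W(A)$ at $t = 0$ and $W(\varphi(A))$ at $t = 1$, and it remains to show that $F$ is non-decreasing on $[0, 1]$.

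To deal with the non-smoothness of the maximum, I would smooth it using the soft-max $g_\beta(u) = (\sum_\theta e^{\beta u_\theta})^{1/\beta}$ for $\beta > 1$, and consider $F_\beta(t) := \E\, g_\beta(U^t)$, which satisfies $F(t) \leq F_\beta(t) \leq N^{1/\beta} F(t)$. Computing the softmax derivatives $\partial_\theta g_\beta = p_\theta g_\beta$ and $\partial_\theta \partial_{\theta'} g_\beta = g_\beta [\beta p_\theta \indic{\theta = \theta'} + (1-\beta) p_\theta p_{\theta'}]$, where $p_\theta = e^{\beta U^t_\theta}/\sum_{\theta''} e^{\beta U^t_{\theta''}}$, and plugging these into the standard Gaussian interpolation formula
\[
F_\beta'(t) = \sum_\theta \dot\mu^t_\theta \, \E\, \partial_\theta g_\beta(U^t) + \frac{1}{2} \sum_{\theta, \theta'} \dot\Sigma^t_{\theta \theta'} \, \E\, \partial_\theta \partial_{\theta'} g_\beta(U^t),
\]
the mean-variance relation $\dot\mu^t_\theta = -\dot\Sigma^t_{\theta\theta}/2$ forces the single-index contributions to cancel, leaving after symmetrization
\[
F_\beta'(t) = \frac{\beta - 1}{4} \, \E\!\bigg[g_\beta(U^t) \sum_{\theta, \theta'} p_\theta p_{\theta'} \big(\norm{\varphi(\theta) - \varphi(\theta')}^2 - \norm{\theta - \theta'}^2\big)\bigg],
\]
which is non-negative by the expansiveness hypothesis on $\varphi$. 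Passing to the limit $\beta \to \infty$ yields $F(0) \leq F(1)$, hence $W(A) \leq W(\varphi(A)) \leq W(B)$. The main obstacle is the algebraic cancellation in the derivative formula: the quadratic drift $-\norm{\theta}^2/2$---which is precisely what makes each $e^{U^t_\theta}$ mean-one---produces the exact cancellations needed for $F_\beta'(t)$ to reduce to the non-negative quadratic form above, whose coefficients are $\frac{d}{dt} \E(U^t_\theta - U^t_{\theta'})^2 = \norm{\varphi(\theta) - \varphi(\theta')}^2 - \norm{\theta - \theta'}^2$. This is the log-Laplace-transform analogue of the classical Sudakov-Fernique comparison for the expected supremum of Gaussian processes.
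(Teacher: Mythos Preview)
Your proof is correct and follows essentially the same approach as the paper: reduce to finite sets, use the Gaussian representation of $e^{\mmr(A)}$, interpolate between the two Gaussian processes, smooth the maximum with the soft-max $g_\beta$, and exploit the cancellation coming from the drift $-\Sigma^t_{\theta\theta}/2$ to reduce $F_\beta'(t)$ to a quadratic form in the increments $\norm{\varphi(\theta)-\varphi(\theta')}^2 - \norm{\theta-\theta'}^2$. The only cosmetic difference is that the paper separates the Gaussian and variance-correction interpolations into two parameters $F_\beta(u,v)$ and then differentiates along the diagonal, whereas you bundle both into a single parameter $t$; the resulting derivative formula is identical.
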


Theorem~\ref{thm:short-comparison} suggests that the minimax regret $\mmr (A)$ can be understood in terms of distances alone, though it does not provide an explicit metric characterization.
An explicit characterization can be obtained, relating (up to universal constant factors) the minimax regret $\mmr (A)$ to local Gaussian widths and global covering numbers.
Below, for $r > 0$ we denote by $N (A, r)$ the smallest number of closed balls of radius $r$ that cover $A$, by $B (\theta, r)$ the closed Euclidean ball of radius $r$ centered at $\theta \in \R^n$, and by $w (A) = \E \sup_{\theta \in A} \sum_{i=1}^n \theta_i X_i$ the Gaussian width of $A$, where $X_1, \dots, X_n$ are independent standard Gaussian variables.

\begin{theorem}[see Theorem~\ref{thm:minimax-regret-metric}]
  \label{thm:short-metric-characterization}
  For any subset $A \subset \R^n$, one has
  \begin{equation*}
    \mmr (A)
    \asymp \inf_{r \geq 0} \Big\{ \sup_{\theta \in A} w \big( A \cap B (\theta, r) \big) + \log N (A, r) \Big\}
    \, .
  \end{equation*}  
\end{theorem}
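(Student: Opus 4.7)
The strategy rests on the Shtarkov normalized-maximum-likelihood identity, which in the Gaussian setting reads
\begin{equation*}
\mmr(A) = \log \int_{\R^n} (2\pi)^{-n/2} e^{-\dist(y, A)^2/2} \di y = \log \E \exp\Big\{\sup_{\theta \in A}\big[\innerp{Z}{\theta} - \norm{\theta}^2/2\big]\Big\}
\end{equation*}
for $Z \sim \gaussdist(0, I_n)$; I assume this identity has been derived earlier in the paper. The right-hand side of the theorem is a log-Laplace analogue of classical ``majorizing measure''-type bounds for Gaussian suprema, and I would prove matching upper and lower bounds via a single-scale covering/packing argument.

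For the upper bound, fix $r > 0$, let $\theta_1, \dots, \theta_N$ be a minimal $r$-net of $A$ with $N = N(A, r)$, and set $A_i' = (A \cap B(\theta_i, r)) - \theta_i \subset B(0, r)$. The key observation is that $\phi(\theta) := \innerp{Z}{\theta} - \norm{\theta}^2/2$ is concave, with exact Taylor expansion $\phi(\theta_i + \eta) = \phi(\theta_i) + \innerp{Z - \theta_i}{\eta} - \norm{\eta}^2/2$. This enables a clean one-step chaining: using $\sup_{\theta \in A}\phi(\theta) \leq \max_i \sup_{\eta \in A_i'}\phi(\theta_i + \eta)$, a union bound, and an exponential tilting (Cameron--Martin shift) by $e^{\phi(\theta_i)}$---which has unit expectation and makes $Z - \theta_i$ standard Gaussian under the tilted measure---each summand becomes exactly $e^{\mmr(A_i')}$, yielding $\mmr(A) \leq \log N(A, r) + \max_i \mmr(A_i')$. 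Since $A_i' \subset B(0, r)$, dropping the $-\norm{\eta}^2/2$ term inside the supremum and invoking Gaussian Lipschitz concentration of $\sup_{\eta \in A_i'}\innerp{Z}{\eta}$ (Lipschitz constant $\leq r$) gives $\mmr(A_i') \leq w(A_i') + r^2/2$, so that
\begin{equation*}
\mmr(A) \leq \log N(A, r) + \sup_{\theta \in A} w(A \cap B(\theta, r)) + r^2/2.
\end{equation*}

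For the lower bound, I would combine two complementary arguments. First, by monotonicity of $\mmr$ under inclusion, translation invariance of $\mmr$ and $w$, and Jensen's inequality on the log-Laplace expression,
\begin{equation*}
\mmr(A) \geq \mmr(A \cap B(\theta, r)) \geq w(A \cap B(\theta, r)) - r^2/2
\end{equation*}
for every $\theta \in A$ and $r > 0$; taking the supremum in $\theta$ recovers the local-width contribution. Second, given a maximal $r$-separated subset of $A$ of cardinality at least $N(A, r)$, the comparison principle (Theorem~\ref{thm:short-comparison}) lets one embed a scaled orthonormal configuration $\{r e_i / \sqrt{2}\}$ (pairwise distances $= r$) non-contractively into this subset, so $\mmr(A) \geq \mmr(\{r e_i/\sqrt 2\})$; the latter is estimated by summing Voronoi contributions to the Shtarkov integral of the shifted Gaussians, giving $\mmr(A) \gtrsim \log N(A, r) - O(1)$ provided $r \gtrsim \sqrt{\log N(A, r)}$. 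For smaller $r$, the local-width lower bound takes over.

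The main obstacle is reconciling the extra $r^2/2$ in the upper bound with the advertised expression, and aligning the two lower-bound ingredients at a common scale. At the optimal $r^*$ of the right-hand side, one must show $r^{*2} \lesssim \sup_\theta w(A \cap B(\theta, r^*)) + \log N(A, r^*)$; this is expected to follow from a Sudakov-type minoration, since either the entropy $\log N(A, r^*)$ already dominates $r^{*2}$ (via $\log N(A, r) \gtrsim r^{-2} w(\cdot)^2$), or the local width at scale $r^*$ is itself of order $r^*\sqrt n$ and absorbs $r^{*2}$ (relevant when $r^* \lesssim \sqrt n$). Carrying out this case analysis precisely, and matching the threshold $r \gtrsim \sqrt{\log N(A, r)}$ required by the packing lower bound with the scale chosen for the upper bound, is the technical heart of the argument.
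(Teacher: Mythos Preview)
Your upper- and lower-bound building blocks are sound, but the reconciliation you flag as ``the technical heart'' has a real gap, and the paper sidesteps it by a different device. After the mixture step $\mmr(A) \leq \log N(A,r) + \max_i \mmr(A_i')$---which is equivalent to your Cameron--Martin tilting---the paper bounds $\mmr(A_i')$ not via Gaussian Lipschitz concentration but via McMullen's inequality $\mmr(B) \leq w(B)$, valid for arbitrary $B$ (Proposition~\ref{prop:finiteness}, a consequence of Alexandrov--Fenchel or Pr\'ekopa--Leindler). This gives $\mmr(A) \leq \inf_r\{w_A(r)+\log N(A,r)\}$ directly, with no residual $r^2/2$. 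Your concentration bound, which drops $-\|\eta\|^2/2$ and exploits the $r$-Lipschitz property of the supremum, is strictly weaker by exactly this term.

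Your proposed fix for the extra $r^2/2$ is incorrect as written. You invoke ``$\log N(A,r)\gtrsim r^{-2}w(\cdot)^2$'', but Sudakov's inequality reads $w(A)\gtrsim r\sqrt{\log N(A,r)}$, i.e.\ $\log N(A,r)\lesssim (w(A)/r)^2$---the opposite direction. The alternative case ``$w_A(r^*)\approx r^*\sqrt n$'' is also unjustified: $w_A(r)\leq r\sqrt n$ is only an upper bound, with no reason to be tight at the optimal $r^*$, and the two cases are not exhaustive. A correct repair does exist: if the minimizer $r_0$ of $f=w_A+\log N(A,\cdot)$ satisfies $r_0^2>M:=\inf f$, take $r_1=\sqrt M$, use $w_A(r_1)\leq w_A(r_0)\leq M$ by monotonicity, and control $\log N(A,r_1)$ by first covering $A$ at scale $r_0$ and then applying Sudakov to each \emph{local} piece $A\cap B(\theta,r_0)$, whose Gaussian width is at most $M$; this yields $\log N(A,r_1)\leq \log N(A,r_0)+C(M/r_1)^2\lesssim M$ and hence $f(r_1)+r_1^2\lesssim M$. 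But this is more delicate than your sketch, and McMullen makes it unnecessary.

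On the lower bound, your embedding of a scaled orthogonal configuration via the comparison theorem is a legitimate and rather clean alternative to the paper's route, which instead lower-bounds the minimax \emph{redundancy} by a mutual-information argument (Section~\ref{sec:lower-bound-redund}) and then uses $\mmr\geq\mred$. Both yield $\mmr(A)\gtrsim\wt r(A)^2$; the paper's version has the side benefit of simultaneously characterizing the redundancy.
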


Here and throughout this article, given two quantities $f$ and $g$, we use the notation $f \asymp g$ to mean that there exist universal constants $c_1, c_2 > 0$ such that $c_1 f \leq g \leq c_2 f$.

Using Talagrand and Fernique's Majorizing Measure theorem~\cite{talagrand1987regularity,talagrand2021upper}, a characterization of $\mmr (A)$ involving only distances can be deduced from Theorem~\ref{thm:short-metric-characterization} (Corollary~\ref{cor:majorizing-measure-regret}).
This characterization features a certain truncated version of %
the
generic chaining functional.

Theorems~\ref{thm:short-minimax-volumes} and~\ref{thm:short-metric-characterization} provide two different characterizations of the minimax regret over a convex body:
the first involving quantities from classical convexity (intrinsic volumes), the second in terms of metric quantities that appear in high-dimensional geometry, probability and statistics.
Relating these two characterizations leads to the following result, 
which provides precise quantitative estimates on the intrinsic volume sequence of a convex body.

\begin{corollary}[see Corollary~\ref{cor:metric-intrinsic-volumes}]
  \label{cor:short-isomorphic-volumes}
  For any convex body $K \subset \R^n$ and $\lambda \geq 0$, one has
  \begin{equation*}
    \log \bigg( \sum_{j=0}^n V_j (K) \lambda^j \bigg)
    \asymp \inf_{r \geq 0} \Big\{ \lambda \sup_{\theta \in K} w \big( K \cap B (\theta, r) \big) + \log N (K, r) \Big\}
    \, .
  \end{equation*}
\end{corollary}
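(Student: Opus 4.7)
The strategy is to combine the two previously stated characterizations of $\mmr$, namely Theorem~\ref{thm:short-minimax-volumes} (which identifies $\mmr$ with a log-sum of intrinsic volumes of a rescaled body) and Theorem~\ref{thm:short-metric-characterization} (which identifies $\mmr$ with a metric functional involving local Gaussian widths and global covering numbers), applied not to $K$ itself but to the rescaled body $K_\lambda := \lambda\sqrt{2\pi}\,K$. The parameter $\lambda$ will then be absorbed into the scale factor, so the two sides of the corollary reduce to the two sides of Theorem~\ref{thm:short-metric-characterization} applied to $K_\lambda$, up to the harmless constant $\sqrt{2\pi}$.

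For the left-hand side, I would apply Theorem~\ref{thm:short-minimax-volumes} to $K_\lambda$ and use the fact that intrinsic volumes are $j$-homogeneous, $V_j(tC) = t^j V_j(C)$. This yields
\[
  \mmr(K_\lambda)
  = \log\bigg(\sum_{j=0}^n V_j\bigl(K_\lambda/\sqrt{2\pi}\bigr)\bigg)
  = \log\bigg(\sum_{j=0}^n V_j(\lambda K)\bigg)
  = \log\bigg(\sum_{j=0}^n \lambda^j V_j(K)\bigg),
\]
so the left-hand side of the corollary is exactly $\mmr(K_\lambda)$. (The case $\lambda = 0$ is trivial, since both sides equal $0 = \log V_0(K)$.)

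For the right-hand side, I would apply Theorem~\ref{thm:short-metric-characterization} to $A = K_\lambda$. The Gaussian width is $1$-homogeneous, $w(tA) = t\,w(A)$, and covering numbers transform as $N(tA, r) = N(A, r/t)$; moreover $K_\lambda \cap B(\lambda\sqrt{2\pi}\,\theta, r) = \lambda\sqrt{2\pi}\,(K \cap B(\theta, r/(\lambda\sqrt{2\pi})))$ for any $\theta \in K$. Substituting $s = r/(\lambda\sqrt{2\pi})$ in the resulting infimum therefore gives
\[
  \mmr(K_\lambda)
  \asymp \inf_{s \geq 0} \Big\{ \lambda\sqrt{2\pi}\,\sup_{\theta \in K} w\bigl(K \cap B(\theta, s)\bigr) + \log N(K, s) \Big\}.
\]
The only remaining point is to check that the multiplicative constant $\sqrt{2\pi}$ in front of $\lambda$ can be discarded within $\asymp$: for any $c>0$ and nonnegative functions $f, g$, one has $\inf_r\{c f(r) + g(r)\} \asymp \inf_r\{f(r) + g(r)\}$ (both directions follow by bounding $\max(c,1/c)$-wise), which finishes the proof.

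There is no real obstacle here; this corollary is essentially a scaling artifact of combining the two main theorems, and the only bookkeeping required is to confirm that the factor $\sqrt{2\pi}$ arising from Theorem~\ref{thm:short-minimax-volumes} is absorbed into the universal constants hidden in $\asymp$.
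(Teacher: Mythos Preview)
Your proposal is correct and follows essentially the same approach as the paper: apply Theorem~\ref{thm:short-minimax-volumes} and Theorem~\ref{thm:short-metric-characterization} to a dilate of $K$, then rescale the infimum and absorb the resulting $\sqrt{2\pi}$ into the $\asymp$ constants. The only cosmetic difference is that the paper (in its proof of Corollary~\ref{cor:metric-intrinsic-volumes}, stated for $\lambda=1$) uses two slightly different scalings, $K$ for the lower bound and $\sqrt{2\pi}\,K$ for the upper bound, whereas you apply both theorems to the single body $\lambda\sqrt{2\pi}\,K$ to get exact equality on the intrinsic-volume side; this is a harmless and arguably cleaner variant of the same argument.
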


Let us mention that the sum of intrinsic volumes of a convex body is known in convex geometry as the Wills functional~\cite{wills1973gitterpunktanzahl,hadwiger1975wills}. %
Corollary~\ref{cor:short-isomorphic-volumes} then corresponds to an ``isomorphic'' characterization of the Wills functional, as discussed in Section~\ref{sec:isom-char-wills}.

\paragraph{Organization.}

This paper is organized as follows.
After some additional background on sequential probability assignment and on intrinsic volumes, Section~\ref{sec:minimax-wills} is devoted to the relation between the two in the case of convex bodies.
Section~\ref{sec:infin-dimens-vers} is concerned with a qualitative study of the functional $\mmr$ in the general nonconvex case, including its basic properties, infinite-dimensional version and its finiteness, the link with Gaussian width as well as small and large-scale asymptotics.
Then, Section~\ref{sec:comp-metric-estimates} is devoted the quantitative study of $\mmr$ from a metric perspective; as part of the analysis, we also determine the hardness of sequential prediction in the ``statistical'' or ``average-case'' setting.
The quantitative characterizations of Section~\ref{sec:comp-metric-estimates} hold up to (universal) constant factors.
In Section~\ref{sec:conc-prop-noise}, we complement them by considering some exact structural properties of the regret, involving its interaction with the additive structure of the space.
This allows in particular to establish a form of ``strong monotonicity'' with respect to the noise level.
In Section~\ref{sec:non-asympt-param}, we relate and contrast our findings to classical asymptotic results from the literature on universal coding and minimum description length.
Finally, we consider some concrete examples in Section~\ref{sec:examples}, including the case of ellipsoids.

\paragraph{Related work.}

Sequential prediction is a topic at the intersection of Game theory, Statistics, Information theory and Learning theory.
We refer to the classic textbook~\cite{cesabianchi2006plg} for an exposition of the many facets of the subject.
We point out some results in sequential prediction which, although not directly related to the problem at hand, %
also provide precise information in their considered settings.
First, in the context of sequential prediction of a binary sequence with absolute error loss (see~\cite[Chapter~8]{cesabianchi2006plg}), it is shown by Cesa-Bianchi and Lugosi~\cite{cesabianchi1999prediction} that the minimax regret is equal to the Rademacher average of the class of experts.
In addition, a precise description of achievable error bounds for prediction of binary sequences has been obtained by Cover~\cite{cover1967behavior}.
In the context of prediction of bounded sequences under square loss, Rakhlin and Sridharan~\cite{rakhlin2014online_nonparametric} use certain sequential centered offset Rademacher complexities to sharply control the minimax regret.

A canonical instance of sequential prediction is sequential probability assignment,
see~\cite[Chapter~9]{cesabianchi2006plg} for an introduction to this problem.
As discussed previously, it admits a precise interpretation in terms of universal coding~\cite{cover2006elements,catoni2004statistical,gassiat2018universal,polyanskiy2022information} in information theory.
The seminal work of Shtarkov~\cite{shtarkov1987universal} identifies the general properties of the problem,
and provides an integral representation of the minimax regret which will be reviewed in Section~\ref{sec:coding-minimax-regret}.
In addition, a classical result of Rissanen~\cite{rissanen1996fisher} provides an asymptotic expansion of the regret for a fixed finite-dimensional model in the limit of large ``sample size'',
whose relation to the present work we discuss in more detail in Section~\ref{sec:non-asympt-param}.
Finally, Opper and Haussler~\cite{opper1999worst} obtain upper bounds on the minimax regret in terms of certain covering numbers of the class;
these bounds have been refined in a series of works~\cite{cesabianchi2001logarithmic,rakhlin2015sequential,grunwald2019tight,bilodeau2020tight}.
These upper bounds do not apply to the Gaussian setting we consider, but they are
nonetheless related to the upper bound in Corollary~\ref{cor:majorizing-measure-regret}.
However, they do not come with matching lower bounds.

Intrinsic volumes are equivalent (up to multiplicative constants) to quermassintegrals of convex bodies, themselves a special case of mixed volumes.
These quantities are at the core of classical convex geometry, the so-called ``Brunn-Minkowski theory''; we refer to the classic book of Schneider~\cite{schneider2013convex} for a thorough reference on this topic.
The sum of intrinsic volumes, known as \emph{Wills functional}, has been considered by Wills~\cite{wills1973gitterpunktanzahl} in the context of lattice point enumeration.
It has been used in several contexts, including the study of Gaussian processes~\cite{vitale1996wills} and in establishing concentration of the sequence of intrinsic volumes~\cite{lotz2020concentration}.
Several notable inequalities on this functional have been obtained
by Hadwiger~\cite{hadwiger1975wills}, McMullen~\cite{mcmullen1991inequalities}, and more recently Alonso-Gutiérrez, Hernández Cifre, and Yepes Nicolás~\cite{alonso2021further}.
These references are highly relevant to the present work, and we will discuss some of their results in the following sections.
We contribute to the study of the Wills functional in two primary ways, first by %
connecting it to sequential coding, and second by
characterizing it (under a suitable normalization) up to universal constant factors.

A singular aspect of Theorem~\ref{thm:short-minimax-volumes} is that it features intrinsic volumes, which are %
uncommon complexity measures in high-dimensional statistics (apart from the first intrinsic volume, namely the Gaussian mean width).
A notable exception is the important %
work~\cite{amelunxen2014living}, which characterizes phase transitions in random convex programs in terms of conic (spherical) intrinsic volumes.
Despite the appearance of certain intrinsic volumes in both~\cite{amelunxen2014living} and the present work, the approaches are different.
First, the considered problems differ:
\cite{amelunxen2014living} deals with linear inverse problems with random measurements
(see~\cite{candes2006near,rudelson2008sparse,donoho2009counting,stojnic2009various,chandrasekaran2012convex,foucart2013compressive} and references therein for more information%
), while we consider universal coding.
Second, \cite{amelunxen2014living} features \emph{conic} intrinsic volumes (associated to certain sub-differential cones on the edge of the convex domain), in contrast with the \emph{Euclidean} intrinsic volumes that appear in our context.
These quantities are of a different character: conic intrinsic volumes of sub-differential cones are scale-free and reflect the differential and infinitesimal structure of the boundary of the domain, while Euclidean intrinsic volumes are scale-sensitive global complexity measures.
This qualitative difference 
reflects the fact that~\cite{amelunxen2014living} consider either noiseless exact recovery or noisy recovery in the low-noise limit, while in our context the noise level is fixed.

\paragraph{Notation.}

Throughout this article, the standard scalar product on the Euclidean space $\R^n$ or on the sequence space $\ell^2 = \{ (x_i)_{i \geq 1} \in \R^{\N^*} : \sum_{i \geq 1} x_i^2 < +\infty \}$ is denoted by $\langle \cdot, \cdot \rangle$; we also set $\norm{x} = \langle x, x\rangle^{1/2}$ to be the associated $\ell^2$ norm.
We denote by $B_2^n = \{ x \in \R^n : \norm{x} \leq 1 \}$ the closed unit $\ell^2$ ball in $\R^n$, and let $B_2$ be either $B_2^n$ or the unit ball in $\ell^2$ depending on the ambient space; likewise, for $\theta$ in $\R^n$ or in $\ell^2$ and for $r \geq 0$, we let $B (\theta, r) = \{ x : \norm{x-\theta} \leq r \} = \theta + r B_2$ denote the closed ball of radius $r$ centered at $\theta$.
The $n \times n$ identity matrix is denoted $I_n \in \R^{n \times n}$.
For a measurable subset $A \subset \R^n$, we let $\vol_n (A)$ denote its volume, namely its $n$-dimensional Lebesgue measure.
In addition, the diameter of $A \subset \R^n$ is $\diam (A) = \sup_{x, y \in A} \norm{x - y}$, while the distance of $x \in \R^n$ to $A$ is denoted by $\dist (x, A) = \inf_{\theta \in A} \norm{\theta - x}$.
Given two sets $A, B \subset \R^n$, their \emph{Minkowski sum} is the set $A + B = \{ x + y : x \in A, y \in B \}$; also, $A-B = A+ (-B)$.
In particular, for $r \geq 0$ the \emph{parallel set} of $A$ with radius $r$ is $\{ x \in \R^n : \dist (x, A) \leq r \} = \overline{A} + r B_2$, where $\ol A$ is the closure of $A$.

Additional notation and definitions
associated with convex bodies are introduced in Section~\ref{sec:convex-bodies} and throughout the text.
We use the letters $A, B$ to refer to general (possibly nonconvex) subsets of $\R^n$ or $\ell^2$, and the letters $K, L$ for convex sets (which are often also assumed to be compact).
General subsets $A, B$ will typically be assumed nonempty, though this is not always stated explicitly for conciseness.
Finally, for two functions $f, g \geq 0$, the notation $f \lesssim g$ means that there exists a constant $C > 0$ such that $f \leq C g$ everywhere, while $f \asymp g$ means that $f \lesssim g$ and $g \lesssim f$.
This notation will only be used when the underlying constants are universal.

\section{Minimax regret,
  intrinsic volumes and Wills functional}
\label{sec:minimax-wills}

In this section, we provide additional background on sequential probability assignment (Section~\ref{sec:coding-minimax-regret}) and on %
intrinsic volumes (Section~\ref{sec:convex-bodies}).
The link between
the two is then %
established in Section~\ref{sec:regret-intrinsic}.

\subsection{Sequential probability assignment, coding, and minimax regret}
\label{sec:coding-minimax-regret}

We start by briefly describing the statement underpinning the connection between sequential probability assignment and coding.
Assume that the space $\Y$ of observations is countable and that $\mu$ is the counting measure on $\Y$.
A \emph{prefix code} on $\Y^n$ is a function $\phi : \Y^n \to \bigcup_{k \geq 1} \{ 0, 1\}^k$ such that for every pair of sequences $y, y' \in \Y^n$, the associated encoding $\phi (y)$ (a string of bits) is not a prefix of $\phi (y')$.
(The prefix property is useful to decode concatenated sequences,
and %
relaxing it to the minimal %
assumption of injectivity of $\phi$ does not lead to sizeable gains in code-length.%
)
The associated code-length function is $c_\phi (y) = |\phi (y)|$, the length of the string $\phi(y)$.
Then, Kraft's inequality (\eg, \cite[Theorem~5.2.2 p.~109]{cover2006elements}) states that a function $c : \Y^n \to \N^*$ is the code-length function %
of some prefix code %
if and only if
\begin{equation}
  \label{eq:kraft}
  \sum_{y \in \Y^n} 2^{-c(y)}
  \leq 1
  \, .
\end{equation}
Hence, if $\phi$ is a prefix code on $\Y^n$, the positive
probability density $p_\phi$ on $\Y^n$ defined by 
\begin{equation*}
  p_\phi (y)
  = 2^{- |\phi (y)|} \Big/ \Big(\sum_{y' \in \Y^n} 2^{- |\phi (y')|} \Big) 
\end{equation*}
satisfies $- \log_2 p_\phi (y) \leq | \phi (y) |$ for any sequence $y \in \Y^n$.
Conversely, if $q$ is a positive probability density on $\Y^n$, the function $c : \Y^n \to \N^*$ defined by $c (y) = \ceil{- \log_2 q (y)}$ satisfies $\sum_{y} 2^{-c (y)} \leq \sum_{y} q (y) = 1$, so by Kraft's inequality there exists a prefix code $\phi_q$ such that $|\phi_q (y)| = c (y) = \ceil{- \log_2 q (y)} \leq - \log_2 q (y) + 1$ for any $y \in \Y^n$.
Hence, up to a rounding error (which plays a negligible role for long codes, for instance if the sample size $n$ is large) and a $\log 2$ factor, probability distributions correspond to codes,
and the log-loss~\eqref{eq:log-loss} can be interpreted as the associated code-length for the sequence $y_{1:n}$.
General observation spaces $\Y$ may be approximated by countable ones using (a sequence of increasingly fine) countable partitions; for instance, the real line $\Y = \R$ may be discretized by the partitions $\{ [l/2^k, (l+1)/2^k ) : l \in \Z \}$ for $k \geq 1$.

\paragraph{Minimax redundancy and regret.}

This point now clarified, we come back to the %
formulation described in the introduction.
Let $\Y$ be a measurable observation space endowed with a base measure $\mu$,
and $\probas$ a \emph{model} on $\Y^n$, namely a set of probability densities with respect to $\mu^n$.

The two objectives we consider %
correspond to some form of relative accuracy (in prediction or compression) with respect to the model $\probas$.
In the terminology ``universal coding'', universality refers to the fact that
a good accuracy relative to \emph{all} densities $p \in \probas$ is sought.

In the statistical %
setting, the data sequence $Y_{1:n}$ is assumed to be random with density $p \in \probas$, where $p$ is unknown.
In this case, for any other density $q$, the difference in average codelength is %
\begin{equation}
  \label{eq:excess-risk-kl}
  \E \big[ \ell (q, Y_{1:n}) - \ell (p, Y_{1:n}) \big]
  = \int_{\Y^n} \log \Big( \frac{p (y)}{q (y)} \Big) p (y) \mu^n (\di y)
  = \kll{p}{q}
  \geq 0 \,,
\end{equation}
where $\kll{p}{q} = \int p \log (p/q) \di \mu^n$ is the \emph{Kullback-Leibler divergence} or \emph{relative entropy}~\cite{kullback1951information,cover2006elements}.
This additional expected codelength %
is sometimes called \emph{redundancy}. It
does not depend on the choice of dominating measure $\mu$, but only on the distributions $P = p\cdot \mu$ and $Q = q \cdot \mu$.
Recalling that the distribution $p \in \probas$ is unknown, a natural measure of the complexity of the problem in the statistical setting is the \emph{minimax redundancy}
\begin{equation}
  \label{eq:def-redundancy}
  \mred (\probas)
  = \inf_{q} \sup_{p \in \probas} \E \big[ \ell (q, Y_{1:n}) - \ell (p, Y_{1:n}) \big]
  = \inf_{q} \sup_{p \in \probas} \kll{p}{q}
  \, ,
\end{equation}
where the infimum is with respect to all densities $q$ (not necessarily belonging to $\probas$).
This quantity is intrinsic, in the sense that it only depends on the family of distributions $\{ p \cdot \mu^n : p \in \probas \}$ on $\Y^n$ and not on the dominating measure $\mu^n$.

As mentioned in the introduction, an alternative deterministic framework~\cite{shtarkov1987universal,cesabianchi2006plg} is obtained by taking the sequence $y_{1:n} \in \Y^n$ to be arbitrary, and comparing the codelength of a density $q$ to that of the best density in the model $\probas$ for any sequence.
This leads to the notion of regret~\eqref{eq:regret}, and to
the complexity measure of the problem in the deterministic setting,
the \emph{minimax regret} $\mmr (\probas)$ in~\eqref{eq:minimax-regret}, which writes:
\begin{equation}
  \label{eq:regret-ratio}
  \mmr (\probas)
  = \inf_{q} \sup_{y \in \Y^n} \log \Big( \frac{\sup_{p \in \probas} p (y)}{q (y)} \Big)
  \, .
\end{equation}
As shown by Shtarkov~\cite{shtarkov1987universal}, this quantity equals the log-integral of the upper envelope of $\probas$:
\begin{equation}
  \label{eq:regret-shtarkov}
  \mmr (\probas)
  = \log \bigg( \int_{\Y^n} \sup_{p \in \probas} p (y) \mu^n (\di y) \bigg)
  \, .
\end{equation}
The reason for this
remarkable expression is simple:
due to the constraint that $q$ is a probability density, the infimum is obtained by making the ratio in~\eqref{eq:regret-ratio} %
constant over $y \in \Y^n$.
Formally, if $C > 0$ satisfies $C > \mmr (\probas)$, then there exists a density $q$ such that $\sup_{p \in \probas} p (y) \leq e^C q (y)$ for any $y \in \Y^n$, so that $\int_{\Y^n} \sup_{p \in \probas} p(y) \mu^n (\di y) \leq e^C \int_{\Y^n} q = e^C$, which gives the lower bound in~\eqref{eq:regret-shtarkov}.
The upper bound is obtained by taking the density $q (y) = \sup_{p \in \probas} p (y) / \int_{\Y^n} \sup_{p \in \probas} p (y') \mu^n (\di y')$
(known as \emph{normalized maximum likelihood})
whenever the integral in~\eqref{eq:regret-shtarkov} is finite.

\begin{remark}[Measurability issues]
  \label{rem:measurability-regret}
  When the model $\probas$ is countable, the integral in~\eqref{eq:regret-shtarkov} is well-defined and this expression for $\mmr (\probas)$ holds (with the proof above) since the envelope $y \mapsto \sup_{p \in \probas} p (y)$ is measurable.
  In addition, the quantity $\mmr (\probas)$ does not depend on the choice of measure $\mu$ or densities $p$, but only on the class $\{ p \cdot \mu : p \in \probas \}$ of probability distributions.
  When $\probas$ is uncountable, the expression~\eqref{eq:regret-shtarkov} still holds provided that the integral is an outer integral; however, in this case $\mmr (\probas)$ depends on the %
  choice of densities $p$.
  A better behaved definition for uncountable $\probas$ is to define $\mmr (\probas)$ as the supremum of $\mmr (\mathcal{Q})$ over all countable (or even finite) subsets $\mathcal{Q} \subset \probas$, which does not depend on the choice of densities.
  In the Gaussian case we consider, it equivalently suffices to choose the usual continuous densities, as we do.
\end{remark}

It stands to reason that the deterministic setting, which deals with arbitrary data sequences, should be harder than the statistical one.
The relationship between minimax redundancy and minimax regret is apparent when writing (inverting the order of suprema in~\eqref{eq:regret-ratio} for the regret):
\begin{align*}
  \mred (\probas)
  &= \inf_{q} \sup_{p \in \probas} \bigg[ \int_{\Y^n} \log \bigg( \frac{p (y)}{q (y)} \bigg) p (y) \mu (\di y) \bigg]
    \, ; \\
  \mmr (\probas)
  &= \inf_{q} \sup_{p \in \probas} \bigg[ \sup_{y \in \Y^n} \log \bigg( \frac{p (y)}{q (y)} \bigg) \bigg]
    \, .    
\end{align*}
In words, both settings require that the predictive density $q$ be competitive to all densities $p \in \probas$ simultaneously;
the distinction is in how the difference in codelength between $q$ and $p$ %
is measured.
For the redundancy, the \emph{average} (under $p$) difference is considered, while for the regret the \emph{worst-case} (maximum) difference is considered.\,\footnote{For this reason, the redundancy is sometimes called ``average redundancy'' while the maximum regret is sometimes called ``worst-case redundancy'', though we will not use this terminology to avoid possible confusion.}
The average being
smaller than the supremum,
one has
\begin{equation}
  \label{eq:redundancy-regret}
  \mred (\probas)
  \leq \mmr (\probas)
  \, .
\end{equation}
Most of this work is concerned with the minimax regret, but minimax redundancy will also be considered in Section~\ref{sec:metr-char-minim}.

\subsection{Convex bodies and intrinsic volumes}
\label{sec:convex-bodies}

We now recall basic facts and definitions about convexity.
A \emph{convex body} is a nonempty convex and compact subset of $\R^n$ for some $n \geq 1$ (following~\cite{schneider2013convex}, we do not require that this set has nonempty interior%
).
We will need the following quantities associated to a convex body:

\begin{definition}[Intrinsic volumes]
  \label{def:intrinsic-volumes}
  Let $K \subset \R^n$ be a convex body.
  For $0 \leq j \leq n$, the $j$-th \emph{intrinsic volume} of $K$ is defined as
  \begin{equation}
    \label{eq:def-kubota}
    V_j (K)
    = \binom{n}{j} \frac{\kappa_n}{\kappa_j \kappa_{n-j}} \, \E \big[ \vol_{j} (P_{\mathsf{V}^j} K) \big]
    \, ,
  \end{equation}
  where $\kappa_j = \vol_j (B_2^j) = \pi^{j/2} / \Gamma (1+j/2)$, the expectation is with respect to the random subspace $\mathsf{V}^j$
  whose distribution is uniform (rotation-invariant) over %
  $j$-dimensional linear
  subspaces of $\R^n$, %
  $P_{\mathsf{V}^j} : \R^n \to \mathsf{V}^j$ denotes the orthogonal projection on $\mathsf{V}^j$, and $\vol_j$ refers to the $j$-dimensional volume (Haussdorff measure) on $\mathsf{V}^j$ induced by its Euclidean structure.
  In particular, $V_0 (K) = 1$.
\end{definition}

Intrinsic volumes (as well as the more general \emph{mixed volumes}) play an important role in the theory of convex bodies~\cite{schneider2013convex,burago1988geometric} and in stochastic geometry~\cite{klain1997introduction,schneider2008stochastic} through the so-called \emph{kinematic formula}.
The definition~\eqref{eq:def-kubota} is known as \emph{Kubota's formula}.

Clearly, the $j$-th intrinsic volume $V_j$ is homogeneous of degree $j$, and invariant under affine isometries of $\R^n$.
Some intrinsic volumes amount %
to familiar quantities: for $K \subset \R^n$ a convex body,
$V_n (K)$ is simply the volume $\vol_n (K)$, while $V_{n-1} (K) = \frac{1}{2} \mathrm{surface}_{n-1} (\partial K)$ is half of the surface area (that is, the $(n-1)$-dimensional Haussdorff measure) of the %
boundary $\partial K$ of $K$.
The first intrinsic volume is also particularly important:
as shown by Sudakov~\cite{sudakov1976geometric}, it corresponds up to a $\sqrt{2\pi}$ factor to the \emph{Gaussian width} %
$w (K)$, defined below:
letting $X \sim \gaussdist (0, I_n)$, one has
\begin{equation}
  \label{eq:first-iv-gaussian-width}
  V_1 (K/\sqrt{2\pi})
  = w (K)
  = \E \sup_{\theta \in K} \langle \theta, X\rangle
  \, .
\end{equation}

In fact, the Gaussian representation~\eqref{eq:first-iv-gaussian-width} extends to higher-order intrinsic volumes, as shown by Tsirel'son~\cite{tsirelson1986geometric}:
for $1 \leq j \leq n$, let $G_j$ be a random $j \times n$ matrix with \iid standard Gaussian entries,
seen as a linear map $\R^n \to \R^j$; then,
\begin{equation}
  \label{eq:intrinsic-gaussian}
  V_j \big( K / \sqrt{2\pi} \big)
  = \frac{1}{j!} \cdot \E \bigg[ \frac{\vol_j (G_j K)}{\kappa_j} \bigg]
  \, .
\end{equation}
The representation~\eqref{eq:intrinsic-gaussian} (together with properties of Gaussian vectors)
makes it clear
that intrinsic volumes do not depend on the ambient dimension, in the sense that $K'=K \times \{ 0 \} \subset \R^{n+m}$ (with $m \geq 1$) satisfies $V_j (K') = V_j (K)$ for all $j = 0, \dots, n$.
This property is what motivates (the dependence in $n$ in) the specific choice of
normalization constant in the definition~\eqref{eq:def-kubota}.

While Definition~\ref{def:intrinsic-volumes} has an intuitive geometric appeal, it does not convey what makes intrinsic volumes %
fundamental quantities associated to a convex body.
The canonical nature of these quantities is best appreciated through a striking rigidity theorem of Hadwiger.
A \emph{valuation} on convex bodies is a function $\varphi$ associating a real number to each convex body in $\R^n$ (and $0$ to the empty set), satisfying the inclusion-exclusion identity
$\varphi (K \cup L) = \varphi (K) + \varphi (L) - \varphi (K \cap L)$ for every convex bodies $K, L$ such that $K \cup L$ is also convex.
Consider the linear space of all valuations on convex bodies which are also invariant under affine isometries of $\R^n$ and continuous (with respect to the Haussdorff distance).
A priori,
there could %
be infinitely many degrees of freedom for such valuations.
However, Hadwiger's theorem (\eg, \cite[Theorem~6.4.14 p.~360]{schneider2013convex}) states that this space is in fact finite-dimensional: %
for any such valuation $\varphi$, there exist $c_0, \dots, c_n \in \R$ such that $\varphi (K) = c_0 V_0 (K) + \dots + c_n V_n (K)$ for any convex body $K \subset \R^n$.

The reason why intrinsic volumes appear in various contexts (including the present one) is Steiner's classical formula on the volume of parallel sets.
Recall that for $r \geq 0$, the \emph{parallel set} of $K$ at radius $r$ is $K + r B_2^n$, the set of points at distance at most $r$ of $K$.
Since the parallel set increases with $r$ (in the sense of inclusion), so does its volume $%
\vol_n (K + rB_2^n)$.
Steiner's formula (\eg, Equation~(4.8) p.~213 in~\cite{schneider2013convex}) states that this function of $r$ is in fact a polynomial, whose coefficients are given by the intrinsic volumes:
for every $r \geq 0$,
\begin{equation}
  \label{eq:steiner}
  \vol_n (K + r B_2^n)
  = \sum_{j=0}^n V_{n-j} (K) \kappa_j r^j
  \, , \qquad \kappa_j = \vol_j (B_2^j)
  \, .
\end{equation}

\subsection{Minimax regret, intrinsic volumes and Wills functional}
\label{sec:regret-intrinsic}

With these prerequisites at hand, we turn to our main problem, namely universal coding in the Gaussian setting.
Recall from~\eqref{eq:gaussian-model-subset} that for a nonempty subset $A \subset \R^n$, we let $\probas_A$ denote the corresponding subset of the Gaussian model, and $\mmr (A)$ its associated minimax regret.

Before proceeding further, it is worth briefly recalling the way in which the Gaussian model arises and is sometimes formulated.
Consider a regression-type setting, where each observation $y_i$ with $i=1, \dots, n$ is the \emph{outcome}, or \emph{response}, associated to some \emph{feature} $t_i \in \Ts$.
Here, $\Ts$ is some abstract set of possible features, which typically consist of several predictive variables.
The features (or \emph{design points}) $t_1, \dots, t_n$ are known, but not the associated outcomes $y_1, \dots, y_n \in \R$.
Now, given a class $\F$ of functions $\Ts \to \R$, let us model
\begin{equation*}
  Y_i = f (t_i) + \eps_i \, ,
  \qquad i= 1, \dots, n \, ,
\end{equation*}
where $f \in \F$ is some (unknown) regression function in the class, and the errors $\eps_1, \dots, \eps_n$ are assumed to be independent $\gaussdist (0, \sigma^2)$ variables.
Up to rescaling the observations $Y_i$, one may reduce to the case $\sigma = 1$; the dependence on the noise level $\sigma > 0$ is discussed in more detail in Section~\ref{sec:depend-noise-sample}.
This amounts to saying that $Y_{1:n} \sim \gaussdist (\theta, I_n)$, where $\theta = (f (t_1), \dots, f (t_n))$.
Hence, the corresponding model is simply $\probas_A$, where
\begin{equation*}
  A
  = A (\F, t_{1:n})
  = \Big\{ \big( f (t_1), \dots, f (t_n) \big) : f \in \F \Big\}
  \subset \R^n
\end{equation*}
is the set of prediction vectors, which depends on both the class $\F$ and the design points $t_1, \dots, t_n \in \Ts$.
Note that the set $A$ inherits some structural properties of the class $\F$: for instance, if $\F$ is convex, then so is $A$.
Since the problem only depends on $\F$ and $t_{1:n}$ through the domain $A \subset \R^n$, in what follows we shall only refer to this set.

Our first result expresses the minimax regret in terms of the geometry of the convex domain.

\begin{theorem}[Minimax regret and intrinsic volumes]
  \label{thm:minimax-wills-volumes}
  If $K \subset \R^n$ is a convex body, then
  \begin{equation}
    \label{eq:minimax-intrinsic-volumes}
    \mmr (K)
    = \log \bigg( \sum_{j=0}^n %
    V_j \big( K / \sqrt{2\pi} \big)
    \bigg)
    \, .
  \end{equation}
  More generally, for any nonempty subset $A \subset \R^n$, %
  \begin{equation}
    \label{eq:minimax-wills-integral}
    \mmr (A)
    = \log \bigg( \frac{1}{(2\pi)^{n/2}} \int_{\R^n} e^{-\dist^2 (x, A)/2} \di x \bigg)
    \, .
  \end{equation}
\end{theorem}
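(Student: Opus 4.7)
The plan is to combine two classical identities. Shtarkov's representation~\eqref{eq:regret-shtarkov} reduces the minimax regret to the log-integral of the pointwise supremum of the model, and for the Gaussian family $\probas_A$ this supremum is explicit. After a rescaling by $\sqrt{2\pi}$, the resulting integral coincides with the standard integral representation of the Wills functional $\sum_{j=0}^n V_j(L)$, which follows from Steiner's formula~\eqref{eq:steiner} by integrating in radial layers.

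I would first establish the general formula~\eqref{eq:minimax-wills-integral}. For any $y \in \R^n$,
\[
\sup_{\theta \in A} p_\theta(y) = (2\pi)^{-n/2}\exp\bigl(-\tfrac{1}{2}\inf_{\theta \in A}\|y-\theta\|^2\bigr) = (2\pi)^{-n/2} e^{-\dist^2(y,A)/2}.
\]
Since $y \mapsto \dist(y,A)$ is $1$-Lipschitz, this envelope is continuous (hence Borel measurable) for arbitrary $A$, so Shtarkov's formula~\eqref{eq:regret-shtarkov} applies and, after taking logarithms, yields~\eqref{eq:minimax-wills-integral}.

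For the convex case~\eqref{eq:minimax-intrinsic-volumes}, I would substitute $x = \sqrt{2\pi}\,u$ in this integral. Since the distance function is $1$-homogeneous, $\dist(x,K) = \sqrt{2\pi}\,\dist(u, K/\sqrt{2\pi})$, and together with $dx = (2\pi)^{n/2}\,du$ this reduces the claim to the Wills identity
\[
\int_{\R^n} e^{-\pi \dist^2(u,L)}\,du = \sum_{j=0}^n V_j(L)
\]
for $L = K/\sqrt{2\pi}$. Writing this integral in its coarea form (either by applying Fubini to $e^{-\pi\dist^2(u,L)} = \int_{\dist(u,L)}^\infty 2\pi r\, e^{-\pi r^2}\,dr$, or by the layer-cake formula), one gets $\int_0^\infty 2\pi r\,e^{-\pi r^2}\vol_n(L + rB_2^n)\,dr$. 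Plugging in Steiner's polynomial $\vol_n(L + rB_2^n) = \sum_{j=0}^n V_{n-j}(L)\kappa_j r^j$ and using the Gamma evaluation $\int_0^\infty 2\pi r^{j+1} e^{-\pi r^2}\,dr = \pi^{-j/2}\Gamma(1 + j/2) = 1/\kappa_j$ (the last step by the very definition of $\kappa_j$), each term collapses to $V_{n-j}(L)$, and reindexing gives $\sum_{j=0}^n V_j(L)$.

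There is no genuine obstacle: the proof is the composition of Shtarkov's formula, a Gaussian rescaling, Steiner's polynomial expansion, and a one-line Gamma computation. The only technical point is the measurability of the envelope when $A$ is uncountable, which is automatic from the Lipschitz property of the distance. The rescaling by $\sqrt{2\pi}$ that appears in~\eqref{eq:minimax-intrinsic-volumes} is precisely what is needed to match the Gaussian normalization $(2\pi)^{-n/2}$ with the Wills weight $e^{-\pi\|\cdot\|^2}$.
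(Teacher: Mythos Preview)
Your proposal is correct and follows essentially the same approach as the paper: Shtarkov's formula gives the general expression~\eqref{eq:minimax-wills-integral}, and for convex bodies the layer-cake formula combined with Steiner's polynomial yields the sum of intrinsic volumes. The only cosmetic difference is that you rescale first and evaluate $\int_0^\infty 2\pi r^{j+1}e^{-\pi r^2}\,dr = 1/\kappa_j$ via the Gamma function, whereas the paper works in the $e^{-r^2/2}$ normalization and evaluates the analogous integral by applying the same identity to the singleton $\{0\}\subset\R^{n-j}$; both routes are equivalent.
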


In the case of a convex body $K$, Theorem~\ref{thm:minimax-wills-volumes} %
relates a statistical and information-theoretic quantity, the \emph{minimax regret} measuring the hardness of the prediction problem associated to $K$, %
to basic geometric quantities associated to $K$, namely its \emph{intrinsic volumes}.

\begin{proof}%
  By Shtarkov's integral representation~\eqref{eq:regret-shtarkov} of the minimax regret, %
  \begin{align*}
    \mmr (A)
    &= \log \bigg( \int_{\R^n} \sup_{\theta \in A} \Big[ \frac{1}{(2\pi)^{n/2}} \, e^{-\norm{\theta - x}^2/2} \Big] \di x \bigg) %
    = \log \bigg( \frac{1}{(2\pi)^{n/2}} \int_{\R^n} e^{-\dist(x, A)^2/2} \di x \bigg)
  \end{align*}
  which corresponds to the expression~\eqref{eq:minimax-wills-integral}.
  In the case where $A = K$ is a convex body, 
  the equality between the right-hand sides of equations~\eqref{eq:minimax-wills-integral} and~\eqref{eq:minimax-intrinsic-volumes} is due to Hadwiger~\cite{hadwiger1975wills};
  since the proof is short and follows directly from Steiner's formula, we include it for the reader's convenience.
  Recall that if $(\Zs, \mu)$ is a measured space and $f : \Zs \to \R^+$ a measurable function, by Fubini's theorem
  \begin{equation*}
    \int_\Zs f \di \mu
    = \int_{\Zs} \bigg( \int_{0}^\infty \indic{f \geq t} \di t \bigg) \di \mu
    = \int_0^\infty \mu (\{ f \geq t \}) \di t
    \, .
  \end{equation*}
  Applying this to $(\Zs, \mu) = (\R^n, \vol_n)$ and $f = e^{-\dist^2 (\cdot, A)/2}$, for any nonempty closed $A \subset \R^n$,
  \begin{align}
    \int_{\R^n} e^{-\dist(x, A)^2/2} \di x
    &= \int_0^{\infty} \vol_n \Big( \Big\{ x \in \R^n : e^{-\dist^2 (x, A)/2} \geq t \Big\} \Big) \di t \nonumber \\
    &= \int_{0}^\infty \vol_n \big( \big\{ x \in \R^n : \dist (x, A) \leq r \big\} \big) r e^{-r^2/2} \di r \nonumber \\
    &= \int_{0}^\infty \vol_n (A + r B_2^n) \, r e^{-r^2/2} \di r 
      \label{eq:other-integral-steiner}
      \, ,
  \end{align}
  where the last inequality uses that $A$ is closed (so that $A \cap B (x, r')$ is compact for $x \in \R^n$ and $r' > \dist (x, A)$, allowing to extract a convergent in $A$ sub-sequence of $(\theta_n)_{n \geq 1} \in A^{\N^*}$ such that $\norm{\theta_n - x} \to \dist (x, A)$).
  Now, assume that $A = K$ is a convex body.
  Applying Steiner's formula~\eqref{eq:steiner} in~\eqref{eq:other-integral-steiner} gives:
  \begin{align}
    \frac{1}{(2\pi)^{n/2}} \int_{\R^n} e^{-\dist(x, K)^2/2} \di x
    &= \frac{1}{(2\pi)^{n/2}} \int_{0}^\infty \bigg( \sum_{j=0}^n V_{j} (K) \kappa_{n-j} r^{n-j} \bigg) \, r e^{-r^2/2} \di r \nonumber \\
    &= \sum_{j=0}^n V_j (K) (2 \pi)^{-n/2} \int_0^\infty \vol_{n-j} (r B_2^{n-j}) \, r e^{-r^2/2} \di r  \nonumber \\
    &= \sum_{j=0}^n V_j (K) (2 \pi)^{-n/2} \, (2 \pi)^{(n-j)/2}
      \label{eq:proof-wills-trick} 
    = \sum_{j=0}^n V_j (K/\sqrt{2\pi}) 
      \, ,
  \end{align}
  where the penultimate equality~\eqref{eq:proof-wills-trick} is obtained by applying~\eqref{eq:other-integral-steiner} to $A = \{ 0 \} \subset \R^{n-j}$.
\end{proof}

For a convex body $K \subset \R^n$, the sum of its intrinsic volumes
\begin{equation}
  \label{eq:def-wills-convex}
  W (K)
  = \sum_{j=0}^n V_j (K)
\end{equation}
is called the \emph{Wills functional}~\cite{wills1973gitterpunktanzahl,hadwiger1975wills}.
This quantity was introduced by Wills~\cite{wills1973gitterpunktanzahl} in the context of lattice point enumeration (that is, the study of the number of points in $K$ with integer coordinates), see the survey~\cite{gritzmann1993lattice} for more information on this topic.
Several relevant results on the Wills functional are obtained in~\cite{wills1973gitterpunktanzahl,hadwiger1975wills,mcmullen1991inequalities} and the recent work~\cite{alonso2021further}, some of which will be recalled and discussed below.
The Wills functional has found %
applications to Gaussian processes~\cite{vitale1996wills} and the study of intrinsic volumes~\cite{lotz2020concentration}.
As shown by Theorem~\ref{thm:minimax-wills-volumes}, %
it also has a natural interpretation in terms of %
universal coding.

Many results in this article beyond Theorem~\ref{thm:minimax-wills-volumes} will apply to general, not necessarily convex sets.
For this reason, it is convenient to extend the definition of the Wills functional to general subsets $A \subset \R^n$ by
\begin{equation}
  \label{eq:def-wills-general}
  W (A)
  = \int_{\R^n} e^{-\pi\, \dist^2 (x, A)} \di x
  \, .
\end{equation}
Several results in the following sections support the idea that this definition is the most appropriate extension of the Wills functional to nonconvex sets,
a first indication of this being the fact that the %
characterization~\eqref{eq:minimax-wills-integral} of the minimax regret writes:
\begin{equation}
  \label{eq:regret-wills-general}
  \mmr (A)
  = \log W (A / \sqrt{2\pi})  
\end{equation}
for any nonempty domain $A \subset \R^n$.
While individual intrinsic volumes (and Steiner's formula) are specific to convex bodies, their sum---the Wills functional---may %
be thought of as defined for general subsets $A \subset \R^n$ via~\eqref{eq:def-wills-general}, since many of its fundamental properties extend to the nonconvex case.

Before this, in the remainder of this section we comment on the case of convex bodies.
First, a fundamental property of the sequence of intrinsic volumes of a convex body $K \subset \R^n$ is its \emph{Poisson-log-concavity},
which states that for $1 \leq j \leq n - 1$,
\begin{equation}
  \label{eq:alexandrov-fenchel}
  (j+1) \, V_{j+1} (K) V_{j-1} (K)
  \leq j \, V_j (K)^2
  \, .
\end{equation}
This inequality, a consequence of the general Alexandrov-Fenchel inequality between mixed volumes~\cite[Section~7.3]{schneider2013convex};~\cite[\S 20]{burago1988geometric}, was obtained independently by Chevet~\cite{chevet1976processus} and McMullen~\cite{mcmullen1991inequalities}.

A consequence of inequality~\eqref{eq:alexandrov-fenchel} is that the sequence of intrinsic volumes exhibits strong concentration (that is, decay before and after it reaches its maximal value), as shown in~\cite{lotz2020concentration,aravinda2022concentration}.
This phenomenon suggests that in the expression~\eqref{eq:minimax-intrinsic-volumes} of the minimax regret,
one may replace the sum of intrinsic volumes by the largest one without affecting the order of magnitude of this quantity.
The following result states that this is indeed the case, and that one may additionally restrict to indices on a geometric grid.

\begin{proposition}
  \label{prop:regret-max-intrinsic}
  If $K \subset \R^n$ is a convex body, and if $w (K) = V_1 (K/\sqrt{2\pi}) \geq 2$, then
  \begin{equation}
    \label{eq:regret-max-intrinsic}
    \log \Big( \max_{1 \leq j \leq n} V_j (K/\sqrt{2\pi}) \Big) \leq
    \mmr (K)
    \leq 8 \log \Big( \max_{k \geq 0} V_{2^k} (K/\sqrt{2\pi}) \Big)
    \, .
  \end{equation}
\end{proposition}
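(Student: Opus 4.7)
The lower bound is immediate from Theorem~\ref{thm:minimax-wills-volumes}: since $\mmr(K) = \log \sum_{j=0}^n v_j$ with $v_j = V_j(K/\sqrt{2\pi})$, one has $\mmr(K) \geq \log \max_{1 \leq j \leq n} v_j$, and the hypothesis $v_1 = w(K) \geq 2 > 1 = v_0$ ensures that the maximum over $j \geq 1$ coincides with the maximum over $j \geq 0$.

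For the upper bound, the plan is to control $W = \sum_{j=0}^n v_j$ using the Poisson-log-concavity \eqref{eq:alexandrov-fenchel}. Writing $\rho_j = v_{j+1}/v_j$, iterating the inequality from $j=0$ yields $\rho_j \leq w/(j+1)$ with $w = v_1$; hence $v_j \leq w^j/j!$, the mode $j^*$ of $(v_j)$ satisfies $j^* \leq w + 1 \leq M + 1$, and $W \leq e^w \leq e^M$. This last bound already gives $\log W \leq M \leq 8 \log M$ whenever $M$ is below some universal constant, so the interesting regime is $M$ large, where a finer dyadic analysis is needed.

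For this regime, I would decompose $W = v_0 + \sum_{k \geq 0} \sum_{j \in [2^k, 2^{k+1})} v_j$. On any dyadic block not containing $j^*$ the sequence $(v_j)$ is monotonic, so the block sum is at most $2^k \max(v_{2^k}, v_{2^{k+1}}) \leq 2^k M$; summed over $k$, using $j^* \leq M+1$ to bound the effective support of $(v_j)$, these contributions are controlled by $O(M^2)$. The only block that requires real work is $[2^{k_0}, 2^{k_0+1})$ containing the mode. For this block I would use the telescoping identity $v_{j^*}/v_{2^{k_0}} = \prod_{i=2^{k_0}}^{j^* - 1} \rho_i$ together with the sharper forward and backward forms of Poisson-log-concavity, $\rho_j \leq \rho_{j_0}(j_0+1)/(j+1)$ for $j \geq j_0$ and the symmetric inequality for $j \leq j_0$. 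Combining these with $\rho_j \leq w/(j+1)$, $w \leq M$ and $2^{k_0} \leq M+1$ should yield an estimate of the form $v_{j^*} \leq C\, M^c$ for a constant $c$ small enough that the block contribution $2^{k_0} v_{j^*}$ fits into $M^8$.

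The main obstacle I expect is the quantitative bookkeeping of the exponent in this last step. In the worst case, $j^*$ falls strictly interior to a dyadic block with a sharp peak, and $v_{j^*}$ can exceed both endpoint values $v_{2^{k_0}}, v_{2^{k_0+1}}$ substantially; extracting a clean $v_{j^*} \leq C M^c$ then requires balancing the forward and backward Poisson-log-concavity bounds carefully, exploiting that the effective width of the peak is only $O(\sqrt{j^*}) = O(\sqrt{M})$ by the Poisson-type decay of the $\rho_j$. The constant $8$ in the statement is a convenient upper bound allowing slack for this bookkeeping (including the $\sqrt{M}$ factor from the width of the peak), not the sharpest constant the method produces.
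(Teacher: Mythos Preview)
Your lower bound is correct and matches the paper. For the upper bound, your plan has the right ingredients (Poisson-log-concavity, unimodality, the bound $j^* \leq w$ on the mode), but the dyadic decomposition is more elaborate than needed, and the step you flag as the main obstacle---bounding $v_{j^*}$ by a power of $M = \max_k v_{2^k}$---is precisely where the paper uses a clean observation you are missing.

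The paper's argument proceeds in two simpler steps. First, it bounds the whole sum directly by $O(v_{j^*}^2)$: since $j^* \leq v_1 \leq v_{j^*}$, the unimodal part satisfies $\sum_{j \leq 2j^*} v_j \leq (2j^*+1) v_{j^*} \leq 2 v_{j^*}^2 + v_{j^*}$, and the tail $\sum_{j > 2j^*} v_j$ decays geometrically (because $\alpha_j = j v_j/v_{j-1}$ is non-increasing and $\alpha_{j^*+1} < j^*+1$, so $v_j/v_{j-1} \leq (j^*+1)/j \leq 1/2$ for $j \geq 2j^*+2$). This gives $\log W \leq 4 \log v_{j^*}$ with no dyadic splitting at all. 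Second---and this is the key point you are missing---Poisson-log-concavity implies ordinary log-concavity, which in turn implies that $(v_j^{1/j})_{j \geq 1}$ is non-increasing. Taking $j = 2^{k_0}$ with $2^{k_0} \leq j^* < 2^{k_0+1}$ and using $v_{j^*} \geq 1$ gives $v_{j^*}^{1/(2\cdot 2^{k_0})} \leq v_{j^*}^{1/j^*} \leq v_{2^{k_0}}^{1/2^{k_0}}$, hence $v_{j^*} \leq v_{2^{k_0}}^{2} \leq M^2$. The constant $8 = 4 \times 2$ is exactly this factorization.

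Your telescoping product $v_{j^*}/v_{2^{k_0}} = \prod_i \rho_i$ with the bound $\rho_i \leq w/(i+1)$ can in principle be pushed through, but it is a roundabout route to the same conclusion, and the ``$O(\sqrt{j^*})$ peak width'' heuristic you invoke is not needed at all. The monotonicity of $v_j^{1/j}$ is the one-line substitute for your entire mode-block analysis.
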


A proof of Proposition~\ref{prop:regret-max-intrinsic} is provided in %
Appendix~\ref{sec:proof-max-intrinsic}.
This proof does not rely on the concentration results in~\cite{lotz2020concentration,aravinda2022concentration}, but instead
proceeds directly from inequality~\eqref{eq:alexandrov-fenchel}.

The restriction to $w (K) \gtrsim 1$ in Proposition~\ref{prop:regret-max-intrinsic} is necessary, since for $w (K) < 1$ the right-hand side is negative (due to inequality~\eqref{eq:alexandrov-fenchel}).
The regime $w (K) \gg 1$ is arguably the most relevant one in the context of sequential probability assignment, especially in high dimension;
in any case, for $w (K) \lesssim 1$ one simply has $\mmr (K) \asymp w (K)$ by Proposition~\ref{prop:finiteness} below.

\section{Infinite-dimensional version, finiteness and asymptotics}
\label{sec:infin-dimens-vers}

In this section, we consider the infinite-dimensional version of the Wills functional and minimax regret, and describe some of its qualitative properties such as finiteness and asymptotics.
Some first quantitative bounds are discussed along the way, but we defer
the study of sharp quantitative estimates to Section~\ref{sec:comp-metric-estimates}.

\subsection{Basic properties and infinite-dimensional extension}

We start with some basic properties of the Wills functional and minimax regret, which follow from the definition~\eqref{eq:minimax-wills-integral} and will be used repeatedly.

\begin{proposition}%
  \label{prop:properties-minimax}
  The minimax regret satisfies the following properties:
  \begin{enumerate}
  \item %
    For any subsets $A \subset B \subset \R^n$, one has $\mmr (A) \leq \mmr (B)$.
  \item %
    For any $A \subset \R^n$, one has $\mmr (\ol A) = \mmr (A)$, where $\ol A$ is the closure of $A$.
  \item %
    If $(A_k)_{k \geq 1}$ is a sequence of subsets of $\R^n$ such that $A_k \subset A_{k+1}$ for any $k \geq 1$ and if $A = \bigcup_{k \geq 1} A_k$, then $\mmr (A) = \lim_{k \to \infty} \mmr (A_k)$.
  \item If $A \subset \R^m$ and $B \subset \R^n$ \textup(so that $A \times B \subset \R^{m+n}$\textup), then $\mmr (A \times B) = \mmr (A) + \mmr (B)$.
  \item %
    If $\iota : \R^n \to \R^m$ \textup(with $m \geq n$\textup) is an affine isometry, %
    in the sense that $\iota$ is affine and $\norm{\iota (x) - \iota (y)} = \norm{x - y}$ for any $x, y \in \R^n$, then $\mmr (\iota (A)) = \mmr (A)$ for any $A\subset \R^n$.
  \end{enumerate}
\end{proposition}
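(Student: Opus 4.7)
The plan is to derive all five properties from the integral representation $\mmr(A) = \log\bigl((2\pi)^{-n/2} \int_{\R^n} e^{-\dist^2(x,A)/2} \di x\bigr)$ established in Theorem~\ref{thm:minimax-wills-volumes}, reducing each claim to a corresponding property of the integrand $x \mapsto e^{-\dist^2(x,A)/2}$ or of the distance function $x \mapsto \dist(x,A)$.

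For items 1 and 2, I would use the pointwise facts that $A \subset B$ implies $\dist(x,A) \geq \dist(x,B)$ (so the integrand for $A$ is dominated by that for $B$), and that $\dist(x, A) = \dist(x, \ol A)$ for every set $A$ (a routine approximation). For item 3, the sequence $\dist(x, A_k)$ is nonincreasing in $k$, and an $\eps$-argument using $A = \bigcup_k A_k$ shows $\dist(x,A_k) \downarrow \dist(x, A)$ pointwise; monotone convergence applied to the increasing sequence $e^{-\dist^2(x,A_k)/2}$ then yields continuity from below, and the log and normalization pass through the limit.

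Item 4 relies on the separation identity $\dist^2((x,y), A\times B) = \dist^2(x, A) + \dist^2(y, B)$ for $(x,y) \in \R^m \times \R^n$, which follows from the definition of Euclidean distance. The integral over $\R^{m+n}$ then factorizes as a product, and upon dividing by $(2\pi)^{(m+n)/2} = (2\pi)^{m/2} (2\pi)^{n/2}$ and taking logarithm, one obtains additivity $\mmr(A\times B) = \mmr(A) + \mmr(B)$.

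For item 5, the cleanest route is to decompose an arbitrary affine isometry $\iota:\R^n \to \R^m$ as the composition of (i) the canonical embedding $\iota_0 : x \mapsto (x, 0) \in \R^n \times \R^{m-n}$, (ii) an orthogonal transformation of $\R^m$, and (iii) a translation. Translation and orthogonal invariance follow from the corresponding invariances of Lebesgue measure together with $\dist(x - b, A) = \dist(x, A+b)$ and $\dist(Ox, OA) = \dist(x,A)$. For the embedding, item 4 gives $\mmr(A \times \{0\}^{m-n}) = \mmr(A) + \mmr(\{0\}^{m-n})$, and a direct computation shows $\mmr(\{0\}^{m-n}) = \log\bigl((2\pi)^{-(m-n)/2} \int_{\R^{m-n}} e^{-\norm{x}^2/2} \di x\bigr) = 0$. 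The only mild subtlety is that item 5 is stated for $m \geq n$, and this is precisely what makes the decomposition above valid; items 1--4 are essentially immediate once the integral representation is in hand, so the main (still modest) obstacle is keeping track of the normalization constants in the $\R^{m-n}$ factor for item 5.
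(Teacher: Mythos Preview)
Your proposal is correct and follows essentially the same approach as the paper: items 1--4 are handled identically via the integral representation~\eqref{eq:minimax-wills-integral}, and for item 5 both you and the paper reduce to the case of a same-dimension affine isometry (handled via translation and orthogonal invariance of Lebesgue measure) combined with the product formula of item 4 applied to $A \times \{0\}$. The only cosmetic difference is the order of decomposition in item 5---you factor $\iota$ as embedding then isometry of $\R^m$, while the paper first treats the $m=n$ case and then composes $\iota$ with a further isometry of $\R^m$ sending $\iota(\R^n)$ to $\R^n \times \{0\}$---but these are equivalent.
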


\begin{proof}
  The first property comes from the fact that $\dist (\cdot, A) \geq \dist (\cdot, B)$; the second from the identity $\dist (\cdot, \ol A) = \dist (\cdot, A)$; the third from the fact that the increasing sequence of positive functions $e^{-\dist^2 (\cdot, A_k)/2}$ (for $k \geq 1$) converges towards $e^{-\dist^2 (\cdot, A)/2}$, which concludes by monotone convergence; and the fourth from the identity $\dist^2 ( (x,y), A \times B ) = \dist^2 (x, A) + \dist^2 (y, B)$ for $(x,y) \in \R^m \times \R^n$ (followed by Fubini's theorem).

  To prove the fifth property, we proceed in two steps.
  First, assume that $m = n$; %
  in this case, $\iota (x) = U x + v$ for every $x \in \R^n$, for some orthogonal matrix $U \in \O (n)$ and $v \in \R^n$.
  Then $\mmr (\iota (A)) = \mmr (A)$ comes from the fact that $\iota$ is a bijection preserving the Lebesgue measure and that $\dist (\iota (x), \iota (A)) = \dist (x, A)$ for any $x \in \R^n$.
  Consider now the general case $m \geq n$.
  Let $\wt \iota$ be an affine isometry of $\R^{m}$ such that $\wt \iota (\iota (x)) = (x, 0) \in \R^n \times \R^{m - n} \simeq \R^{m}$ for any $x \in \R^n$.
  By the case $m = n$ and the fourth property, one has
  $\mmr (\iota (A)) = \mmr (\wt\iota (\iota (A))) = \mmr (A \times \{ 0 \}) = \mmr (A) + \mmr (\{ 0 \}) = \mmr (A)$.
\end{proof}

The properties in Proposition~\ref{prop:properties-minimax} allow one to extend the definition of the minimax regret functional $\mmr$ (or equivalently of the Wills functional) to subsets of the infinite-dimensional sequence space $\ell^2$ in the usual way:
for any nonempty subset $A \subset \ell^2$, let
\begin{equation}
  \label{eq:def-infinite-dim}
  \mmr (A)
  = \sup \big\{ \mmr (B) : B \subset A \text{ finite dimensional} \big\}
  \in \R^+ \cup \{ + \infty \}
\end{equation}
where a finite-dimensional set $B$ is a set whose affine span is finite-dimensional, and for such a set the Wills functional/minimax regret is defined through an affine isometry between this span and $\R^n$ for some $n$ (by Proposition~\ref{prop:properties-minimax}, this value does not depend on the choice of the affine isometry).
It follows from Proposition~\ref{prop:properties-minimax} (specifically, the monotonicity property) that this definition coincides with the previous one for a finite-dimensional set.
Note that in the definition~\eqref{eq:def-infinite-dim}, ``finite-dimensional'' could be replaced by ``finite'' %
due to monotone convergence and stability under closure properties of $\mmr$.

The infinite-dimensional extension of the Wills functional is similar to that of another %
dimension-free quantity, namely the Gaussian width, defined for $A \subset \ell^2$ as
\begin{equation*}
  w (A)
  = \sup \big\{ w (B) : B \subset A \text{ finite} \big\}
  \, .
\end{equation*}
Likewise, intrinsic volumes of a closed convex set $K \subset \ell^2$ are defined by taking the supremum over finite-dimensional convex bodies $L \subset K$, and the identity $V_1 (K/\sqrt{2\pi}) = w(K)$ still holds. %

\subsection{Finiteness and connection with Gaussian width}
\label{sec:finiteness-link-with}

Perhaps the most basic question one may ask about the minimax regret is: for which sets is this quantity finite?
The purpose of this section is to address this question.
As we will see, finiteness of the minimax regret is a more %
interesting property in infinite dimension than in finite dimension,
since it depends on both the ``scale'' and ``dimensionality'' of the domain.

A starting point is the following inequality, due to McMullen~\cite[Theorem~2]{mcmullen1991inequalities}: for every convex body $K \subset \R^n$,
\begin{equation}
  \label{eq:wills-first-iv}
  W (K)
  \leq e^{V_1 (K)}
  \, .
\end{equation}
This inequality is a consequence of the Poisson-log-concavity property~\eqref{eq:alexandrov-fenchel}, which implies (by induction) that $V_j (K) \leq V_1 (K)^j/j!$ for every $j \geq 0$; summing over $j \geq 0$ yields~\eqref{eq:wills-first-iv}.
(An alternative proof is provided in~\cite{alonso2021further}, where it is deduced from a functional form of the Blaschke-Santaló inequality from~\cite{artstein2004santalo,klartag2005geometry}; in Section~\ref{sec:short-proof-mcmull}, we provide another short proof of this inequality, which essentially only uses the Prékopa-Leindler inequality.)
Inequality \eqref{eq:wills-first-iv} is of isoperimetric nature: it stems from the fact that, for any $n \geq 1$, among all $n$-dimensional convex bodies with fixed first intrinsic volume, the higher-order intrinsic volumes (and thus also the Wills functional~\eqref{eq:def-wills-convex}) are maximized by a multiple of the unit ball $B_2^n$---a consequence of the Alexandrov-Fenchel inequality.
Letting $n \to \infty$ and evaluating the limiting Wills functional for these rescaled $\ell^2$ balls of increasing dimension leads to the dimension-free inequality~\eqref{eq:wills-first-iv}.

While the isoperimetric inequality~\eqref{eq:wills-first-iv} is not sharp in many situations of interest, it plays an important role in the quantitative study of the minimax regret and Wills functional carried out in Section~\ref{sec:comp-metric-estimates} (as well as the qualitative study in the present section).

The Gaussian width is defined in terms of the expectation of a function of a Gaussian vector~\eqref{eq:first-iv-gaussian-width},
and intrinsic volumes admit a similar representation~\eqref{eq:intrinsic-gaussian}.
The Wills functional also admits a convenient Gaussian representation~\cite{vitale1996wills}:
if $A \subset \R^n$, letting $X \sim \gaussdist (0, I_n)$,
\begin{align}
  W (A/\sqrt{2\pi})
  &= (2\pi)^{-n/2} \int_{\R^n} e^{- \dist^2 (x, A)/2} \di x \nonumber \\
  &= (2 \pi)^{-n/2} \int_{\R^n} \exp \bigg( \frac{\norm{x}^2}{2} - \inf_{\theta \in A} \frac{\norm{\theta - x}^2}{2} \bigg) e^{-\norm{x}^2/2} \di x \nonumber \\
  &= \E \exp \bigg\{ \sup_{\theta \in A} \bigg[ \langle \theta, X\rangle - \frac{\norm{\theta}^2}{2} \bigg] \bigg\}
    \label{eq:gaussian-representation-wills}
    \, .
\end{align}

A remarkable consequence of this representation is the following inequality, due to Alonso-Gutiérrez, Hernández Cifre, and Yepes Nicolás~\cite[Theorem~1.2]{alonso2021further}: if $A \subset B (\theta, r)$ for some $\theta \in \R^n$ and $r > 0$, then
\begin{equation}
  \label{eq:reverse-isometry-wills}
  \log W (A/\sqrt{2\pi})
  \geq w (A) - \frac{r^2}{2}
  \, .
\end{equation}
(In~\cite{alonso2021further} this inequality is stated for convex bodies, but the proof only uses the integral representation~\eqref{eq:def-wills-general} and does not rely on convexity.)
Indeed, up to translating $A$ (see Proposition~\ref{prop:properties-minimax}), we may assume that $\theta = 0$ so $A \subset r B_2^n$.
Plugging this in~\eqref{eq:gaussian-representation-wills} and applying Jensen's inequality gives
\begin{equation*}
  W (A/\sqrt{2\pi})
  \geq \E \exp \bigg\{ \sup_{\theta \in A} \bigg[ \langle \theta, X\rangle - \frac{r^2}{2} \bigg] \bigg\}
  \geq \exp \bigg\{ \E \sup_{\theta \in A} \langle \theta, X\rangle - \frac{r^2}{2} \bigg\}
  = \exp \Big\{ w (A) - \frac{r^2}{2} \Big\}
  \, .
\end{equation*}

The lower bound~\eqref{eq:reverse-isometry-wills} can be seen as a partial reversal of the upper bound~\eqref{eq:wills-first-iv}.
Since one can always take $r = \diam (A)$ in~\eqref{eq:reverse-isometry-wills}, this lower bound implies that $\mmr (A)$ is of order $w (A)$ when $\diam^2 (A) \ll w (A)$.

Similarly to the upper bound~\eqref{eq:wills-first-iv}, the lower bound~\eqref{eq:reverse-isometry-wills} is typically not sharp.
It is fact negative (and thus trivial) in many regimes of interest, in particular in the ``large-scale'' (or ``low noise'', ``large sample size''%
---the meaning of these terms
being clarified in Section~\ref{sec:depend-noise-sample}%
) regime.
Nevertheless, once suitably strengthened, this lower bound will play an important role in Section~\ref{sec:comp-metric-estimates}.

We now state the main result of this section, namely a characterization of sets $A \subset \ell^2$
for which the Wills functional and minimax regret are finite.

\begin{proposition}
  \label{prop:finiteness}
  If $K \subset \ell^2$ is closed and convex, then
  \begin{equation}
    \label{eq:wills-width}
    \log (1 + w (K))
    \leq \mmr (K)
    \leq w (K)
    \, .
  \end{equation}
  In particular, $\mmr(K) < \infty$ if and only if $w (K)< \infty$.
  
  More generally, for any $A \subset \ell^2$, $\mmr (A) < \infty$ if and only if $w (A)< \infty$.
  In addition, $\mmr (A) \leq w (A)$.
  However, $\mmr (A)$ cannot be lower-bounded in terms of $w(A)$ in the nonconvex case: for any $t > 0$, there exists $A \subset \ell^2$ such that $w (A) \geq t$ but $\mmr (A) \leq 1$.
\end{proposition}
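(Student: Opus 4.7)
I would prove the statement in three steps: the convex body bounds in both finite and infinite dimension, the general upper bound together with the finiteness equivalence, and finally the nonconvex counter-example.

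In the convex case, Theorem~\ref{thm:minimax-wills-volumes} gives $\mmr(K) = \log \sum_{j=0}^n V_j(K/\sqrt{2\pi})$. The lower bound $\log(1 + w(K)) \leq \mmr(K)$ is immediate upon dropping the terms $j \geq 2$, since $V_0 = 1$ and $V_1(K/\sqrt{2\pi}) = w(K)$. The upper bound $\mmr(K) \leq w(K)$ is McMullen's inequality~\eqref{eq:wills-first-iv} rewritten via the same identification. For a closed convex $K \subset \ell^2$, both inequalities transfer to infinite dimension by exhausting $K$ by an increasing sequence of finite-dimensional convex bodies and using the monotone convergence statement of Proposition~\ref{prop:properties-minimax} (together with its analogue for $w$). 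The general bound $\mmr(A) \leq w(A)$ then follows by reducing to finite $B \subset A$ via~\eqref{eq:def-infinite-dim} and passing to the convex hull: since $w(\conv B) = w(B)$ (a linear functional attains its maximum over $\conv B$ at extreme points) and $B \subset \conv B$, monotonicity yields $\mmr(B) \leq \mmr(\conv B) \leq w(\conv B) = w(B) \leq w(A)$.

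For the finiteness equivalence, one direction is immediate from $\mmr(A) \leq w(A)$. For the converse, I would first note that $w(A) < \infty$ already forces $A$ to be bounded, since $w(A) \geq w(\{\theta, \theta'\}) = \norm{\theta - \theta'}/\sqrt{2\pi}$ for all $\theta, \theta' \in A$ and thus $\diam(A) \leq \sqrt{2\pi}\, w(A)$. Hence if $w(A) = \infty$, two cases arise. If $A$ is bounded, applying~\eqref{eq:reverse-isometry-wills} with $r = \diam(A)$ immediately gives $\mmr(A) \geq w(A) - \diam(A)^2/2 = \infty$. If $A$ is unbounded, then for every integer $N$ and every $D > 0$ I can greedily extract $\theta_1, \ldots, \theta_N \in A$ with pairwise distances at least $D$: once $\theta_1, \ldots, \theta_{k-1}$ are chosen, the set $A \setminus \bigcup_{j<k} B(\theta_j, D)$ is nonempty because $A$ is unbounded while the excluded set is bounded. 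Setting $B_N := \{\theta_1, \ldots, \theta_N\}$ and using the Shtarkov/change-of-measure decomposition $W(B_N/\sqrt{2\pi}) = \sum_{\theta \in B_N} \P^{(\theta)}(\theta = \argmax_{\theta' \in B_N}(\langle \theta', X\rangle - \norm{\theta'}^2/2))$, where $X \sim \gaussdist(\theta, I)$ under $\P^{(\theta)}$, a Gaussian tail union bound yields $\P^{(\theta)}(\theta \neq \argmax) \leq N(1-\Phi(D/2))$, which is at most $1/2$ for $D$ of order $\sqrt{\log N}$; hence $\mmr(A) \geq \mmr(B_N) \geq \log(N/2) \to \infty$. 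The main obstacle is precisely this unbounded subcase, where~\eqref{eq:reverse-isometry-wills} becomes vacuous and one must produce well-separated subsets of $A$ by hand.

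For the counter-example, take $A = \{0, t\sqrt{2\pi}\, e_1\} \subset \ell^2$ for $t > 0$. Splitting the one-dimensional integral in~\eqref{eq:minimax-wills-integral} at the midpoint of the two modes yields $W(A/\sqrt{2\pi}) = 2\Phi(t\sqrt{\pi/2}) \leq 2$, whence $\mmr(A) \leq \log 2 < 1$, while $w(A) = t$ by direct evaluation of $\E(t\sqrt{2\pi}\, X_1)^+$.
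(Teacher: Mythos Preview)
Your proof is correct and follows essentially the same route as the paper's: the convex bounds via Theorem~\ref{thm:minimax-wills-volumes} and McMullen's inequality, the general upper bound by passing to convex hulls, the converse finiteness by combining~\eqref{eq:reverse-isometry-wills} in the bounded case with a separated-subset argument in the unbounded case, and the same two-point counter-example. The only cosmetic difference is in the unbounded step, where you bound the Shtarkov integral via Voronoi cells and a Gaussian tail union bound (needing separation $D \asymp \sqrt{\log N}$), whereas the paper uses disjoint balls and Markov's inequality (with separation $5\sqrt{N}$); both suffice for the qualitative conclusion.
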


The proof of Proposition~\ref{prop:finiteness} is provided below.
The statements in the convex case follow from Theorem~\ref{thm:minimax-wills-volumes} and the results above, the main point left to prove being that for a nonconvex set $A \subset \ell^2$, if $\mmr (A) < \infty$ then $w (A) < \infty$.
The proof actually provides an explicit lower bound, but this quantitative lower bound is highly suboptimal.
Much more precise lower bounds will be obtained in Section~\ref{sec:comp-metric-estimates}.
We have nonetheless included this lower bound since it is simpler than those of Section~\ref{sec:comp-metric-estimates} and suffices to characterize finiteness.

Proposition~\ref{prop:finiteness} characterizes finiteness of the minimax regret for general sets $A \subset \ell^2$, a necessary and sufficient condition being finiteness of the Gaussian width.
(Sets $A \subset \ell^2$ such that $w(A) < \infty$ are sometimes called ``Gaussian bounded sets'' in the literature.)
However, the quantitative aspect of this qualitative characterization is rather weak: in the nonconvex case, $w (A)$ provides an upper bound on $\mmr (A)$, but no lower bound in terms of $w (A)$ holds.
In the convex case, the minimax regret can be lower-bounded in terms of the Gaussian width, but there is an exponential gap between the upper and lower bounds in~\eqref{eq:wills-width}.

On this last point, we note that both the upper and lower bounds in~\eqref{eq:wills-width} are best possible in terms of the Gaussian width for convex sets.
Indeed, for any $\theta \in \R^+$, the segment $K = [0, \sqrt{2\pi} \theta]$ satisfies $w (K) = V_1 ([0, \theta]) = \theta$ and $\mmr (K) = \log (1+ \theta)$ (since $V_j (K) = 0$ for $j \geq 2$).
On the other end, the upper bound is also optimal:
for any $\theta \geq 0$, it follows from~\eqref{eq:width-ball} that $w (\frac{\theta}{\sqrt{n}} B_2^n) \to \theta$ as $n \to \infty$, while $\diam (\frac{\theta}{\sqrt{n}} B_2^n) = \frac{\theta}{\sqrt{n}} \to 0$, so combining~\eqref{eq:wills-width} and~\eqref{eq:reverse-isometry-wills} gives
\begin{equation}
  \label{eq:wills-balls}
  \lim_{n \to \infty} \mmr \Big( \frac{\theta}{\sqrt{n}} B_2^n \Big)
  = \theta
  \, .
\end{equation}
It follows from this exponential gap that the Gaussian width does not suffice to characterize the order of magnitude of $\mmr$.
In the case of convex sets, one must consider higher-order intrinsic volumes to obtain the correct scaling of the minimax regret.
In Section~\ref{sec:comp-metric-estimates}, we will see an alternative characterization in terms of different complexity parameters for general sets.

\begin{proof}[Proof of Proposition~\ref{prop:finiteness}]
  The upper bound in~\eqref{eq:wills-width} corresponds to McMullen's inequality~\eqref{eq:wills-first-iv} (since $\mmr (K) = \log W (K/\sqrt{2\pi})$ and $w (K) = V_1 (K/\sqrt{2\pi})$), while the lower bound follows from Theorem~\ref{thm:minimax-wills-volumes} since $V_j (K/\sqrt{2\pi}) \geq 0$ for $j \geq 2$.

  If $A \subset \ell^2$, let us show that $\mmr (A) \leq w (A)$.
  By monotone convergence (Proposition~\ref{prop:properties-minimax}) and an approximation argument, one may assume that $A$ is a compact subset of $\R^n$, so that $K = \conv (A)$ is a convex body.
  By monotonicity and~\eqref{eq:wills-width}, $\mmr (A) \leq \mmr (K) \leq w (K) = w (A)$.
  In particular, if $w (A) < \infty$ then $\mmr (A) < \infty$.

  Conversely, let us show that if $A \subset \ell^2$ is such that $\mmr (A) < \infty$, then $w (A) < \infty$.
  For this, it suffices to show that $A$ is bounded, since in this case applying~\eqref{eq:reverse-isometry-wills} with $r = \diam (A)$ yields $w (A) \leq \mmr (A) + \diam^2 (A)/2 < \infty$.
  We proceed by contraposition, proving that if $A \subset \ell^2$ is unbounded then $\mmr (A) = + \infty$.

  For any $N \geq 2$, one can find elements $\theta_1, \dots, \theta_N \in A$ such that $\norm{\theta_i - \theta_j} \geq 5 \sqrt{N}$ for $i \neq j$ (start with any $\theta_1 \in A$, and given $\theta_1, \dots, \theta_{i-1}$ pick any $\theta_i \in A \setminus \bigcup_{1\leq j < i} B (\theta_j, 5 \sqrt{N})$, which exists since $A$ is unbounded).
  Let $W \subset \ell^2$ be the affine span of $A_N = \{ \theta_1, \dots, \theta_N \}$, and $n \leq N$ its dimension.
  Let $\iota : W \to \R^n$ be an affine isometry and $A'_N = \{ \vartheta_1, \dots, \vartheta_N \}$ with $\vartheta_i = \iota (\theta_i)$; by Proposition~\ref{prop:properties-minimax}, $\mmr (A) \geq \mmr (A_N) = \mmr (A'_N)$.
  Now, for $1 \leq i \leq N$, let $B_i = B (\vartheta_i, 2 \sqrt{N}) \subset \R^n$; since $\norm{\vartheta_i - \vartheta_j} = \norm{\theta_i - \theta_j} \geq 5 \sqrt{N}$ for $i \neq j$, the balls $(B_i)_{1 \leq i \leq N}$ are pairwise disjoint.
  In addition, letting $p_{\vartheta}$ be the density of the measure $\gaussdist (\vartheta, I_n)$ and $X \sim \gaussdist (0, I_n)$, one has for $i=1, \dots, n$,
  \begin{equation*}
    1 - \int_{B_i} p_{\vartheta_i}
    = \P \big( (\vartheta_i + X) \not\in B (\vartheta_i, 2 \sqrt{N}) \big)
    = \P (\norm{X} > 2 \sqrt{N})
    \leq \frac{\E \norm{X}^2}{4 N}
    = \frac{n}{4 N}
    \leq \frac{1}{4}
    \, .
  \end{equation*}
  One can therefore write:
  \begin{equation*}
    \mmr (A'_N)
    = \log \bigg( \int_{\R^n} \max_{1 \leq i \leq N} p_{\vartheta_i} \bigg)
    \geq \log \bigg( \sum_{i=1}^N \int_{B_i} p_{\vartheta_i} \bigg)
    \geq \log (3 N/4)
    \, .
  \end{equation*}
  Hence $\mmr (A) \geq \log (3 N / 4)$ for any $N \geq 2$ and thus $\mmr (A) = + \infty$.

  To see that $\mmr (A)$ cannot be lower-bounded in terms of $w (A)$, consider for any $t > 0$ the set $A = \{ 0, t \} \subset \R$.
  One has $w (A) = t/\sqrt{2\pi}$, but $\mmr (A) \leq \log 2 \leq 1$ since $A$ has at most two elements
  (see~\eqref{eq:mixture-regret}).
\end{proof}

\subsection{Small and large scale asymptotics}
\label{sec:large-scale-small}

Having characterized finiteness of the minimax regret and Wills functional, in this section we conclude the qualitative study of these functionals by discussing small and large scale asymptotics; namely, the asymptotic behavior of $\mmr (t A)$ as $t \to 0$ or $t \to \infty$.

For a convex body $K \subset \R^n$, the asymptotic behavior of $\mmr (t K)$ is readily understood from the expression~\eqref{eq:minimax-intrinsic-volumes}, which %
implies %
that
\begin{equation*}
  \mmr (t K) \sim t \, w (K) \quad \text{as } t \to 0 ;
  \qquad
  \mmr (t K)
  = n \log t + \log \vol_n (K/\sqrt{2\pi}) + o (1) \quad \text{as } t \to \infty
  \, ,
\end{equation*}
the latter being true assuming (up to restricting to a subspace) %
that the affine span of $K$ is $\R^n$, or equivalently $\vol_n (K) > 0$.
This asymptotic behavior subsists in the nonconvex case, at both small and large scales.

\begin{proposition}
  \label{prop:small-scale}
  For any $A \subset \ell^2$ such that $\mmr (A) < + \infty$, one has $\mmr (t A) \sim t \cdot w (A)$ as $t \to 0$.
  In addition, if $\diam^2 (A) \leq w (A)$ \textup(which holds if $\diam (A) \leq 1/\sqrt{2\pi}$\textup), then $\mmr (A) \geq w (A)/2$.
\end{proposition}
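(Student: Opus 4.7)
The plan is to leverage the two bounds already on the table: the upper bound $\mmr(A) \leq w(A)$ from McMullen's inequality (already proved in Proposition~\ref{prop:finiteness}) and the lower bound $\log W(A/\sqrt{2\pi}) \geq w(A) - r^2/2$ from~\eqref{eq:reverse-isometry-wills} whenever $A$ is contained in a Euclidean ball of radius $r$. Since $\mmr(A) < \infty$, Proposition~\ref{prop:finiteness} (and the argument in its proof) guarantees that $A$ is bounded, so $\diam(A) < \infty$. I would then use both inequalities on the rescaled set $tA$, for which $w(tA) = t\, w(A)$ and $\diam(tA) = t \, \diam(A)$.

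For the small-scale asymptotics, I first take $\theta_0 \in A$ arbitrary; then $tA \subset B(t \theta_0, t \diam(A))$, so applying~\eqref{eq:reverse-isometry-wills} (together with translation invariance from Proposition~\ref{prop:properties-minimax}) gives
\begin{equation*}
  t \, w(A) - \frac{t^2 \diam^2(A)}{2}
  \leq \mmr(tA)
  \leq t \, w(A)
  \, ,
\end{equation*}
the upper bound being Proposition~\ref{prop:finiteness}. Dividing by $t$ and letting $t \to 0$ shows $\mmr(tA)/t \to w(A)$, which gives the claimed equivalent when $w(A) > 0$; when $w(A) = 0$, the same sandwich shows $\mmr(tA) = 0$ for all $t \geq 0$, so the conclusion $\mmr(tA) \sim t \, w(A)$ still makes sense trivially.

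For the ``in addition'' part, pick any $\theta_0 \in A$ and apply~\eqref{eq:reverse-isometry-wills} with $r = \diam(A)$, using $A \subset B(\theta_0, \diam(A))$, to obtain
\begin{equation*}
  \mmr(A)
  \geq w(A) - \frac{\diam^2(A)}{2}
  \geq w(A) - \frac{w(A)}{2}
  = \frac{w(A)}{2}
  \, ,
\end{equation*}
which uses the hypothesis $\diam^2(A) \leq w(A)$ in the second inequality. For the parenthetical remark, I would invoke the elementary lower bound $w(A) \geq \diam(A)/\sqrt{2\pi}$, which follows by restricting to any two points $\theta_1, \theta_2 \in A$ and computing $\E \max(\langle \theta_1, X \rangle, \langle \theta_2, X \rangle) = \|\theta_1 - \theta_2\|/\sqrt{2\pi}$ via the identity $\max(a,b) = (a+b)/2 + |a-b|/2$; thus $\diam(A) \leq 1/\sqrt{2\pi}$ implies $\diam^2(A) \leq \diam(A)/\sqrt{2\pi} \leq w(A)$.

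There is no real obstacle here; the main point is simply that~\eqref{eq:reverse-isometry-wills} is already sharp enough to control the second-order error term $t^2 \diam^2(A)/2$, which becomes negligible next to the linear term $t \, w(A)$ as $t \to 0$. The only minor care required is ensuring boundedness of $A$ before invoking $\diam(A) < \infty$, which is handed to us for free by the finiteness of $\mmr(A)$.
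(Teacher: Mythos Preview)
Your proof is correct and follows essentially the same approach as the paper: both use the sandwich $t\,w(A) - t^2\diam^2(A)/2 \leq \mmr(tA) \leq t\,w(A)$ coming from~\eqref{eq:reverse-isometry-wills} and Proposition~\ref{prop:finiteness}, and both justify the parenthetical via $w(A) \geq \diam(A)/\sqrt{2\pi}$ by reducing to two-point sets. Your version is simply more explicit about boundedness, the choice of center $\theta_0$, and the degenerate case $w(A)=0$.
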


\begin{proof}
  Recall that $\mmr (A) < \infty$ if and only if $w (A) < \infty$, and that $\mmr (t A) \leq t w (A)$ for any $t > 0$.
  In this case, \eqref{eq:reverse-isometry-wills} gives $\mmr (t A) \geq t w (A) - t^2 \diam^2 (A)/2 \sim t w (A)$ as $t \to 0$.
  The second inequality also follows from~\eqref{eq:reverse-isometry-wills}, and the fact that $\diam (A) \leq 1/\sqrt{2\pi}$ implies $\diam^2 (A) \leq w(A)$ follows from the inequality $w (A) \geq \diam (A)/\sqrt{2\pi}$ (verified by reducing to two-point sets).
\end{proof}

We now turn to large-scale asymptotics, in finite dimension as in the convex case.
(The large-scale asymptotic behavior can be more general in the infinite-dimensional case.
We do not expand on this since Section~\ref{sec:comp-metric-estimates} deals with the non-asymptotic order of magnitude, 
which gives more %
information %
than the asymptotic behavior.)

\begin{proposition}
  \label{prop:large-scale}
  For any compact set $A \subset \R^n$, one has $\lim_{t \to \infty} W (t A) / t^n = \vol_n (A)$.
  In particular, if $\vol_n (A) > 0$, then
  $\mmr (t A) = n \log t + \log \vol_n (A/\sqrt{2\pi}) + o (1)$ as $t \to \infty$.
\end{proposition}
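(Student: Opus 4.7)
The plan is to reduce both statements to a straightforward change of variables followed by dominated convergence, starting from the integral representation $W(A) = \int_{\R^n} e^{-\pi \dist^2(x, A)}\, \di x$ given in~\eqref{eq:def-wills-general}.

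First I would rescale: substituting $x = ty$ (so $\di x = t^n \di y$) and using the homogeneity $\dist(ty, tA) = t\, \dist(y, A)$, one gets
\begin{equation*}
  \frac{W(tA)}{t^n}
  = \int_{\R^n} e^{-\pi t^2 \dist^2(y, A)} \, \di y
  \, .
\end{equation*}
Since $A$ is closed (being compact), for every $y \in \R^n$ the integrand converges pointwise as $t \to \infty$ to $\indic{y \in A}$: it equals $1$ on $A$, and on $\R^n \setminus A$ one has $\dist(y, A) > 0$ so the exponential tends to $0$.

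Next I would justify passing to the limit by dominated convergence. For $t \geq 1$, the integrand is dominated by $e^{-\pi \dist^2(y, A)}$, and this dominating function is integrable: since $A$ is compact, there exists $R$ with $A \subset R B_2^n$, and then $\dist(y, A) \geq \norm{y} - R$ for $\norm{y} \geq R$, so $e^{-\pi \dist^2(y, A)}$ decays faster than any Gaussian tail and $W(A) < \infty$. Hence
\begin{equation*}
  \lim_{t \to \infty} \frac{W(tA)}{t^n}
  = \int_{\R^n} \indic{y \in A} \, \di y
  = \vol_n(A)
  \, .
\end{equation*}

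For the second statement, if $\vol_n(A) > 0$, I would plug this asymptotic into the identity $\mmr(tA) = \log W(tA/\sqrt{2\pi})$ from~\eqref{eq:regret-wills-general}. Setting $s = t/\sqrt{2\pi}$, one has
\begin{equation*}
  \mmr(tA)
  = \log W(sA)
  = n \log s + \log \Big( \frac{W(sA)}{s^n} \Big)
  = n \log t - \frac{n}{2} \log(2\pi) + \log \vol_n(A) + o(1)
  \, ,
\end{equation*}
which is exactly $n \log t + \log \vol_n(A/\sqrt{2\pi}) + o(1)$ by homogeneity of Lebesgue measure. There is no real obstacle here; the only mild point to be careful about is checking the dominating function is integrable, but this is automatic from the compactness of $A$, and measurability questions do not arise since $\dist(\cdot, A)$ is continuous.
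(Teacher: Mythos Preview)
Your proof is correct and slightly more direct than the paper's. The paper first rewrites $W(A)$ via the layer-cake (Fubini) representation
\[
  W(A) = 2\pi \int_0^\infty \vol_n(A + r B_2^n)\, r e^{-\pi r^2}\,\di r,
\]
then scales inside to get $W(tA)/t^n = 2\pi \int_0^\infty \vol_n(A + (r/t) B_2^n)\, r e^{-\pi r^2}\,\di r$, and applies dominated convergence after first arguing that $\vol_n(A+\eps B_2^n) \to \vol_n(A)$ as $\eps \to 0$ (which uses that $A$ is closed). Your argument bypasses the layer-cake step entirely: the change of variables $x = ty$ in the defining integral~\eqref{eq:def-wills-general} gives $W(tA)/t^n = \int e^{-\pi t^2 \dist^2(y,A)}\,\di y$ in one line, and a single application of dominated convergence (with the integrable majorant $e^{-\pi \dist^2(\cdot,A)}$, valid for $t\geq 1$) finishes. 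Both routes ultimately rely on compactness of $A$ in the same way---closedness for the pointwise limit, boundedness for integrability of the dominating function---but yours is shorter and avoids the intermediate continuity-of-volume step.
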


The proof of Proposition~\ref{prop:large-scale} is provided in %
Appendix~\ref{sec:proof-large-scale}.
The assumption that $A \subset \R^n$ is compact is not restrictive, since by Proposition~\ref{prop:finiteness} (in the finite-dimensional case) $W (A)$ is finite if and only if $A$ is bounded, and by Proposition~\ref{prop:properties-minimax} one has $W (A) = W (\ol A)$, where $\ol A$ is closed bounded in $\R^n$ and therefore compact.

\section{Comparison theorem and %
  metric characterization}
\label{sec:comp-metric-estimates}

We now come back to the quantitative study %
of the minimax regret and Wills functional.
Theorem~\ref{thm:minimax-wills-volumes} provides an exact formula in terms of intrinsic volumes in the convex case.
In this section, we obtain a characterization in terms of different geometric complexity measures, which holds in the general nonconvex case and also provides new information on intrinsic volumes.

\subsection{Comparison theorem for the Wills functional}
\label{sec:comp-theor-wills}

An important feature of the Wills functional and minimax regret, which is not immediately apparent from the definition~\eqref{eq:def-wills-general}, is that they are in fact ``metric'' quantities.
The metric character of the Wills functional means that, if a set $A \subset \ell^2$ is ``larger'' than another set $B$ in terms of distances, then $W (A) \geq W (B)$.
This is expressed %
by the following comparison principle:

\begin{theorem}[Comparison inequality]
  \label{thm:comparison-wills}
  Let $A, B$ be two nonempty subsets of $\ell^2$ such that there exists a map $\varphi : A \to B$ such that $\varphi (A) = B$ and $\norm{\varphi (\theta) - \varphi (\theta')} \leq \norm{\theta-\theta'}$ for all $\theta, \theta' \in A$.
  Then, one has $W (A) \geq W (B)$ and $\mmr (A) \geq \mmr (B)$.
\end{theorem}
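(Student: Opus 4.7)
The plan is to reduce the comparison to a Slepian-type inequality for ``variance-centered'' exponential suprema of Gaussian processes, which is then established via Gaussian interpolation.

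By the infinite-dimensional extension~\eqref{eq:def-infinite-dim} together with the stability properties in Proposition~\ref{prop:properties-minimax}, I may assume that $A$ and $B$ are finite subsets of $\R^n$ and $\R^m$ respectively: given any finite $B' \subset B$, surjectivity of $\varphi$ allows one to pick a finite section $A' \subset A$ with $\varphi(A') = B'$, so that $\varphi|_{A'}$ is still a $1$-Lipschitz surjection onto $B'$, and monotonicity plus passage to the supremum over $B'$ handles the general case. Using the Gaussian representation~\eqref{eq:gaussian-representation-wills} and the surjectivity of $\varphi$, I then reindex both functionals over the common set $A$: with $X \sim \gaussdist(0, I_n)$ and $Y \sim \gaussdist(0, I_m)$ independent,
\begin{align*}
W(A/\sqrt{2\pi}) &= \E \exp\Big\{\sup_{\theta \in A}\big[\innerp{\theta}{X} - \tfrac{1}{2}\norm{\theta}^2\big]\Big\}, \\
W(B/\sqrt{2\pi}) &= \E \exp\Big\{\sup_{\theta \in A}\big[\innerp{\varphi(\theta)}{Y} - \tfrac{1}{2}\norm{\varphi(\theta)}^2\big]\Big\}.
\end{align*}
The centered Gaussian processes $g_\theta := \innerp{\theta}{X}$ and $g'_\theta := \innerp{\varphi(\theta)}{Y}$ on $A$ have squared increments $\norm{\theta-\theta'}^2$ and $\norm{\varphi(\theta)-\varphi(\theta')}^2$, so the $1$-Lipschitz hypothesis is exactly $\E(g_\theta - g_{\theta'})^2 \geq \E(g'_\theta - g'_{\theta'})^2$; crucially, the deterministic centerings on the two sides are precisely $\Var(g_\theta)/2$ and $\Var(g'_\theta)/2$.

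The problem thus reduces to the following ``exponential Sudakov--Fernique'' lemma: for any two centered Gaussian processes $(g_\theta)_{\theta \in T}$ and $(g'_\theta)_{\theta \in T}$ on a common finite index set $T$ satisfying $\E(g_\theta - g_{\theta'})^2 \geq \E(g'_\theta - g'_{\theta'})^2$ for every $\theta, \theta'$, one has
\[
\E \exp\sup_{\theta \in T}\big[g_\theta - \tfrac{1}{2}\Var(g_\theta)\big] \;\geq\; \E \exp\sup_{\theta \in T}\big[g'_\theta - \tfrac{1}{2}\Var(g'_\theta)\big].
\]
To prove it, take $(g, g')$ independent, interpolate $g^s := \sqrt{s}\,g + \sqrt{1-s}\,g'$ for $s \in [0, 1]$, replace the supremum by the smooth surrogate $M_\beta(z) := \beta^{-1}\log\sum_{\theta} e^{\beta z_\theta}$ with $\beta > 1$, and set $\Phi_\beta(s) := \E \exp\{M_\beta(g^s - \tfrac{1}{2}V(s))\}$ with $V_\theta(s) := \Var(g^s_\theta)$. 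Kahane's Gaussian interpolation formula applied to the Gaussian part, combined with the chain rule for the deterministic $V(s)$, produces a cancellation between $-\tfrac{1}{2}V'(s)$ and the diagonal part of the Hessian of $\exp \circ M_\beta$; after symmetrizing the surviving off-diagonal contributions one arrives at the compact identity
\[
\Phi_\beta'(s) = \frac{\beta-1}{4} \sum_{\theta,\theta' \in T}\Big[\E(g_\theta - g_{\theta'})^2 - \E(g'_\theta - g'_{\theta'})^2\Big]\, \E\big[e^{M_\beta}\, q_\theta q_{\theta'}\big],
\]
where the $q_\theta \in [0,1]$ are the softmax weights associated with $M_\beta$. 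Every summand is nonnegative by the increment hypothesis, so $\Phi_\beta' \geq 0$; letting $\beta \to \infty$ (by dominated convergence) concludes the lemma and yields $\mmr(A) \geq \mmr(B)$. The inequality $W(A) \geq W(B)$ then follows by applying the same reasoning to $\sqrt{2\pi} A$ and $\sqrt{2\pi} B$ (on which $\varphi$ induces a $1$-Lipschitz surjection).

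The main obstacle is the computation of $\Phi_\beta'(s)$ in the lemma, specifically the cancellation between the derivative of the variance correction $-\tfrac{1}{2}V(s)$ and the diagonal Hessian terms of $\exp \circ M_\beta$. This cancellation is what leaves only the \emph{squared increments} of the two Gaussian processes in the final formula, and it is the structural reason why the minus-half-variance centering appearing in~\eqref{eq:gaussian-representation-wills} is exactly the right normalization to enable a Slepian/Sudakov--Fernique-type comparison, making $W$ (and $\mmr$) a genuinely metric functional.
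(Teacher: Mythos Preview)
Your proposal is correct and follows essentially the same approach as the paper: reduction to finite sets, Gaussian representation~\eqref{eq:gaussian-representation-wills}, soft-max smoothing, and Gaussian interpolation of both the process and the variance correction, leading to the same derivative identity $\Phi_\beta'(s) = \tfrac{\beta-1}{4}\sum_{\theta,\theta'}[\E(g_\theta-g_{\theta'})^2 - \E(g'_\theta-g'_{\theta'})^2]\,\E[e^{M_\beta}q_\theta q_{\theta'}]$ (the paper's equation~\eqref{eq:proof-derivative}). Your framing as an abstract ``exponential Sudakov--Fernique lemma'' is a nice conceptual packaging, but the computations and the key cancellation you describe are exactly those carried out in the paper.
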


Theorem~\ref{thm:comparison-wills} is an analogue of a fundamental result in the theory of Gaussian processes, namely the Sudakov-Fernique comparison inequality~\cite{sudakov1976geometric,fernique1975regularite} (which is closely related to Slepian's lemma~\cite{slepian1962one}), which states that under the assumptions of Theorem~\ref{thm:comparison-wills} one has $w (A) \geq w (B)$.
In fact, Theorem~\ref{thm:comparison-wills} generalizes the Slepian-Sudakov-Fernique comparison inequality, which is recovered as follows:
for any $t > 0$, the map $\varphi_t (x) = t \varphi (x/t)$ contracts $t A$ into $t B$, so by Theorem~\ref{thm:comparison-wills} one has $\mmr (t A) \geq \mmr (t B)$.
Now, by Proposition~\ref{prop:small-scale} as $t \to 0$ one has $\mmr (t A) \sim t \cdot w (A)$ and $\mmr (t B) \sim t \cdot w (B)$, so that $w (A) \geq w (B)$.

The proof of Theorem~\ref{thm:comparison-wills} %
relies
on classical arguments in the proof of comparison inequalities,
namely Gaussian representation, Gaussian interpolation and integration by parts, and smoothing (see~\cite{kahane1986inegalite}, \cite[\S1.3]{talagrand2011mean}, \cite[\S7.2]{vershynin2018high} and
especially~\cite{chatterjee2005error}).
Interestingly, while the Gaussian representation~\eqref{eq:gaussian-representation-wills} of the Wills functional features an additional variance correction (a second source of dependence on the covariance of the underlying Gaussian process),
the proof involves simplifications reminiscent of %
those that appear in the proof of Slepian's inequality.

\begin{proof}
  The inequalities for the Wills functional and minimax regret are equivalent by Theorem~\ref{thm:minimax-wills-volumes}, since a contraction $\varphi : A \to B$ can be rescaled into
  a contraction $A/\sqrt{2\pi} \to B / \sqrt{2\pi}$.
  
  In addition, by an approximation argument, it suffices to prove Theorem~\ref{thm:comparison-wills}
  in the case where $A, B$ are finite.
  Specifically, since $B \subset \ell^2$ which is separable, one may find a sequence $(y_n)_{n \geq 1} \in B^{\N^*}$ which is dense in $B$.
  Hence, letting $B_n = \{ y_i : 1 \leq i \leq n\}$ for $n \geq 1$, the sequence $(B_n)_{n \geq 1}$ is increasing and such that $\bigcup_{n \geq 1} B_n$ is dense in $B$, so that (by Proposition~\ref{prop:properties-minimax}) $W (B_n) \to W (B)$.
  Now since $B = \varphi (A)$, one can write $y_n = \varphi (x_n)$ for some $x_n \in A$ for all $n$, and letting $A_n = \{ x_i : 1 \leq i \leq n\}$ one has $W (A) \geq W (A_n)$ since $A_n \subset A$.
  It therefore suffices to prove that $W (A_n) \geq W (B_n)$ for all $n$, noting that
  $B_n = \varphi (A_n)$.
  Finally, since $A_n, B_n \subset \ell^2$ have at most $n$ elements,
  by linearly embedding %
  their respective linear spans in $\R^n$ and using Proposition~\ref{prop:properties-minimax}, one may also assume that $A, B$ are finite subsets of $\R^n$.
  
  We now turn to the main proof.
  Let $A, B \subset \R^n$ be finite sets as in Theorem~\ref{thm:comparison-wills}, and write $A = \{ \theta_i : 1 \leq i \leq N \}$ for some $N$.
  By the Gaussian representation~\eqref{eq:gaussian-representation-wills}, one has 
  \begin{equation*}
    W (A/\sqrt{2\pi})
    = \E \exp \bigg\{ \max_{1 \leq i \leq N} \bigg[ \langle \theta_i, Z\rangle - \frac{\norm{\theta_i}^2}{2} \bigg] \bigg\}
  \end{equation*}
  where $Z \sim \gaussdist (0, I_n)$.
  Now, $(\langle \theta_i, Z\rangle)_{1 \leq i \leq N}$ is a centered Gaussian vector, with covariance $\Sigma^A \in \R^{N \times N}$ whose $(i,j)$-entry is given by $\Sigma^A_{ij} = \E \langle \theta_i, Z\rangle \langle \theta_j, Z\rangle = \langle \theta_i, \theta_j\rangle$---the Gram matrix of the family $(\theta_i)_{1 \leq i \leq N}$.
  In addition, $\norm{\theta_i}^2 = \Sigma_{ii}^A$ for any $i \leq N$.
  It follows that
  \begin{equation*}
    W (A/\sqrt{2\pi})
    = \E \exp \Big\{ \max_{1 \leq i \leq N} \big[ X_i^A - \Sigma^A_{ii}/2 \big] \Big\}
  \end{equation*}
  where $X^A = (X^A_i)_{1 \leq i \leq N} \sim \gaussdist (0, \Sigma^A)$.
  Likewise and since $B = \varphi (A)$, letting $\Sigma^B = (\Sigma_{ij}^B)_{1 \leq i, j \leq N}$ with $\Sigma_{ij}^B = \langle \varphi (\theta_i), \varphi (\theta_j)\rangle$ and $X^B \sim \gaussdist (0, \Sigma^B)$ (independent of $X^A$), one has
  \begin{equation*}
    W (B/\sqrt{2\pi})
    = \E \exp \Big\{ \max_{1 \leq i \leq N} \big[ X_i^B - \Sigma^B_{ii}/2 \big] \Big\}
    \, .
  \end{equation*}

  For every $\beta > 0$, define a function $f_{\beta} : \R^n \times \R^{n \times n} \to \R$ by
  \begin{equation*}
    f_{\beta} (x, \Sigma)
    = \bigg\{ \sum_{i=1}^N e^{\beta [ x_i - \Sigma_{ii}/2 ]} \bigg\}^{1/\beta}
    \, .
  \end{equation*}
  Also, let $W_\beta (D/\sqrt{2\pi}) = \E f_{\beta} (X^D, D)$ for $D = A, B$.
  Since for every $\beta>0$, $x$ and $\Sigma$,
  \begin{equation*}
    \exp \Big\{ \max_{1 \leq i \leq N} \big[ x_i - \Sigma_{ii}/2 \big] \Big\}
    \leq f_{\beta} (x, \Sigma)
    \leq N^{1/\beta} \exp \Big\{ \max_{1 \leq i \leq N} \big[ x_i - \Sigma_{ii}/2 \big] \Big\}
    \, ,
  \end{equation*}
  one has $W_\beta (D/\sqrt{2\pi}) \to W (D/\sqrt{2\pi})$ as $\beta \to \infty$.
  Now, letting $\Sigma (t) = (1-t) \Sigma^A + t \Sigma^B$ and $X (t) = \sqrt{1 - t} \, X^A + \sqrt{t} \, X^B \sim \gaussdist (0, \Sigma (t))$ for $t \in [0,1]$, define for $u, v \in [0,1]$:
  \begin{equation*}
    F_{\beta} (u, v)
    = \E \big[ f_{\beta} (X (u), \Sigma (v)) \big]
    \, ,
  \end{equation*}
  so that $F_\beta (0, 0) = W_\beta (A/\sqrt{2\pi})$ and $F_\beta (1, 1) = W_\beta (B/\sqrt{2\pi})$.
  It therefore suffices to prove that
  $t \mapsto F_\beta (t, t)$ is decreasing on $[0,1]$.

  Using Lemma~\ref{lem:gaussian-interpolation} below applied to $f = f_\beta (\cdot, \Sigma (t))$, 
  \begin{equation*}
    \frac{\partial F_\beta}{\partial u} (t, t)
    = \frac{1}{2} \sum_{1\leq i, j \leq N} \big( \Sigma^B_{ij} - \Sigma^A_{ij} \big) \, \E \bigg[ \frac{\partial^2 f_{\beta}}{\partial x_i \partial x_j} (X (t), \Sigma (t)) \bigg]
    \, .
  \end{equation*}
  An elementary computation shows that, for $1 \leq i, j \leq N$, %
  \begin{align*}
    \frac{\partial^2 f_{\beta}}{\partial x_i \partial x_j} (X (t), \Sigma (t)) 
    &= \bigg\{ \sum_{i=1}^N e^{\beta [ X_i (t) - \Sigma_{ii} (t)/2 ]} \bigg\}^{\frac{1}{\beta} - 1} \beta e^{\beta [ X_i (t) - \Sigma_{ii} (t)/2 ]} \indic{i = j} + \\ & \quad + \Big( \frac{1}{\beta} - 1 \Big) \bigg\{ \sum_{i=1}^N e^{\beta [ X_i (t) - \Sigma_{ii} (t)/2 ]} \bigg\}^{\frac{1}{\beta} - 2} \beta e^{\beta [ X_i (t) - \Sigma_{ii} (t)/2 ]} e^{\beta [ X_j (t) - \Sigma_{jj} (t)/2 ]} \\
    &= \fs (t) \big\{ \beta \pr_i (t) \indic{i=j} - (\beta - 1) \pr_i (t) \pr_j (t) \big\}
    \, ,
  \end{align*}
  where we let
  \begin{equation*}
    \pr_i (t)
    = \pr_i (t; \beta)
    = \frac{e^{\beta [ X_i (t) - \Sigma_{ii} (t)/2 ]}}{\sum_{k=1}^N e^{\beta [ X_k (t) - \Sigma_{kk} (t)/2 ]}} \, ;
    \qquad
    \fs (t)
    = \fs (t; \beta)
    = f_\beta (X (t), \Sigma (t))
    \, .
  \end{equation*}
  This leads to
  \begin{equation}
    \label{eq:proof-comparison-der-u}
    \frac{\partial F_\beta}{\partial u} (t, t)
    = \frac{\beta}{2} \sum_{i=1}^N \E \big[ \fs (t) \pr_i (t) \big] (\Sigma_{ii}^B - \Sigma_{ii}^A) - \frac{\beta - 1}{2} \sum_{1 \leq i, j \leq N} \E \big[ \fs (t) \pr_i (t) \pr_j (t) \big] (\Sigma_{ij}^B - \Sigma_{ij}^A)
    \, .
  \end{equation}
  On the other hand, by differentiating the expectation defining $F_\beta$ with respect to its second argument (which is valid since $f_\beta$ and its differential grow at most exponentially at infinity),
  \begin{align}    
    \frac{\partial F_\beta}{\partial v} (t, t)
    &= \E \bigg[ \sum_{i=1}^N \frac{1}{\beta} \bigg\{ \sum_{k=1}^N e^{\beta [X_k (t) - \Sigma_{kk} (t) / 2]} \bigg\}^{\frac{1}{\beta} - 1} e^{\beta [X_i (t) - \Sigma_{ii} (t) / 2]} (- \beta) \cdot \frac{1}{2} (\Sigma_{ii}^B - \Sigma_{ii}^A) \bigg] \nonumber \\
    &= - \frac{1}{2} \sum_{i=1}^N  \E \big[ \fs (t) \pr_i (t) \big] (\Sigma_{ii}^B - \Sigma_{ii}^A)
    \label{eq:proof-comparison-der-v}
    \, .
  \end{align}
  Combine~\eqref{eq:proof-comparison-der-u} and~\eqref{eq:proof-comparison-der-v} to obtain, denoting $\Delta = \Sigma^B - \Sigma^A$,
  \begin{align}
    \frac{\di}{\di t} F_\beta (t, t)
    &= \frac{\partial F_\beta}{\partial u} (t, t) + \frac{\partial F_\beta}{\partial v} (t, t) \nonumber \\
    &= \frac{\beta - 1}{2} \bigg\{ \sum_{i=1}^N \E \big[ \fs (t) \pr_i (t) \big] \Delta_{ii} - \sum_{1 \leq i, j \leq N} \E \big[ \fs (t) \pr_i (t) \pr_j (t) \big] \Delta_{ij} \bigg\} \nonumber \\
    &= \frac{\beta - 1}{4} \sum_{1 \leq i, j \leq N} \E \big[ \fs (t) \pr_i (t) \pr_j (t) \big] \big\{ \Delta_{ii} + \Delta_{jj} - 2 \Delta_{ij} \big\}
      \label{eq:proof-derivative}
  \end{align}
  where the last step uses that $\sum_{j=1}^N \pr_j (t) = 1$.
  Now, noting that
  \begin{equation*}
    \Delta_{ii} + \Delta_{jj} - 2 \Delta_{ij}
    = \norm{\varphi (\theta_i) - \varphi (\theta_j)}^2 - \norm{\theta_i - \theta_j}^2
    \leq 0
  \end{equation*}
  by assumption on $\varphi$, 
  we deduce that $\frac{\di}{\di t} F_\beta (t, t) \leq 0$, which concludes the proof.
\end{proof}

The proof above used the following standard lemma on Gaussian interpolation (\eg, \cite[Lemma~1.3.1 p.~13]{talagrand2011mean} or \cite[Lemma~7.2.7 p.~162]{vershynin2018high}),
proved by
Gaussian integration by parts.

\begin{lemma}
  \label{lem:gaussian-interpolation}
  Let $f : \R^n \to \R$ be a twice differentiable function whose Hessian $\nabla^2 f$ has at most exponential growth at infinity.
  Let $\Sigma^A, \Sigma^B$ be positive semi-definite $n \times n$ matrices, and $X^A \sim \gaussdist (0, \Sigma^A)$, $X^B \sim \gaussdist (0, \Sigma^B)$ be independent.
  Define $X (t) = \sqrt{1-t} \, \Sigma^A + \sqrt{t} \,\Sigma^B$ for $t \in [0, 1]$.
  Then,
  \begin{equation}
    \label{eq:gaussian-interpolation}
    \frac{\di}{\di t} \E \big[ f (X (t)) \big]
    = \frac{1}{2} \sum_{1\leq i, j \leq n} \big( \Sigma^B_{ij} - \Sigma^A_{ij} \big) \, \E \bigg[ \frac{\partial^2 f}{\partial x_i \partial x_j} (X (t)) \bigg]
    \, .
  \end{equation}  
\end{lemma}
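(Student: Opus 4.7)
The plan is to obtain the identity by differentiating $\E[f(X(t))]$ under the expectation and then applying Gaussian integration by parts (Stein's lemma) separately in the $X^A$ and $X^B$ variables, exploiting their independence.

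First I would justify differentiation under the integral sign. Writing
\begin{equation*}
  \frac{\di}{\di t} \E [f (X (t))]
  = \E \bigg[ \sum_{i=1}^n \frac{\partial f}{\partial x_i} (X (t)) \cdot \dot X_i (t) \bigg],
  \qquad \dot X (t) = - \frac{X^A}{2\sqrt{1-t}} + \frac{X^B}{2\sqrt{t}},
\end{equation*}
is valid on the open interval $(0,1)$ because $\nabla f$ has at most exponential growth (being the antiderivative of something of exponential growth), and Gaussian tails control the integrability of products with $X^A, X^B$; boundary values at $t=0,1$ follow by continuity after showing the right-hand side extends continuously.

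Next I would remove the $X^A$ and $X^B$ factors by Gaussian integration by parts. Conditioning on $X^B$, the vector $X^A \sim \gaussdist(0, \Sigma^A)$ is Gaussian, and for any smooth $h : \R^n \to \R$ of at most exponential growth one has the Stein identity $\E[X^A_i \, h(X^A) \mid X^B] = \sum_j \Sigma^A_{ij} \, \E[\partial_j h(X^A) \mid X^B]$. Applying this with $h(z) = \partial_i f(\sqrt{1-t}\, z + \sqrt{t}\, X^B)$, which contributes a chain-rule factor $\sqrt{1-t}$, gives
\begin{equation*}
  \E \big[ X^A_i \, \partial_i f(X(t)) \big]
  = \sqrt{1-t} \sum_{j=1}^n \Sigma^A_{ij} \, \E \bigg[ \frac{\partial^2 f}{\partial x_i \partial x_j}(X(t)) \bigg].
\end{equation*}
A symmetric computation in $X^B$ (conditioning on $X^A$) yields the analogous expression with $\Sigma^A$ replaced by $\Sigma^B$ and prefactor $\sqrt{t}$.

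Finally I would combine the two. The factors $\sqrt{1-t}$ and $\sqrt{t}$ cancel exactly with the denominators $2\sqrt{1-t}$ and $2\sqrt{t}$ arising from $\dot X(t)$, leaving
\begin{equation*}
  \frac{\di}{\di t} \E [f (X (t))]
  = \frac{1}{2} \sum_{i,j} \big( \Sigma^B_{ij} - \Sigma^A_{ij} \big) \, \E \bigg[ \frac{\partial^2 f}{\partial x_i \partial x_j}(X(t)) \bigg],
\end{equation*}
which is the claim. The only genuinely delicate point is the integrability/differentiation-under-integral justification together with the cancellation of the $1/\sqrt{t(1-t)}$ singularities; once this is in hand, the argument is a direct application of Stein's lemma. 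No convexity of $f$ or positive definiteness of the $\Sigma^D$ (only p.s.d.) is used.
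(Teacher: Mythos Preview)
Your proposal is correct and follows the standard approach; the paper itself does not give a proof of this lemma but simply cites references and notes that it is ``proved by Gaussian integration by parts,'' which is exactly what you do.
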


\subsection{Sharp metric characterization of the minimax regret and Wills functional}

The comparison inequality for the Wills functional and minimax regret (Theorem~\ref{thm:comparison-wills}) underlines the ``metric'' nature of these quantities.
This qualitative property suggests that these functionals can be understood quantitatively in terms of metric quantities.
A fundamental %
example of this principle comes from 
the theory of Gaussian processes (see~\cite{talagrand2021upper} for more information on this topic), 
where Slepian's lemma motivated the quantitative study of the Gaussian width of sets in terms of their metric structure~\cite{dudley1967sizes,sudakov1976geometric,fernique1975regularite,talagrand1987regularity,talagrand2021upper}.
This study culminated in the Majorizing Measure (or \emph{generic chaining}) theorem due to Talagrand and Fernique~\cite{talagrand1987regularity} (see \cite[Theorem~2.10.1 p.~59]{talagrand2021upper} and~\eqref{eq:majorizing-measure}), which provides an explicit metric characterization of the Gaussian width (up to universal constants) through a suitable multi-scale decomposition.

As discussed in Section~\ref{sec:finiteness-link-with}, the quantity $\mmr (A)$ already has some connection with the Gaussian width, in particular through the inequality $\mmr (A) \leq w (A)$.
However, this upper bound is far from being sharp in general.
In order to obtain more accurate estimates, one must therefore consider additional complexity parameters.

The first such complexity measure is the \emph{local Gaussian width}, defined for any nonempty set $A \subset \R^n$ and $r \geq 0$ by
\begin{equation}
  \label{eq:def-local-width}
  w_A (r)
  = \sup_{\theta \in A} w \big( A \cap B (\theta, r) \big)
  \, .
\end{equation}
The quantity $w_A (r)$ controls the ``local'' structure of the set $A$, at scales smaller than $r$.
Local widths have a natural interpretation in terms of Gaussian processes, since they correspond to the modulus of continuity\footnote{Specifically, the quantity $w_A (r)$ is connected to a ``local'' modulus of continuity of the process (with one deterministic endpoint), while the ``global'' modulus of continuity corresponds to
  the \emph{a priori} larger quantity $w ( (A-A) \cap r B_2 )$.
  However, a remarkable property of Gaussian processes discovered by Fernique is that %
  these two moduli of continuity are (in a suitable sense) equivalent up to constants;
  see~\cite[p.~82]{talagrand2021upper}.%
} of the Gaussian process $(\langle \theta, X\rangle)_{\theta \in A}$ associated to $A$ (where $X \sim \gaussdist (0, I_n)$).
While this is not immediate from its definition, it follows from the %
Talagrand-Fernique characterization of the (global) Gaussian width that local widths $w_A (r)$ can be related to the metric structure of $A$; see~\cite[\S2.15]{talagrand2021upper}.
As we will see, the richness of the local structure of $A$, as measured by local widths, provides a first obstruction to achieving small regret.

The second complexity measure is a ``global'' one.
For any $r \geq 0$, the \emph{covering number} $N (A, r)$ of $A$ at scale $r$ is the number of balls of radius $r$ needed to cover $A$:
\begin{equation}
  \label{eq:def-covering-numbers}
  N (A, r)
  = \inf \bigg\{ N \geq 1 : \exists \theta_1, \dots, \theta_N \in A, \ A \subset \bigcup_{i=1}^N B (\theta_i, r) \bigg\}
  \, .
\end{equation}
The parameter $\log_2 N (A, r)$ (or more accurately, its upper integral part) has a natural interpretation: it is the number of bits required to describe a
finite approximation of $A$
with error at most $r$; for this reason, it is sometimes referred to as the metric entropy of $A$.
The quantity $\log N (A, r)$ %
accounts for the global complexity of $A$, at scales larger than $r$.

One can associate suitable ``fixed points'' to both local Gaussian widths and global covering numbers.
Specifically, define for any nonempty set $A \subset \R^n$:
\begin{align}
  \label{eq:fixed-point-local}
  r_* (A)
  &= \sup \big\{ r \geq 0 : w_A (r) \geq r^2 \big\} \, ; \\
  \label{eq:fixed-point-covering}
  \wt r (A)
  &= \sup \big\{ r \geq 0 : \log N (A, r) \geq r^2 \big\}
    \, .
\end{align}
(Since $w_A (r) \leq \min (w (A), w (r B_2^n)) \leq \min (w (A), r \sqrt{n})$, one has $r_* (A) \leq \min (w (A), \sqrt{n})$; on the other hand, $\wt r (A)$ is finite if and only if $A$ is bounded, and $\wt r (A) \leq \diam (A)$ since $N (A, r) = 1$ for $r \geq \diam (A)$.)

Both quantities $r_* (A)$ and $\wt r (A)$ are classical complexity measures in high-dimensional statistics.
Specifically, fixed points associated to local complexities such as Gaussian widths are known to control the error of the least squares estimator, see~\cite{donoho1990gelfand,vandegeer1999empirical,tsirelson1982geometrical,massart2000some,koltchinskii2011oracle,bartlett2005local,lecue2013subgaussian,chatterjee2014convex} (and references therein) for more information and various extensions.
In addition, it is well-known that the fixed point~\eqref{eq:fixed-point-covering} associated to covering numbers provides an upper bound on the minimax risk for statistical estimation in various settings~\cite{yang1999information}.

The term ``fixed points'' stems from the fact that, if $g : \R^+ \to \R^+ \cup \{ + \infty \}$ is (say) non-increasing, continuous and neither identically $0$ or $+ \infty$, then $\sup \{ r \geq 0 : g (r) \geq r^2\}$ is the unique fixed point of the function $\sqrt{g}$.
This being said, some care is required in our context as none of the functions $w_A$ and $\log N (A, \cdot)$ (that define the ``fixed points'' $r_* (A)$ and $\wt r (A)$) satisfies all these properties.
The function $\log N (A, \cdot)$ is indeed non-increasing, but not continuous: first, it only takes ``quantized'' values, and more importantly its value can drop sharply between two values of $r$ of the same order.
For instance, one has $\log N (B_2^n, 1) = 0$, but $\log N (B_2^n, 1/2) \asymp n$.

As for the function $w_A$, it is not decreasing but in fact non-decreasing.
However, in the case where $A = K$ is a convex body, one can essentially reduce to the non-increasing case by considering the function $r \mapsto w_K (r) / r$ which is actually non-increasing (see Lemma~\ref{lem:local-gaussian-width}); this ensures in particular that $r_* (K)$ is the unique fixed point of $w_K^{1/2}$.
The case where $A$ is a general nonconvex set is more subtle: in this case, $w_A$ needs not be continuous (it can sharply increase at some points), and the function $r \mapsto w_A (r)/r$ may not be decreasing.
In this case, there may be several ``local solutions'' of the fixed-point ``equation'' (that is, values of $r$ such that for $t$ in a neighborhood of $r$, one has $w_A (t) \geq t^2$ if $t<r$ and $w_A (t)<t^2$ if $t > r$), of possibly different orders of magnitude.
Under such a configuration, by definition $r_* (A)$ corresponds to the supremum of these ``local fixed points''.

Some convenient properties of the fixed points~\eqref{eq:fixed-point-local} and~\eqref{eq:fixed-point-covering} are gathered in Section~\ref{sec:prop-fixed-points}.

With these definitions at hand, we are now in position to state our main quantitative result: a characterization of the minimax regret $\mmr (A) = \log W (A/\sqrt{2\pi})$ for general sets $A$, up to universal constant factors.

\begin{theorem}[Characterization of the minimax regret]
  \label{thm:minimax-regret-metric}
  For every subset $A \subset \R^n$,
  \begin{equation}
    \label{eq:minimax-regret-fixed}
    \max \bigg( \frac{r_* (A)^2}{2}, \frac{\wt r (A)^2}{300} \bigg)
    \leq \mmr (A)
    \leq 2 \max \big( r_* (A)^2, \wt r (A)^2 \big)
    \, .
  \end{equation}
  Alternatively,
  \begin{equation}
    \label{eq:minimax-regret-sum}
    \frac{1}{600} \inf_{r > 0} \Big\{ w_A (r) + \log N (A, r) \Big\}
    \leq \mmr (A)
    \leq \inf_{r > 0} \Big\{ w_A (r) + \log N (A, r) \Big\}
    \, .
  \end{equation}
\end{theorem}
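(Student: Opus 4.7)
The plan is to prove the four inequalities in~\eqref{eq:minimax-regret-fixed}--\eqref{eq:minimax-regret-sum} by reducing the sum form to the fixed-point form. I will establish separately: (i) the pointwise upper bound $\mmr(A) \leq w_A(r) + \log N(A, r)$ for every $r>0$; (ii) the lower bound $\mmr(A) \geq r_*(A)^2/2$ via the reverse inequality~\eqref{eq:reverse-isometry-wills}; (iii) the lower bound $\mmr(A) \gtrsim \wt r(A)^2$ by a Fano/Voronoi packing argument; and (iv) the comparison $\inf_{r>0}\{w_A(r)+\log N(A,r)\} \asymp \max(r_*(A)^2,\wt r(A)^2)$, which converts (ii)--(iii) into the sum form.

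\textbf{Upper bound.} Fix $r>0$, let $N = N(A,r)$ and pick a covering $A \subset \bigcup_{i=1}^{N} B(\theta_i,r)$ with $\theta_i \in A$. Setting $A_i = A \cap B(\theta_i,r)$, Shtarkov's formula~\eqref{eq:regret-shtarkov} applied to the Gaussian model gives
\[
  e^{\mmr(A)} = \int_{\R^n}\sup_{\theta\in A}p_\theta(x)\,\di x \leq \sum_{i=1}^N \int_{\R^n}\sup_{\theta\in A_i}p_\theta(x)\,\di x = \sum_{i=1}^N e^{\mmr(A_i)}.
\]
Proposition~\ref{prop:finiteness} yields $\mmr(A_i)\leq w(A_i)\leq w_A(r)$, so $\mmr(A) \leq \log N(A,r) + w_A(r)$, which after infimising in $r$ is the right-hand side of~\eqref{eq:minimax-regret-sum}.

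\textbf{Lower bounds.} For (ii), $A\cap B(\theta,r)$ lies in a ball of radius $r$, so~\eqref{eq:reverse-isometry-wills} combined with the monotonicity of $\mmr$ from Proposition~\ref{prop:properties-minimax} gives $\mmr(A) \geq w(A\cap B(\theta,r)) - r^2/2$; taking $\sup_\theta$ and letting $r$ approach $r_*(A)$ along a sequence witnessing the supremum in~\eqref{eq:fixed-point-local} gives $\mmr(A) \geq r_*(A)^2/2$. For (iii), whenever $r < \wt r(A)$ one has $\log N(A,r) \geq r^2$, and the classical packing-covering comparison (any maximal $r$-separated subset of $A$ is an $r$-cover, so $P(A,r)\geq N(A,r)$) produces $\theta_1,\dots,\theta_M \in A$ with $M \geq e^{r^2}$ at pairwise distance at least $r$. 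Extract $M_0 = \lceil e^{r^2/C}\rceil$ of them for a suitable $C$; for $X_i \sim \gaussdist(\theta_i,I_n)$, rewriting $\{X_i\in V_i\}$ (the Voronoi cell of $\theta_i$) as $\{\langle X_i-\theta_i, u_{ij}\rangle \leq \norm{\theta_j-\theta_i}/2,\ \forall j\neq i\}$ and applying the Gaussian tail $\P(\gaussdist(0,1)>t)\leq e^{-t^2/2}$ with a union bound gives $\P(X_i\in V_i) \geq 1/2$ as soon as $r^2 \geq 8\log(2(M_0-1))$. Hence
\[
  e^{\mmr(A)} \geq \sum_{i=1}^{M_0}\int_{V_i}p_{\theta_i}(x)\,\di x \geq M_0/2,
\]
so $\mmr(A) \gtrsim r^2$, and tuning the constants yields $\mmr(A) \geq \wt r(A)^2/300$.

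\textbf{Equivalence and main obstacle.} Since $w_A$ is non-decreasing and $\log N(A,\cdot)$ non-increasing, for any $r>\max(r_*(A),\wt r(A))$ the definitions of the fixed points force $w_A(r)<r^2$ and $\log N(A,r)<r^2$; hence $F(r):=w_A(r)+\log N(A,r)\leq 2r^2$, and letting $r\downarrow \max(r_*,\wt r)$ yields $\inf_r F \leq 2\max(r_*(A)^2, \wt r(A)^2)$. Combined with (ii)--(iii) this gives $\mmr(A)\geq \inf_r F/600$, and the reverse inequality $\mmr(A)\leq \inf_r F$ was proved in (i). The principal obstacle is the Fano lower bound in step (iii): one must carefully select the sub-packing size $M_0$ so the Voronoi cells concentrate a constant fraction of each centered Gaussian mass, which couples the scale $r$, the cardinality $M_0$ and the Gaussian tail constant in a way that ultimately dictates the universal factor $1/300$. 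The remaining care is bookkeeping: Proposition~\ref{prop:finiteness}'s bound $\mmr(A_i)\leq w(A_i)$ is dimension-free, so it applies to arbitrary pieces $A_i$, and the packing obtained from the covering lies in $A$ so $\mmr(A)$ dominates the auxiliary regret over $\{\theta_1,\dots,\theta_{M_0}\}$.
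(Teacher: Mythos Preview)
Your steps (i), (ii), and (iv) match the paper's argument essentially verbatim (Sections~9.4, 9.5, and~9.6). The substantive difference is step (iii). The paper does not bound $\mmr$ directly via Voronoi cells; instead it routes through the minimax \emph{redundancy}, using $\mmr(A)\geq\mred(A)$ together with a mutual-information lower bound (Lemma~\ref{lem:lower-mutual}) and a dedicated estimate on the Kullback--Leibler divergence to a uniform mixture under a separation condition (Lemma~\ref{lem:lower-separated}). This detour buys the paper the redundancy characterization (Theorem~\ref{thm:minimax-redundancy}) as a by-product, whereas your Voronoi argument---more direct and geometric for $\mmr$---says nothing about $\mred$.

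There is also a real, if minor, gap in your step (iii). The constraint $r^2\geq 8\log(2(M_0-1))$ forces $r$ above an absolute threshold before the union bound leaves any room for $M_0\geq 2$; for $\wt r(A)$ below that threshold your argument yields nothing, and ``tuning the constants'' cannot absorb this. The paper faces the same issue and handles it by splitting into the regimes $\wt r\geq 12$ and $\wt r<12$, using in the second regime a two-point argument via Pinsker's inequality and the explicit Gaussian total variation. Your route admits an even simpler patch: for any two points $\theta_1,\theta_2\in A$ at distance $\rho$, Shtarkov's integral gives $\mmr(A)\geq\log(1+\tv{p_{\theta_1}-p_{\theta_2}})$, and for $\rho<\wt r(A)$ bounded by a constant this is $\gtrsim\rho\gtrsim\wt r(A)^2$.
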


It follows from Theorem~\ref{thm:minimax-regret-metric} that the minimax regret $\mmr (A)$ is sharply characterized by a combination of local Gaussian widths and global covering numbers.
As noted above, for a given $r$, the local width $w_A (r)$ controls the complexity of $A$ at scales smaller than $r$, while the global covering number $\log N (A, r)$ measures the complexity of $A$ at scales larger than $r$.
In addition, the quantities that appear in Theorem~\ref{thm:minimax-regret-metric} can be further expressed explicitly in terms of the metric structure of the set $A$ (see Corollary~\ref{cor:metric-intrinsic-volumes} below), therefore
making explicit the metric nature of $\mmr (A)$, which is implicit in the comparison inequality of Theorem~\ref{thm:comparison-wills}.

As a side remark, the equivalence (up to constants) between the quantities in the right-hand sides of~\eqref{eq:minimax-regret-fixed} and~\eqref{eq:minimax-regret-sum} is not immediate, since the definition of  $\max (r_* (A)^2, \wt r (A)^2)$ involves an additional term in $r^2$.
The ``direct'' direction is that the infimum in~\eqref{eq:minimax-regret-fixed} is controlled by $\max (r_* (A)^2, \wt r (A)^2)$ (see~Section~\ref{sec:proof-relating-upper-lower}%
),
while the reverse inequality %
follows from %
Theorem~\ref{thm:minimax-regret-metric}.

Theorem~\ref{thm:minimax-regret-metric} can be interpreted as follows.
There are at least two basic ways to upper bound $\mmr (A)$.
The first one is McMullen's inequality $\mmr (A) \leq w (A)$, of isoperimetric nature.
This inequality is established using tools from Convex Geometry, such as the Alexandrov-Fenchel inequalities between intrinsic volumes as in~\cite{mcmullen1991inequalities} (or the functional Blaschke-Santaló inequality in~\cite{alonso2021further}, or the Prékopa-Leindler inequality in Section~\ref{sec:short-proof-mcmull}).
It therefore originates %
from the convex-geometric view of $\mmr$, in connection with the Wills functional and intrinsic volumes.
The second one is the following inequality (see~\eqref{eq:mixture-regret}): if $A_1, \dots, A_N \subset \R^n$ for some $N \geq 1$, then
\begin{equation*}
  \mmr \bigg( \bigcup_{i=1}^N A_i \bigg)
  \leq \max_{1 \leq i \leq N} \mmr (A_i) + \log N
  \, .
\end{equation*}
This simple but useful inequality is connected to the statistical interpretation of $\mmr (A)$ as a minimax regret; the ``mixture''\footnote{This terminology comes from the fact that, from the perspective of sequential probability assignment, the predictive distribution that achieves this upper bound is a mixture of the minimax distributions on each set $A_i$.} argument which establishes it is the basis of the
\emph{aggregation} approach to statistical estimation~\cite{barron1987bayes,yang1999information,catoni2004statistical,yang2000mixing} and sequential prediction~\cite{desantis1988learning,vovk1998mixability,littlestone1994weighted}.
Combining these two basic inequalities in a natural way (see Section~\ref{sec:proof-upper-regret}) leads to the upper bound in~\eqref{eq:minimax-regret-sum}.
The matching lower bound in~\eqref{eq:minimax-regret-sum} from Theorem~\ref{thm:minimax-regret-metric}
implies that
this combination of the ``isoperimetric'' and ``mixture'' bounds is optimal in full generality.

Theorem~\ref{thm:minimax-regret-metric} is proved in Section~\ref{sec:proof-theorems}.
In fact, as part of the proof of the lower bound, we also obtain results in the statistical setting (defined in Section~\ref{sec:coding-minimax-regret}), which we describe in the next section.

Before this, we conclude this section by making the ``metric'' characterization of $\mmr (A)$ explicit.
To this end, we need the following definition (see~\cite[Definitions~2.7.1 and~2.7.3]{talagrand2021upper}):

\begin{definition}[Admissible sequence, $\gamma_2$ functional]
  \label{def:admissible-sequence}
  Let $A \subset \ell^2$ be a set.
  An \emph{admissible sequence} of partitions of $A$ is a sequence $(\A_j)_{j \geq 0}$ of partitions of $A$ such that, for each $j$:
  \begin{itemize}
  \item $\A_{j+1}$ is a refinement of $\A_j$ (\ie, each element of $\A_{j+1}$ is contained in an element of $\A_j$);
  \item $\A_j$ contains at most $N_j$ elements, where $N_0 = 1$ and $N_j = 2^{2^j}$ for $j \geq 1$.
  \end{itemize}
  Given an admissible sequence, for $\theta \in A$ we denote by $A_j (\theta)$ the unique element of $\A_j$ that contains $\theta$.
  In addition, for any $p \geq 0$, we define the $p$-th \emph{truncated $\gamma_2$ functional} by
  \begin{equation}
    \label{eq:truncated-gamma-2}
    \gamma_2^{(p)} (A)
    = \inf_{(\A_j)_{j \geq 0}} \sup_{\theta \in A} \sum_{j \geq p} 2^{j/2} \diam A_j (\theta)
    \, ,
  \end{equation}
  where the infimum is taken over all admissible sequences of partitions of $A$.
  (The quantity inside the infimum only depends on $(\A_j)_{j \geq p}$.)
  The (non-truncated) \emph{$\gamma_2$ functional} is $\gamma_2 = \gamma_2^{(0)}$.
\end{definition}

The (truncated) $\gamma_2$ functionals are explicitly metric, since they are defined in terms of the diameter of suitable multi-scale partitions of $A$.
In particular, it is straightforward to verify that if $\varphi : A \to \ell^2$ is a contraction, then $\gamma_2^{(p)} ( \varphi (A) ) \leq \gamma_2^{(p)} (A)$ for every $p \geq 0$.

Talagrand-Fernique's Majorizing Measure theorem~\cite[Theorem~2.10.1]{talagrand2021upper}, a fundamental result in the theory of Gaussian processes, characterizes the magnitude of the Gaussian width in terms of distances through the $\gamma_2$ functional.
It asserts that there is a universal constant $c > 1$ such that, for any set $A \subset \ell^2$,
\begin{equation}
  \label{eq:majorizing-measure}
  \frac{1}{c}\, \gamma_2 (A)
  \leq w (A)
  \leq c \, \gamma_2 (A)
  \, .
\end{equation}
The upper bound, due to Fernique, is obtained through an optimal application of the so-called chaining method; the more delicate part of this theorem is the lower bound, due to Talagrand.

Combining Theorem~\ref{thm:minimax-regret-metric} with the Majorizing Measure theorem gives an analogous characterization for the regret functional:

\begin{corollary}
  \label{cor:majorizing-measure-regret}
  There
  exists an absolute constant $C > 1$
  such that the following holds.
  For every $A \subset \ell^2$,
  \begin{equation}
    \label{eq:mm-regret-sum}
    \frac{1}{C} \inf_{p \geq 0} \big\{ (2^p - 1) + \gamma_2^{(p)} (A) \big\}
    \leq \mmr (A)
    \leq C \inf_{p \geq 0} \big\{ (2^p - 1) + \gamma_2^{(p)} (A) \big\}
    \, .
  \end{equation}
  Alternatively, let \textup(with the convention that $\sup \varnothing = - \infty$\textup)
  \begin{equation*}
    p^* (A)
    = \sup \big\{ p \geq 0 : \gamma_2^{(p)} (A) \geq 2^p \big\}
    \, .
  \end{equation*}
  Define $\gamma^* (A) = \gamma_2 (A)$ if $p^* (A) = - \infty$, and $\gamma^* (A) = 2^{p^* (A)}$ otherwise.
  Then,
  \begin{equation}
    \label{eq:mm-regret-fixed}
    \frac{1}{C} \gamma^* (A)
    \leq \mmr (A)
    \leq C \gamma^* (A)
    \, .
  \end{equation}
\end{corollary}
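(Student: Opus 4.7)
The plan is to derive the corollary by combining Theorem~\ref{thm:minimax-regret-metric} with the Majorizing Measure theorem~\eqref{eq:majorizing-measure}. I will first prove \eqref{eq:mm-regret-sum}, then obtain \eqref{eq:mm-regret-fixed} from a routine case analysis on $p^*(A)$. The key identification is between the scale $r$ in Theorem~\ref{thm:minimax-regret-metric} and the chaining level $p$ via $2^p \asymp \log N(A, r)$, so that a covering at scale $r$ has size comparable to the maximal allowable size $N_p = 2^{2^p}$ of a level-$p$ partition in an admissible sequence.

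For the upper bound $\mmr(A) \lesssim \inf_{p \geq 0}\{(2^p - 1) + \gamma_2^{(p)}(A)\}$, fix $p$ and an admissible sequence $(\mathcal{A}_j)_{j \geq 0}$ achieving the infimum defining $\gamma_2^{(p)}(A)$ up to a factor $2$, and set $r := \sup_{\theta \in A} \diam \mathcal{A}_p(\theta)$. The first term of the truncated sum gives $2^{p/2} r \leq 2\gamma_2^{(p)}(A)$, while the partition bound yields $\log N(A, r) \leq \log N_p \lesssim 2^p$. For each $\theta_0 \in A$, prepending trivial one-piece partitions at levels $j < p$ and restricting $(\mathcal{A}_j)_{j \geq p}$ to $A \cap B(\theta_0, r)$ forms an admissible sequence of the local set, whose $\gamma_2$-sum is at most $2r \sum_{j < p} 2^{j/2} + \gamma_2^{(p)}(A) \lesssim 2^{p/2} r + \gamma_2^{(p)}(A) \lesssim \gamma_2^{(p)}(A)$. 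The Majorizing Measure theorem applied locally then yields $w(A \cap B(\theta_0, r)) \lesssim \gamma_2^{(p)}(A)$, hence $w_A(r) \lesssim \gamma_2^{(p)}(A)$, and Theorem~\ref{thm:minimax-regret-metric} concludes.

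For the reverse direction $\inf_p\{(2^p - 1) + \gamma_2^{(p)}(A)\} \lesssim \mmr(A)$, fix a near-optimal $r$ in $\inf_r\{w_A(r) + \log N(A, r)\}$; via the fixed-point reformulation~\eqref{eq:minimax-regret-fixed}, one may restrict to $r$ of order $\max(r_*(A), \wt r(A))$, for which $r^2 \lesssim w_A(r) + \log N(A, r)$. Choose $p$ with $2^p \asymp \log N(A, r)$, take a minimal $r$-covering inducing a partition $\mathcal{A}_p$ of $A$ into pieces $A^{(i)}$ of diameter $\leq 2r$, and on each piece apply the Majorizing Measure theorem to obtain a local admissible sequence $(\mathcal{B}^{(i)}_k)_{k \geq 0}$ with chaining sum $\lesssim w(A^{(i)}) \leq w_A(r)$, splicing them into a global admissible sequence. \emph{The main obstacle is this splicing step:} the direct concatenation $\mathcal{A}_{p+k} := \bigcup_i \mathcal{B}^{(i)}_k$ only yields $\gamma_2^{(p)}(A) \lesssim 2^{p/2}(r + w_A(r))$, since the global weights $2^{(p+k)/2} = 2^{p/2} \cdot 2^{k/2}$ amplify every local contribution by a spurious factor $2^{p/2}$. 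Closing this gap will require either distributing the covering across intermediate levels $j \leq p$ via a multi-scale refinement, or exploiting the restriction to fixed-point scales: under $r^2 \lesssim w_A(r) + \log N(A, r)$ and $2^p \asymp \log N(A, r)$, the AM-GM inequality absorbs $2^{p/2} r \leq (2^p + r^2)/2$ and the analogous $w_A$ term, giving the desired bound.

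Finally, \eqref{eq:mm-regret-fixed} follows from \eqref{eq:mm-regret-sum} by a short case analysis. If $p^*(A) = -\infty$, then $\gamma_2^{(p)}(A) < 2^p$ for all $p \geq 0$, and since $p \mapsto \gamma_2^{(p)}(A)$ is non-increasing, the infimum in \eqref{eq:mm-regret-sum} is comparable to $\gamma_2(A)$, attained at $p = 0$. If $p^*(A) \geq 0$ is finite, the choice $p = p^*(A) + 1$ yields the upper bound $\lesssim 2^{p^*(A)}$; the matching lower bound comes from splitting: for $p \leq p^*(A)$ one has $\gamma_2^{(p)}(A) \geq \gamma_2^{(p^*(A))}(A) \geq 2^{p^*(A)}$, while for $p > p^*(A)$ one has $(2^p - 1) \geq 2^{p^*(A)}$.
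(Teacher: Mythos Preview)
Your upper-bound argument and your derivation of \eqref{eq:mm-regret-fixed} from \eqref{eq:mm-regret-sum} are correct and follow the paper's route closely.

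For the lower bound in \eqref{eq:mm-regret-sum}, you correctly isolate the central obstacle---the $2^{p/2}$ amplification from the shifted splicing $\A_{p+k} := \bigcup_i \mathcal{B}_k^{(i)}$---but your AM--GM repair does not close it. At the fixed-point scale $r \asymp \max(r_*(A), \wt r(A))$ one has $2^p \asymp \log N(A,r) \lesssim r^2$ and $w_A(r) \lesssim r^2$, so the term $2^{p/2} w_A(r)$ is only $\lesssim r \cdot r^2 = r^3$, not $\lesssim r^2 \asymp \mmr(A)$; equivalently, AM--GM applied to $2^{p/2} w_A(r)$ produces an uncontrolled $w_A(r)^2$ term. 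The missing idea is to splice \emph{without} the level shift: set $\A_j := \bigcup_i \mathcal{B}_j^{(i)}$ for $j \geq p$, so that for $\theta \in A^{(i)}$ one has directly
\[
\sum_{j \geq p} 2^{j/2} \diam A_j(\theta) \leq \sum_{j \geq 0} 2^{j/2} \diam B_j^{(i)}(\theta) \leq c\, w_A(r)
\]
with no amplification. This sequence only satisfies $|\A_j| \leq N \cdot N_j \leq N_p N_j \leq N_{j+1}$; a single index shift (put $\wt\A_p := \{A^{(i)} : 1 \leq i \leq N\}$ and $\wt\A_j := \A_{j-1}$ for $j > p$) restores strict admissibility at the cost of a factor $\sqrt{2}$ and one extra term $2^{p/2} \diam(A^{(i)}) \leq 2 \cdot 2^{p/2} r$. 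This residual $2^{p/2} r$ is precisely what your AM--GM bound $2^{p/2} r \leq \tfrac12(2^p + r^2)$ absorbs, since $r^2 \lesssim \mmr(A)$ at the fixed-point scale. The paper's splicing is the same in spirit; the one-level index adjustment is what makes the bookkeeping go through.
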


The proof of Corollary~\ref{cor:majorizing-measure-regret} is provided in Section~\ref{sec:proof-chain-char}.
It is worth noting that, in order to obtain a ``monotone'' metric characterization like~\eqref{eq:mm-regret-sum}, it is important that the maximum $\max (\wt r (A), r_* (A) )$ is considered, rather than merely $r_* (A)$.
Indeed, the quantity $r_* (A)$ alone does \emph{not} necessarily decrease under contractions: if $A = \{ 0, \sqrt{2\pi} \, t \}$ with $t \geq 0$, then $r_* (A)^2 = t$ if $t \leq 1/(2\pi)$ but $r_* (A)^2 = 0$ for $t > 1/(2\pi)$.

Starting with Opper and Haussler~\cite{opper1999worst}, several works obtained upper bounds on the minimax regret for sequential probability assignment through chaining arguments~\cite{opper1999worst,cesabianchi2001logarithmic,rakhlin2015sequential,bilodeau2020tight}.
While related in spirit to the upper bound
in~\eqref{eq:mm-regret-sum},
these results differ from
Corollary~\ref{cor:majorizing-measure-regret}
in that they do not come with matching lower bounds.

While abstract characterizations such as Corollary~\ref{cor:majorizing-measure-regret} are %
of significant conceptual interest, in practice the fixed points $r_* (A)$ and $\wt r (A)$ (and thus $\mmr (A)$) are often evaluated by more direct means.
We refer to Section~\ref{sec:examples} for some examples.

\subsection{Statistical setting: metric characterization of the minimax redundancy}
\label{sec:metr-char-minim}

In this section, we consider the ``statistical'' version of the problem, as defined in Section~\ref{sec:coding-minimax-regret}.
Specifically, for a set $A \subset \R^n$,
we defined the \emph{minimax redundancy} $\mred (A) = \mred (\probas_A)$ as that of corresponding subset $\probas_A$ of the Gaussian model (see~\eqref{eq:gaussian-model-subset} for the definition of $\probas_A$ and~\eqref{eq:def-redundancy} for that of the minimax redundancy).

Our main result on the minimax redundancy is a characterization of this quantity in terms of covering numbers, which is sharp (up to universal constants) for general subsets of $\R^n$.
Recall the definition~\eqref{eq:fixed-point-covering} of the fixed point $\wt r (A)$ associated to global covering numbers.

\begin{theorem}[Characterization of the minimax redundancy]
  \label{thm:minimax-redundancy}
  For any subset $A \subset \R^n$,
  \begin{equation}
    \label{eq:minimax-redundancy-fixed}
    \frac{\wt r (A)^2}{300} %
    \leq \mred (A)
    \leq 2\, \wt r (A)^2
    \, .
  \end{equation}  
  Alternatively,
  \begin{equation}
    \label{eq:minimax-redundancy-sum}
    \frac{1}{600} \, \inf_{r > 0} \big\{ \log N (A, r) + r^2 \big\}
    \leq \mred (A)
    \leq \inf_{r > 0} \big\{ \log N (A, r) + r^2 \big\}
    \, .
  \end{equation}  
\end{theorem}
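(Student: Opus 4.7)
My plan separates the two directions of the theorem, noting first that the forms \eqref{eq:minimax-redundancy-fixed} and \eqref{eq:minimax-redundancy-sum} are equivalent up to a factor $2$ in the constants. Indeed, splitting the infimum at $r = \wt r(A)$ and using monotonicity of $\log N(A,\cdot)$ gives $\wt r(A)^2 \leq \inf_{r > 0}\{\log N(A,r) + r^2\} \leq 2\,\wt r(A)^2$, so it suffices to prove the sum form.

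For the upper bound, I would use the classical mixture-over-a-net construction. Fix $r > 0$, let $\{\theta_1,\dots,\theta_N\}$ be an $r$-cover of $A$ with $N = N(A,r)$, and set $q = N^{-1}\sum_{i=1}^N p_{\theta_i}$. For any $\theta \in A$, picking $\theta_i$ with $\|\theta - \theta_i\| \leq r$ gives $q(y) \geq p_{\theta_i}(y)/N$, whence
\[ \kll{p_\theta}{q} \leq \log N + \kll{p_\theta}{p_{\theta_i}} = \log N(A,r) + \tfrac{1}{2}\|\theta-\theta_i\|^2 \leq \log N(A,r) + r^2/2. \]
Taking the supremum over $\theta \in A$ and the infimum over $r > 0$, together with $r^2/2 \leq r^2$, yields the upper bound.

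For the lower bound, I would exploit the variational identity $\mred(\probas_A) = \sup_\pi I(\Theta; Y_\Theta)$ (supremum over priors on $A$, with $Y_\Theta \sim \gaussdist(\Theta, I_n)$) combined with Fano's inequality. Taking $\pi$ uniform on a maximal $\delta$-separated packing $\{\theta_1,\dots,\theta_M\}$ of $A$, one has $M \geq N(A,\delta)$, while a standard Gaussian tail bound yields $P_e \leq (M-1)e^{-\delta^2/8}$ for the MAP decoder; Fano then gives $I(\Theta;Y) \geq (1-P_e)\log M - \log 2$.

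The main obstacle is the choice of the scale $\delta$. One needs simultaneously $\delta^2 \gtrsim \log M$ (so that $P_e \leq 1/2$ and Fano is nontrivial) and $\log M \gtrsim \wt r(A)^2$ (so that the resulting bound matches the target), and these constraints are in exact tension at $\delta = \wt r(A)$, where by definition $\log N(A,\wt r) \geq \wt r^2$. I would choose $\delta$ of order $\wt r(A)$ but slightly larger, say $\delta = c_0\,\wt r(A)$ with $c_0$ a sufficiently large absolute constant; the difficulty here is that $\log N(A,\cdot)$ may drop sharply across scales, so to maintain a packing of size $e^{c\wt r^2}$ at scale $\delta$ one exploits the packing-covering inequalities $N(A,\delta) \leq P(A,\delta) \leq N(A,\delta/2)$, possibly combined with a dyadic cascade over scales to locate a range where Fano applies. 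The constant $1/300$ absorbs the cumulative slack from Fano, the Gaussian tail estimate, and the scale-matching. Finally, observe that by $\mmr(A) \geq \mred(A)$ this argument simultaneously furnishes the term $\wt r(A)^2/300$ appearing in the lower bound of Theorem~\ref{thm:minimax-regret-metric}.
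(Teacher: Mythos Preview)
Your upper bound and the equivalence between~\eqref{eq:minimax-redundancy-fixed} and~\eqref{eq:minimax-redundancy-sum} match the paper exactly. Your lower-bound route via $\mred(A)=\sup_\pi I(\Theta;Y)$ together with Fano is a legitimate alternative to the paper's direct computation (Lemma~\ref{lem:lower-separated}, which shows $\frac{1}{N}\sum_i \kll{p_{\theta_i}}{\bar p_N}\geq \log N-3$ whenever $r^2\geq 32\log N$); the two arguments are essentially interchangeable and require the same separation condition $\delta^2\gtrsim \log M$.

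However, your proposed resolution of the tension is in the wrong direction. You suggest taking $\delta=c_0\,\wt r(A)$ with $c_0$ large and using the \emph{maximal} $\delta$-packing. But at scales $\delta>\wt r(A)$ the packing may collapse entirely: if $A=\wt r\, B_2^n$ (with $n$ large so that $\wt r(A)\approx\wt r$), then $N_p(A,\delta)=1$ for any $\delta>2\wt r$, and Fano yields nothing. The packing--covering inequalities you invoke do not rescue this. The paper's fix is the opposite: stay at scale $r$ just below $\wt r_p(A)$, where a packing of size $\exp(r^2)$ is \emph{guaranteed}, and then extract a \emph{sub}-packing of controlled size $N$ with $\log N\leq r^2/32$ (so the separation condition holds) and $\log N\geq r^2/33$ (so the conclusion is nontrivial). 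With that choice your Fano argument goes through verbatim.

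A second gap is the regime $\wt r(A)\lesssim 1$: there $\log N-3$ (or $(1-P_e)\log M-\log 2$) is negative, so the argument above gives nothing. The paper handles this separately by a two-point lower bound, using $\mred(A)\geq \kll{p_0}{\tfrac{1}{2}(p_0+p_\rho)}\geq \tfrac{1}{2}\tv{p_0-p_\rho}^2$ via Pinsker, together with an explicit lower bound on the Gaussian total variation. Without this case split you cannot reach the stated constant $1/300$ uniformly over all $A$.
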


The proof of Theorem~\ref{thm:minimax-redundancy} (together with that of Theorem~\ref{thm:minimax-regret-metric}) is provided in Section~\ref{sec:proof-theorems}.
We note that the upper bound in~\eqref{eq:minimax-redundancy-sum} follows from a standard argument, namely aggregation over a net; this is the approach of the classic work of Yang and Barron~\cite{yang1999information}.
The contribution in Theorem~\ref{thm:minimax-redundancy} is thus in establishing a matching lower bound.

It is instructive to compare the characterization of the minimax redundancy from Theorem~\ref{thm:minimax-redundancy} to the minimax mean-squared error of Gaussian statistical estimation.
Theorem~\ref{thm:minimax-redundancy} states that the redundancy is characterized, for a general set $A$, by \emph{global} covering numbers.
In contrast, it is shown in~\cite{neykov2022minimax} (see also~\cite{mendelson2017local} for related results in the random-design setting, and~\cite{lecam1973convergence,ibragimov1981estimation,birge1983approximation} for early work in statistics involving local entropy)
that the minimax rate of estimation over a convex body $K$ is characterized by \emph{local} covering numbers, namely $N_{\mathsf{loc}} (K, r) = \sup_{\theta \in K} N \big( K \cap B (\theta, 2 r), r \big)$.
The essential difference between the two problems is that the first one is \emph{sequential}, in that it involves making successive predictions as observations arrive one by one, while the second problem is non-sequential, in that an estimator of the true mean vector is produced with the knowledge of all observations.

Theorem~\ref{thm:minimax-redundancy} together with prior results on estimation suggest the following
principle: optimal rates for sequential estimation problems are governed by global covering numbers, while optimal rates for non-sequential estimation are governed by local covering numbers.
A very rough %
intuitive explanation for this phenomenon is that sequential prediction involves making predictions for different sample sizes (numbers of prior observations), which corresponds to summing individual errors (governed by local covering numbers) at different scales of localization; while on the other hand, global covering numbers behave roughly like a sum of local covering numbers across scales.
In fact, neither of these approximations is fully accurate in general, essentially because the maximum of a sum may be smaller than %
the sum of the maxima; thankfully, this inaccuracy equally affects both terms of the equivalence.

It is also worth comparing the minimax redundancy $\mred (A)$ (corresponding to the average case or the ``statistical'' setting), scaling as $\wt r (A)^2$, to the minimax regret (corresponding to the worst-case or ``deterministic'' setting) $\mmr (A)$ scaling as $\max (r_* (A)^2, \wt r (A)^2)$.
In cases where $r_* (A) \lesssim \wt r (A)$, the optimal errors are of the same order in both settings; while if $r_* (A) \gg \wt r (A)$ the deterministic variant of the prediction/coding problem is more difficult than the statistical one.
The example of ellipsoids (see Section~\ref{sec:case-ellipsoids}) shows that both configurations $r_* (A) \ll \wt r (A)$ and $\wt r (A) \ll r_* (A)$ are possible for different sets $A$.

\paragraph{Redundancy and noise correlation.}
While Theorem~\ref{thm:minimax-redundancy} provides a statistical interpretation (in terms of redundancy) of the %
  term $\wt r (A)^2$ appearing in the minimax regret $\mmr (A)$, the remaining term $r_* (A)^2$ admits a different interpretation, in terms of ``correlation with the noise''.

Specifically, consider the general setting of sequential probability assignment described in Sections~\ref{sec:introduction} and~\ref{sec:coding-minimax-regret}.
  Given a set $\probas$ of probability densities on $\Y^n$ with respect to $\mu^n$ and a density $q \in \probas$, we define the \emph{noise correlation of $\probas$ under $q$} as
  \begin{equation*}
    \corr (q, \probas)
    = \E_{Y \sim q} \Big[ \ell (q, Y) - \inf_{p \in \probas} \ell (p, Y) \Big]
    = \int_{\Y^n} \log \bigg( \frac{\sup_{p \in \probas} p (y)}{q (y)} \bigg) q (y) \mu^n (\di y)
    \, .
  \end{equation*}
  The interpretation of this quantity is that, even when the data sequence $Y$ is drawn from a density $q$ belonging to the model $\probas$,
  the density $p \in \probas$ that best fits the data (\ie, achieves the smallest error $\ell (p, Y)$) is generally not the true density $q$, due to the randomness of $Y$---that is, the presence of ``noise''.
  The noise correlation term $\corr (q, \probas)$ then measures the (expected) gap between the loss of $q$ and that of the best-fitting density within $\probas$.

Now, the \emph{noise correlation of $\probas$} is defined as the maximum noise correlation over the densities
of $\probas$,
namely
\begin{equation*}
  \corr (\probas)
  = \sup_{q \in \probas} \corr (q, \probas)
  \, .
\end{equation*}
An important feature of this definition is that the supremum is taken only over densities $q \in \probas$, rather than over all probability densities with respect to $\mu^n$.
Indeed, allowing $q$ to be unrestricted would lead to a maximum correlation equal to the minimax regret $\mmr (\probas)$, by taking $q \propto \sup_{p \in \probas} p$ to be the Shtarkov density (discussed following~\eqref{eq:regret-shtarkov}, whenever %
$\mmr (\probas) < \infty$).

The relevance of noise correlation
stems from the fact that it provides a natural lower bound on the minimax regret: %
\begin{equation}
  \label{eq:correlation-regret}
  \corr (\probas)
  \leq \mmr (\probas)
  \, .
\end{equation}
To see this, let $q \in \probas$ be arbitrary.
Note that for any density $q'$ with respect to $\mu^n$, equation~\eqref{eq:excess-risk-kl} shows that $\E_{Y \sim q} [ \ell(q, Y) ] \leq \E_{Y \sim q} [ \ell (q', Y) ]$, hence
\begin{equation*}
  \corr (q, \probas)
  \leq \E_{Y \sim q} \Big[ \ell (q', Y) - \inf_{p \in \probas} \ell (p, Y) \Big]
  \leq \sup_{y \in \Y^n} \Big[ \ell (q', y) - \inf_{p \in \probas} \ell (p, y) \Big]
  \, ;
\end{equation*}
taking the
supremum over $q \in \probas$ and the infimum over $q'$
in this inequality yields~\eqref{eq:correlation-regret}.

With these general prerequisites in place, we return to the Gaussian setting.
For a general subset $A \subset \R^n$, we let $\corr (A) = \corr (\probas_A)$.
The next proposition %
relates the noise correlation $\corr (A)$ to the local mean width term $r_* (A)^2$.

\begin{proposition}
  \label{prop:correlation-local}
  For any nonempty subset $A \subset \R^n$,
  \begin{equation}
    \label{eq:corr-upper-lower}
    \frac{r_*(A)^2}{2}
    \leq \corr (A)
    \leq 36 \, r_* (A/12)^2 + 65
    \, .
  \end{equation}
  In addition, if $\diam (A) \leq 1/\sqrt{2\pi}$, then $\corr (A) \leq r_* (A)^2$.
  Hence, for any convex set $K \subset \R^n$,
  \begin{equation}
    \label{eq:corr-upper-lower-convex}
    \frac{r_* (K)^2}{2}
    \leq \corr (K)
    \leq 140\, r_* (K)^2
    \, .
  \end{equation}
\end{proposition}
The proof of Proposition~\ref{prop:correlation-local} (which mainly relies on Gaussian concentration) is provided in Section~\ref{sec:proof-correlation-local}.
In fact, for the application below, we only need the lower bound, whose proof is very short; we nonetheless include the upper bound to confirm the close link between $\corr (A)$ and $r_* (A)^2$.
Putting together the upper bound in Theorem~\ref{thm:minimax-regret-metric} with the lower bounds in Theorem~\ref{thm:minimax-redundancy} and Proposition~\ref{prop:correlation-local} leads to the following
result, 
which can be seen as converse to the
two regret lower bounds in terms of redundancy~\eqref{eq:redundancy-regret} and noise correlation~\eqref{eq:correlation-regret}, respectively.

\begin{corollary}
  \label{cor:redundancy-correlation}
  For any nonempty subset $A \subset \R^n$, one has
  \begin{equation}
    \label{eq:redundancy-correlation}
    \max \big\{ \mred (A), \corr (A) \big\}
    \leq \mmr (A)
    \leq 600 \max \big\{ \mred (A), \corr (A) \big\}
    \, .
  \end{equation}
\end{corollary}

Corollary~\ref{cor:redundancy-correlation} shows that, for a general subset $\probas_A$ of the Gaussian model, the only two obstructions to achieving a small worst-case regret are the redundancy from the statistical setting, and the presence of correlations with the noise.

\begin{proof}
  By the upper bound~\eqref{eq:minimax-regret-fixed} from Theorem~\ref{thm:minimax-regret-metric}, followed by the lower bounds~\eqref{eq:minimax-redundancy-fixed} from Theorem~\ref{thm:minimax-redundancy} and~\eqref{eq:corr-upper-lower} from Proposition~\ref{prop:correlation-local}, we have
  \begin{equation*}
    \mmr (A)
    \leq 2 \max \big\{ \wt r (A)^2, r_* (A)^2 \big\}
    \leq 2 \max \big\{ 300 \, \mred (A), 2\, \corr (A) \big\}
    \, ,
  \end{equation*}
  which implies the upper bound in~\eqref{eq:redundancy-correlation}.
  The lower bound follows from~\eqref{eq:redundancy-regret} and~\eqref{eq:correlation-regret}.
\end{proof}

\subsection{Metric estimates for the intrinsic volume sequence
}
\label{sec:isom-char-wills}

At this point, we have obtained two different characterizations of the minimax regret $\mmr (K)$, when $K \subset \R^n$ is a convex body.
The first one (Theorem~\ref{thm:minimax-wills-volumes}) expresses it in terms of intrinsic volumes, while the second one (Theorem~\ref{thm:minimax-regret-metric}) characterizes it in terms of local Gaussian widths and global covering numbers.

Relating these two characterizations leads to the following result, of purely geometric nature.

\begin{corollary}
  \label{cor:metric-intrinsic-volumes}
  For any convex body $K \subset \R^n$, one has
  \begin{equation}
    \label{eq:isomorphic-intrinsic}
    \frac{1}{600} \inf_{r > 0} \Big\{ \log N (K, r) + w_K (r) \Big\}
    \leq \log \bigg( \sum_{j=0}^n V_j (K) \bigg)
    \leq \sqrt{2\pi} \inf_{r > 0} \Big\{ \log N (K, r) + w_K (r) \Big\}
    \, .
  \end{equation}
\end{corollary}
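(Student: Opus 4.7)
The plan is to combine the two characterizations of the minimax regret established earlier in the excerpt: the exact identity from Theorem~\ref{thm:minimax-wills-volumes} expressing $\mmr(K)$ as $\log \sum_{j=0}^n V_j(K/\sqrt{2\pi})$, and the isomorphic characterization from Theorem~\ref{thm:minimax-regret-metric} expressing $\mmr(A)$ in terms of local Gaussian widths and covering numbers. Since the right-hand side of~\eqref{eq:isomorphic-intrinsic} involves $V_j(K)$ rather than $V_j(K/\sqrt{2\pi})$, the first step is to apply Theorem~\ref{thm:minimax-wills-volumes} to the dilated body $\sqrt{2\pi}\,K$, yielding
\begin{equation*}
  \mmr(\sqrt{2\pi}\,K) \;=\; \log \sum_{j=0}^n V_j(K).
\end{equation*}

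Next, I would apply Theorem~\ref{thm:minimax-regret-metric} to the set $A = \sqrt{2\pi}\,K \subset \R^n$, which gives
\begin{equation*}
  \tfrac{1}{600}\inf_{r>0}\bigl\{w_{\sqrt{2\pi}K}(r) + \log N(\sqrt{2\pi}K, r)\bigr\} \;\leq\; \log\sum_{j=0}^n V_j(K) \;\leq\; \inf_{r>0}\bigl\{w_{\sqrt{2\pi}K}(r) + \log N(\sqrt{2\pi}K, r)\bigr\}.
\end{equation*}
It then remains to translate these quantities back to the undilated body $K$ using scaling. The key elementary identities are: for every $\lambda>0$, $N(\lambda A,\lambda r)=N(A,r)$ (dilating both the set and the radius preserves the number of balls), and $w_{\lambda A}(\lambda r) = \lambda\, w_A(r)$ (since $(\lambda A)\cap B(\lambda\theta,\lambda r)=\lambda (A\cap B(\theta,r))$ and the Gaussian width is $1$-homogeneous). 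Performing the change of variables $r = \sqrt{2\pi}\,s$ converts the infimum on the right into
\begin{equation*}
  \inf_{s>0}\bigl\{\sqrt{2\pi}\,w_K(s) + \log N(K, s)\bigr\}.
\end{equation*}

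Finally, I would sandwich this quantity between the two expressions appearing in~\eqref{eq:isomorphic-intrinsic}. For the upper bound, since $\sqrt{2\pi}>1$ one has $\sqrt{2\pi}\,w_K(s)+\log N(K,s) \leq \sqrt{2\pi}(w_K(s)+\log N(K,s))$, giving the desired upper estimate with constant $\sqrt{2\pi}$. For the lower bound, the trivial inequality $\sqrt{2\pi}\,w_K(s)\geq w_K(s)$ gives $\sqrt{2\pi}\,w_K(s)+\log N(K,s) \geq w_K(s)+\log N(K,s)$, which combined with the factor $\tfrac{1}{600}$ yields the left inequality in~\eqref{eq:isomorphic-intrinsic}. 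There is no genuine obstacle here beyond careful bookkeeping of the $\sqrt{2\pi}$ dilation and verifying the two scaling identities for local widths and covering numbers; the substance of the corollary lies entirely in the two theorems being chained together.
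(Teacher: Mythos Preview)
Your proposal is correct and follows essentially the same approach as the paper: combine Theorem~\ref{thm:minimax-wills-volumes} and Theorem~\ref{thm:minimax-regret-metric} via the scaling identities $N(\lambda K,\lambda r)=N(K,r)$ and $w_{\lambda K}(\lambda r)=\lambda\,w_K(r)$. The only cosmetic difference is that the paper treats the two inequalities asymmetrically---it applies Theorem~\ref{thm:minimax-regret-metric} to $K$ for the lower bound (then uses $V_j(K/\sqrt{2\pi})\le V_j(K)$) and to $\sqrt{2\pi}\,K$ for the upper bound---whereas you apply both bounds to $\sqrt{2\pi}\,K$ and absorb the factor afterward; your version is arguably tidier.
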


\begin{proof}
  We plug the expression~\eqref{eq:minimax-intrinsic-volumes} of $\mmr (K)$ (for the lower bound) and $\mmr (\sqrt{2\pi} K)$ (for the upper bound) in terms of intrinsic volumes into the inequalities~\eqref{eq:minimax-regret-sum}.
  The lower bound follows from the fact that $V_j (K/\sqrt{2\pi}) \leq V_j (K)$, while the upper bound uses that if $r' = \lambda r$ for $\lambda = \sqrt{2\pi} \geq 1$, since $N (\lambda K, r') = N (K, r) \leq \lambda N (K, r)$ and $w_{\lambda K} (r') = \lambda w_K (r)$,
  \begin{equation*}
    \inf_{r' > 0} \big\{ \log N (\lambda K, r') + w_{\lambda K} (r') \big\}
    \leq \lambda \inf_{r > 0} \big\{ \log N (K, r) + w_{K} (r) \big\}
    \, . \qedhere
  \end{equation*}
\end{proof}

Corollary~\ref{cor:metric-intrinsic-volumes} relates two types of quantities, both belonging to convex geometry in the broad sense but generally associated to different branches of the topic.
Intrinsic volumes are at the core the so-called Brunn-Minkowski theory~\cite{schneider2013convex} of classical convex geometry (as well as stochastic geometry~\cite{schneider2008stochastic,klain1997introduction}, through the kinematic formulae), which often involves an exact ``isometric'' viewpoint.
In contrast, quantities such as %
Gaussian widths and covering numbers play an important role in the %
asymptotic (high-dimensional) theory~\cite{artstein2015asymptotic,pisier1999volume}, %
which is often of approximate ``isomorphic'' character.

Corollary~\ref{cor:metric-intrinsic-volumes} gives an isomorphic characterization (that is, a characterization up to constant factors) of the Wills functional of a convex body.
That the logarithm is the adequate normalization of the Wills functional to approximate it up to constant factors can be seen from the fact that $W (\frac{\theta}{\sqrt{n}} B_2^n) \to e^{\sqrt{2\pi} \theta}$ for every $\theta \geq 0$ as $n \to \infty$ (see~\eqref{eq:wills-balls}).
Of course, this normalization is also suggested by
the statistical interpretation of $\mmr$ in connection with universal coding.

The characterization~\eqref{eq:isomorphic-intrinsic} provides precise quantitative information on the sequence of intrinsic volumes of a convex body.
Indeed, by replacing $K$ by a dilation $\lambda K$ with $\lambda \geq 0$, it characterizes up to constants the log-Laplace transform $\lambda \mapsto \log (\sum_{j=0}^n V_j (K) \lambda^j)$ of this sequence.

With the background of Corollary~\ref{cor:metric-intrinsic-volumes} in mind, a possible interpretation of Corollary~\ref{cor:majorizing-measure-regret}
(characterizing $\mmr (A)$ through truncated generic chaining)
is that it constitutes  a natural %
variation on the Majorizing Measure characterization of Gaussian width~\eqref{eq:majorizing-measure}, %
as we argue now.

First, a tempting interpretation of the Majorizing Measure theorem is that it
characterizes the first intrinsic volume $V_1 (K)$ of convex bodies $K$ up to constants.
The rationale for this interpretation is that for a compact set $A \subset \R^n$, the convex body $K = \conv (A)$ satisfies $w (A) = w (K) = V_1 (K)/\sqrt{2\pi}$, so one may in principle restrict the study of the Gaussian width
of general sets to that of convex bodies.
Through this lens, a natural question would be to extend this characterization to higher-order intrinsic volumes of convex bodies; that is, to estimate $V_j (K)^{1/j}$ up to constant factors for $j \geq 2$.

However, this interpretation is not fully satisfactory.
Indeed, the Majorizing Measure theorem~\eqref{eq:majorizing-measure} relates $w (A)$ to metric properties of the set $A$ itself, rather than of its convex hull $\conv (A)$.
In particular, it contains the nontrivial statement that $\gamma_2 (A) \asymp \gamma_2 (\conv (A))$ for any $A \subset \ell^2$ (which remains to be explained from a purely geometric perspective, see~\cite[pp.~64-65 and Research Problems~2.11.2--4]{talagrand2021upper}); such a statement would be lost by restricting to convex bodies.
In addition, convexity does not play any special role in the proof of the Majorizing Measure theorem, or even in the definition of the $\gamma_2$ functional which is purely metric.
Finally, the quantity $V_j (K)^{1/j}$ is highly sensitive to the exact dimension of $K$, and can be affected by ``mild'' operations such as addition of sets: indeed, one may have $V_j (K)^{1/j} = 0$, but $V_j (K \times K) > 0$ (where $K \times K = (K \times \{0\})+(\{0\}\times K)$).
This suggests that a characterization of $V_j(K)^{1/j}$ in terms of objects such as admissible sequences of partitions is unlikely to be at all possible.

In contrast, the Wills functional appears to be a more suitable and regular quantity than individual intrinsic volumes
for a metric study of higher-order intrinsic volumes.
As noted above, it naturally extends to nonconvex sets (the extension to nonconvex sets is not gratuitous, since $\mmr (\conv (A)) \neq \mmr (A)$ in general), and as will be seen in Section~\ref{sec:conc-prop-noise} it interacts well with the additive structure of the space.
From this perspective, instead of individual normalized intrinsic volumes $V_j (K)^{1/j}$ (as a function of $j$), one considers the regularized quantity $\log (\sum_{j=0}^n V_j (K) \lambda^j)$ as a function of $\lambda \in \R^+$.
Corollary~\ref{cor:majorizing-measure-regret} can therefore be understood as a suitable ``generic chaining''-type characterization applying to higher-order intrinsic volumes, %
expressed in terms of this
regularized quantity.

\section{Interplay with additive structure and noise dependence
}
\label{sec:conc-prop-noise}

The quantitative estimates from the previous section are of isomorphic nature, in that they hold up to universal constant factors.
In this section, we complement these results by studying some exact ``isometric'' inequalities satisfied by the functional $\mmr (A)$.
While versions of those inequalities with additional constant factors could be deduced from the results of Section~\ref{sec:comp-metric-estimates}, our aim is to obtain statements with optimal constants, allowing to identify some exact properties such as monotonicity or convexity.

Specifically, this section is dedicated to the study of the interaction between the Wills functional and the affine (additive) structure of $\ell^2$.
We will successively consider the behavior of $\mmr$ under convex combinations%
, sums and differences%
, and dilations %
of sets.
The last operation has a natural statistical interpretation, which is discussed in Section~\ref{sec:depend-noise-sample}.

\subsection{Convex combinations
}
\label{sec:concavity-property}

We first consider the behavior of $\mmr$ under convex combinations of sets.
The main ``concavity'' property of the Wills functional was identified by Alonso-Gutiérrez, Hernández Cifre, and Yepes Nicolás~\cite[Theorem~1.4]{alonso2021further}, who showed that for any convex bodies $K, L \subset \R^n$ and $\lambda \in (0, 1)$,
\begin{equation}
  \label{eq:brunn-minkowski-wills}
  W (\lambda K + (1-\lambda) L)
  \geq W (K)^\lambda W (L)^{1-\lambda}
  \, .
\end{equation}
The proof of this inequality only relies on the integral expression of the Wills functional (rather than its connection with intrinsic volumes), and does not require convexity.
In order to confirm this fact and for the reader's convenience, we recall the short
proof of this inequality due to~\cite{alonso2021further}, which relies on the Prékopa-Leindler inequality.

\begin{theorem}
  \label{thm:concavity}
  For every nonempty subsets $A, B \subset \R^n$ and $\lambda \in [0, 1]$, one has
  \begin{equation}
    \label{eq:concavity}
    \mmr \big( \lambda A + (1- \lambda) B \big) \geq \lambda \mmr (A) + (1-\lambda) \mmr (B)
    \, .
  \end{equation}
\end{theorem}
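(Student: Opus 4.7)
The strategy is to rewrite the claim in multiplicative form and reduce it to a direct application of the Prékopa--Leindler inequality to the Gaussian integrand appearing in the integral representation of the Wills functional.

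First, I would use the identity $\mmr(C) = \log\bigl((2\pi)^{-n/2}\int_{\R^n} e^{-\dist^2(x,C)/2}\,\di x\bigr)$ from Theorem~\ref{thm:minimax-wills-volumes}, so that~\eqref{eq:concavity} is equivalent to
\[
  \int_{\R^n} e^{-\dist^2(w,\,\lambda A+(1-\lambda)B)/2}\,\di w
  \;\geq\;
  \Bigl(\int_{\R^n} e^{-\dist^2(u,A)/2}\,\di u\Bigr)^{\!\lambda}
  \Bigl(\int_{\R^n} e^{-\dist^2(v,B)/2}\,\di v\Bigr)^{\!1-\lambda},
\]
after using $(2\pi)^{n/2} = (2\pi)^{n\lambda/2}(2\pi)^{n(1-\lambda)/2}$ to absorb the normalization symmetrically.

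Next, the geometric core of the argument is the pointwise inequality, for every $u,v\in\R^n$,
\[
  \dist^2\bigl(\lambda u+(1-\lambda)v,\;\lambda A+(1-\lambda)B\bigr)
  \;\leq\;
  \lambda\,\dist^2(u,A)+(1-\lambda)\,\dist^2(v,B).
\]
To see this, for any $\eps>0$ pick $a\in A$ and $b\in B$ with $\norm{u-a}\leq \dist(u,A)+\eps$ and $\norm{v-b}\leq \dist(v,B)+\eps$; since $\lambda a+(1-\lambda)b\in\lambda A+(1-\lambda)B$, the triangle inequality applied to the convex combination gives $\dist(\lambda u+(1-\lambda)v,\lambda A+(1-\lambda)B)\leq \lambda\norm{u-a}+(1-\lambda)\norm{v-b}$; letting $\eps\to 0$ yields the linear bound, and convexity of $t\mapsto t^2$ (equivalently, Jensen's inequality) upgrades it to the squared version.

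Setting $f(u)=e^{-\dist^2(u,A)/2}$, $g(v)=e^{-\dist^2(v,B)/2}$, and $h(w)=e^{-\dist^2(w,\,\lambda A+(1-\lambda)B)/2}$, exponentiating the previous display yields the Prékopa--Leindler hypothesis $h(\lambda u+(1-\lambda)v)\geq f(u)^\lambda g(v)^{1-\lambda}$ for every $u,v\in\R^n$. The Prékopa--Leindler inequality then gives $\int h\geq(\int f)^\lambda(\int g)^{1-\lambda}$, which is exactly the reformulation above. Taking logarithms concludes.

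There is no genuine obstacle here: the only point requiring a small amount of care is the approximation argument for the distance inequality when $A$ or $B$ is not closed, which is handled either by the $\eps$-argument above or by observing beforehand (as in Proposition~\ref{prop:properties-minimax}) that $\mmr(A)=\mmr(\ol A)$ and that $\ol{\lambda A+(1-\lambda)B}\supset \lambda\ol A+(1-\lambda)\ol B$, reducing to the closed case.
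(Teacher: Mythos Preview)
Your proof is correct and follows essentially the same route as the paper's: both reduce the claim to the Prékopa--Leindler inequality applied to $f=e^{-\dist^2(\cdot,A)/2}$, $g=e^{-\dist^2(\cdot,B)/2}$, $h=e^{-\dist^2(\cdot,\lambda A+(1-\lambda)B)/2}$, after checking the pointwise hypothesis via convexity of the squared norm. The only cosmetic difference is that the paper states the distance inequality directly from convexity of $\norm{\cdot}^2$, whereas you pass through the linear bound first and then square via Jensen; these are equivalent.
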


\begin{proof}%
  Define the functions $f, g, h : \R^n \to \R^+$ by
  $f (x) = e^{-\dist^2 (x, A)/2}$, $g (y) = e^{-\dist^2 (y, B)/2}$ and $h (z) = e^{-\dist^2 (z, \lambda A + (1-\lambda) B)/2}$.
  For any $x, y \in \R^n$, by convexity of the squared norm one has
  \begin{equation*}
    \dist^2 (\lambda x + (1-\lambda) y, \lambda A + (1-\lambda) B)
    \leq \lambda \, \dist^2 (x, A) + (1 - \lambda) \, \dist^2 (y, B)
    \, ,
  \end{equation*}
  namely $h ( \lambda x + (1-\lambda) y ) \geq f (x)^\lambda g (y)^{1-\lambda}$.
  By the Prékopa-Leindler inequality (\eg, \cite[Theorem~7.1]{gardner2002brunn}), 
  this gives
  \begin{equation*}
    \int_{\R^n} h
    \geq \bigg( \int_{\R^n} f \bigg)^{\lambda} \bigg( \int_{\R^n} g \bigg)^{1-\lambda}
    \, .
  \end{equation*}
  Taking logarithms and recalling the expression~\eqref{eq:minimax-wills-integral} of $\mmr$,
  this establishes~\eqref{eq:concavity}.
\end{proof}

An application of the concavity property
is provided %
in Appendix~\ref{sec:short-proof-mcmull}.

\subsection{Sum and difference bodies}
\label{sec:sum-diff-bodi}

We now turn to the behavior of $\mmr$ under Minkowski sums of sets.

The comparison inequality (Theorem~\ref{thm:comparison-wills}) and the concavity property (Theorem~\ref{thm:concavity}, due to \cite{alonso2021further}) are complementary properties of the Wills functional and minimax regret.
Combining them gives the following bounds for sum and difference bodies, allowing one to restrict to origin-symmetric sets (up to constants).

\begin{proposition}
  \label{prop:sum-difference-bodies}
  For every nonempty subsets $A, B \subset \ell^2$, we have that
  \begin{equation}
    \label{eq:inequality-sum-bodies}
    \mmr (A + B)
    \leq \sqrt{2} \big( \mmr (A) + \mmr (B) \big)
    \, .
  \end{equation}
  In particular, for any $A \subset \ell^2$,
  \begin{equation}
    \label{eq:inequality-difference-bodies}
    \mmr (A - A)
    \leq 2 \sqrt{2} \, \mmr (A)
    \, .
  \end{equation}
\end{proposition}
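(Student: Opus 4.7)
The plan is to combine the two previous results of this section, Theorem~\ref{thm:comparison-wills} (comparison under contractions) and Theorem~\ref{thm:concavity} (concavity along convex combinations), together with the tensorization and isometry-invariance from Proposition~\ref{prop:properties-minimax}. The guiding observation is that the natural summation map $(a,b)\mapsto a+b$ from $A\times B$ to $A+B$ is not a contraction but only $\sqrt{2}$-Lipschitz, so a rescaling by $1/\sqrt{2}$ is needed to invoke Theorem~\ref{thm:comparison-wills}, and a second factor of $\sqrt{2}$ must be recovered afterwards via concavity.

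First I would define $\psi\colon A\times B\to \frac{1}{\sqrt{2}}(A+B)$ by $\psi(a,b)=\frac{1}{\sqrt{2}}(a+b)$. A one-line triangle/parallelogram estimate gives, for any $(a,b),(a',b')\in A\times B$,
\begin{equation*}
  \Norm{\psi(a,b)-\psi(a',b')}^2 \;=\; \tfrac{1}{2}\Norm{(a-a')+(b-b')}^2 \;\leq\; \Norm{a-a'}^2+\Norm{b-b'}^2,
\end{equation*}
so $\psi$ is a surjective contraction from $A\times B$ onto $\frac{1}{\sqrt{2}}(A+B)$ (equipped with the Euclidean metric inherited from the product $\ell^2\oplus\ell^2$). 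Theorem~\ref{thm:comparison-wills} and Proposition~\ref{prop:properties-minimax}(4) then give
\begin{equation*}
  \mmr\bigl(\tfrac{1}{\sqrt{2}}(A+B)\bigr) \;\leq\; \mmr(A\times B) \;=\; \mmr(A)+\mmr(B).
\end{equation*}

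Next I would apply Theorem~\ref{thm:concavity} with $\lambda=\frac{1}{\sqrt{2}}$ to the pair $(A+B,\{0\})$, noting that $\mmr(\{0\})=0$ from the integral expression~\eqref{eq:minimax-wills-integral}. This yields
\begin{equation*}
  \mmr\bigl(\tfrac{1}{\sqrt{2}}(A+B)\bigr) \;=\; \mmr\bigl(\tfrac{1}{\sqrt{2}}(A+B)+(1-\tfrac{1}{\sqrt{2}})\{0\}\bigr) \;\geq\; \tfrac{1}{\sqrt{2}}\,\mmr(A+B).
\end{equation*}
Chaining the two inequalities gives $\mmr(A+B)\leq \sqrt{2}\bigl(\mmr(A)+\mmr(B)\bigr)$, which is~\eqref{eq:inequality-sum-bodies}. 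For~\eqref{eq:inequality-difference-bodies}, take $B=-A$ and observe that $\mmr(-A)=\mmr(A)$ by Proposition~\ref{prop:properties-minimax}(5) applied to the isometry $x\mapsto -x$.

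I do not anticipate a real obstacle: the argument is essentially an exercise in juggling scale factors. The only subtlety is conceptual, namely recognizing that the loss of $\sqrt{2}$ in the statement splits symmetrically into one factor coming from each of the two tools (one from the Lipschitz constant of the summation map, one from the dilation factor in the concavity step). Technically, one may wish to justify the use of Theorem~\ref{thm:concavity} in the infinite-dimensional setting by reducing to finite-dimensional subsets through the monotone-convergence definition~\eqref{eq:def-infinite-dim}, but this is routine.
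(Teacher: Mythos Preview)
Your proof is correct and follows essentially the same approach as the paper: both combine the $1$-Lipschitz map $(a,b)\mapsto (a+b)/\sqrt{2}$ with Theorem~\ref{thm:comparison-wills}, the product additivity from Proposition~\ref{prop:properties-minimax}, and the concavity of Theorem~\ref{thm:concavity} to handle the leftover dilation by $\sqrt{2}$. The only cosmetic difference is where the rescaling is placed---the paper scales the domain up to $\sqrt{2}A\times\sqrt{2}B$ and then bounds $\mmr(\sqrt{2}A)\leq\sqrt{2}\,\mmr(A)$, whereas you scale the image down to $\tfrac{1}{\sqrt{2}}(A+B)$ and then bound $\mmr(A+B)\leq\sqrt{2}\,\mmr\bigl(\tfrac{1}{\sqrt{2}}(A+B)\bigr)$---but these are two phrasings of the same inequality $\mmr(tC)/t$ non-increasing.
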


\begin{proof}
  Note that $\ell^2 \times \ell^2$ with the natural scalar product is a Hilbert space which identifies with $\ell^2$.
  In addition, the linear map $\ell^2 \times \ell^2 \to \ell^2$ given by $(x, y) \mapsto x + y$ is Lipschitz with constant $\sqrt{2}$, since for every $x, y \in \ell^2$ one has
  \begin{equation*}
    \norm{x+y}
    \leq \norm{x} + \norm{y}
    \leq \sqrt{2} \big( \norm{x}^2 + \norm{y}^2 \big)^{1/2}
    = \sqrt{2} \, \norm{(x, y)}
    \, .
  \end{equation*}
  Hence, the map $\alpha (x, y) = (x+y)/\sqrt{2}$ is $1$-Lipschitz, and by the comparison inequality (Theorem~\ref{thm:comparison-wills}) it follows that
  \begin{align*}
    \mmr (A + B)
    &= \mmr \big( \alpha ( \sqrt{2} A \times \sqrt{2} B ) \big)
    \leq \mmr \big( \sqrt{2} A \times \sqrt{2} B \big)
    = \mmr (\sqrt{2} A) + \mmr (\sqrt{2} B) \\
    &\leq \sqrt{2} \big( \mmr (A) + \mmr (B) \big)
    \, ,
  \end{align*}
  where we used the additivity of $\mmr$ over products (Proposition~\ref{prop:properties-minimax}) and the fact that $\mmr (\lambda A) \leq \lambda \mmr (A)$ for $\lambda \geq 1$, by concavity of $\mmr$ (Theorem~\ref{thm:concavity}).
\end{proof}

An interesting
aspect of Proposition~\ref{prop:sum-difference-bodies} is that it both generalizes (to the case of nonconvex $A$ and $B \neq - A$) and improves (in terms of constants) an inequality deduced from the classical Rogers-Shephard inequality~\cite{rogers1957difference}
on the volume of difference bodies.
This inequality~\cite[Theorem~10.1.4 p.~530]{schneider2013convex} states that for any convex body $K \subset \R^n$,
\begin{equation*}
  \vol_n (K - K)
  \leq \binom{2n}{n} \vol_n (K)
  \leq 4^n \vol_n (K)
  \, .
\end{equation*}
Plugging this into Kubota's formula~\eqref{eq:def-kubota} gives $V_j (K - K) \leq 4^j V_j (K)$ for $0 \leq j \leq n$,
so that
\begin{equation*}
  \mmr (K - K)
  \leq \log \bigg( \sum_{j=0}^n 4^j V_j (K/\sqrt{2\pi}) \bigg)
  = \log \bigg( \sum_{j=0}^n V_j (4 K/\sqrt{2\pi}) \bigg)
  = \mmr (4 K)
  \leq 4 \, \mmr (K)
\end{equation*}
where the last inequality uses concavity of $\mmr$.
This is similar (in the special case of convex bodies) to inequality~\eqref{eq:inequality-difference-bodies}, but with a worse constant of $4$ instead of $2 \sqrt{2}$.

However, in the spirit of this section, one may ask whether the $\sqrt{2}$ factor in Proposition~\ref{prop:sum-difference-bodies} is necessary.
As it turns out, this factor can be removed by a more careful analysis.
The corresponding sub-additivity inequality is another %
``exact'' %
property
of the functional $\mmr$, together with the comparison inequality and concavity.

\begin{theorem}
  \label{thm:sub-additivity}
  For any nonempty subsets $A, B \subset \ell^2$, one has
  \begin{equation}
    \label{eq:sub-additivity}
    \mmr (A + B)
    \leq \mmr (A) + \mmr (B)
    \, .
  \end{equation}
  In particular, $\mmr (A - A) \leq 2 \,\mmr (A)$.
\end{theorem}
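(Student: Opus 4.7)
The plan is to sharpen the $\sqrt{2}$-bound of Proposition~\ref{prop:sum-difference-bodies} by avoiding the two-step argument (contraction followed by concavity-based dilation) and running a single direct Gaussian comparison on the product index set $A \times B$. First I would reduce to the case of finite $A, B \subset \R^n$ by the usual monotone-convergence and approximation arguments (Proposition~\ref{prop:properties-minimax}), exactly as in the first paragraph of the proof of Theorem~\ref{thm:comparison-wills}. Using the Gaussian representation~\eqref{eq:gaussian-representation-wills}, the inequality becomes
\[
\E\exp M_{A+B}(X) \;\leq\; \E\exp\bigl[M_A(X_1) + M_B(X_2)\bigr],
\]
with $X, X_1, X_2$ iid $\gaussdist(0, I_n)$. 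I view both sides as $\E\exp\sup_{(a,b) \in A \times B}[Z_{(a,b)} - \mathrm{Var}(Z_{(a,b)})/2]$ for two Gaussian processes on $A \times B$: the RHS uses $Z_{(a,b)} = \langle a, X_1\rangle + \langle b, X_2\rangle$ of variance $\|a\|^2 + \|b\|^2$, the LHS uses $Z'_{(a,b)} = \langle a+b, X\rangle$ of variance $\|a+b\|^2$.

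I would interpolate between these two covariance structures by setting $Z_{(a,b)}(t) = \langle a, U_1(t)\rangle + \langle b, U_2(t)\rangle$, where $(U_1(t), U_2(t)) \in \R^{2n}$ is jointly Gaussian with each marginal $\gaussdist(0, I_n)$ and cross-covariance $t\, I_n$, so that $\mathrm{Var}(Z_{(a,b)}(t)) = \|a\|^2 + \|b\|^2 + 2t\langle a, b\rangle$ correctly interpolates between the two endpoints. Defining $F_\beta(t)$ as in the proof of Theorem~\ref{thm:comparison-wills} (soft-max with parameter $\beta$), one has $F_\beta(0) \to \tilde W(A)\tilde W(B)$ and $F_\beta(1) \to \tilde W(A+B)$ as $\beta \to \infty$, so it suffices to show $F_\beta$ is nonincreasing. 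The formula~\eqref{eq:proof-derivative} specialized to $\Delta = \Sigma^{(2)} - \Sigma^{(1)}$ (with $\Delta_{(a,b),(a',b')} = \langle a, b'\rangle + \langle b, a'\rangle$) gives, after using $\Delta_{(a,b)(a,b)}+\Delta_{(a',b')(a',b')} - 2\Delta_{(a,b)(a',b')} = 2\langle a-a', b-b'\rangle$ and reorganizing as a second moment,
\[
\frac{d}{dt} F_\beta(t) \;=\; (\beta - 1)\, \E\!\bigl[\mathsf f(t) \cdot \mathrm{Cov}_{\mathsf p_t}(a, b)\bigr],
\]
where $\mathrm{Cov}_{\mathsf p_t}$ is the covariance of $(a, b) \in \R^n \times \R^n$ under the soft-max distribution $\mathsf p_t$ on $A \times B$.

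The hard part of the proof will be verifying that $\mathrm{Cov}_{\mathsf p_t}(a, b) \leq 0$ for every $t \in [0, 1]$ and every realization of the auxiliary Gaussians. The key observation is that $\mathsf p_t(a, b) \propto \pi_A(a, U_1)\, \pi_B(b, U_2)\, e^{-\beta t\langle a, b\rangle}$, with $\pi_C(c, u) = \exp[\beta(\langle c, u\rangle - \|c\|^2/2)]$: a product base distribution reweighted by the negative-association factor $e^{-\beta t\langle a, b\rangle}$. At $t = 0$ this is a product, so $\mathrm{Cov}_{\mathsf p_0}(a, b) = 0$. For $t > 0$ the negative quadratic coupling penalizes pairs with large $\langle a, b\rangle$ and should push the covariance downwards. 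I would make this precise by differentiating in $t$: a direct calculation (using $\partial_t \mathsf p_t(a, b) = -\beta \mathsf p_t [\langle a, b\rangle - \E_{\mathsf p_t}\langle a, b\rangle]$) expresses $\partial_t \mathrm{Cov}_{\mathsf p_t}(a, b)$ as $-\beta$ times a manifestly nonnegative quantity (a variance/third-moment combination obtained via Cauchy–Schwarz applied to the identity $X - Y - Z = \langle a - \bar a, b - \bar b\rangle - \langle\bar a, \bar b\rangle$ where $X = \langle a, b\rangle$, $Y = \langle a, \bar b\rangle$, $Z = \langle \bar a, b\rangle$). Integrating from the vanishing initial value yields the desired monotonicity.

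Granted the bound $\mathrm{Cov}_{\mathsf p_t}(a, b) \leq 0$, it follows that $F_\beta(1) \leq F_\beta(0)$ for every $\beta > 1$; letting $\beta \to \infty$ gives $\tilde W(A + B) \leq \tilde W(A)\tilde W(B)$, i.e., $\mmr(A + B) \leq \mmr(A) + \mmr(B)$. The statement $\mmr(A - A) \leq 2\,\mmr(A)$ then follows by taking $B = -A$ and invoking $\mmr(-A) = \mmr(A)$ (Proposition~\ref{prop:properties-minimax}, since $x \mapsto -x$ is an affine isometry).
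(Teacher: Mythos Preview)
Your overall strategy is correct and coincides with the paper's (up to reversing the direction of the interpolation): interpolate between the two covariance structures on the common index set $A\times B$ via the soft-max $F_\beta$, and reduce $F_\beta'\leq 0$ to the pointwise inequality $\mathrm{Cov}_{\pr_t}(a,b)\leq 0$ for every realization of the auxiliary Gaussians. The identification $\pr_t(a,b)\propto \pi_A(a,U_1)\,\pi_B(b,U_2)\,e^{-\beta t\langle a,b\rangle}$ is also right, and this \emph{is} the inequality one must prove.

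The gap is in your proof of that inequality. The map $t\mapsto C(t):=\mathrm{Cov}_{\pr_t}(a,b)$ is \emph{not} monotone in general, so the scheme ``$C(0)=0$ and $C'\leq 0$'' cannot work. Concretely, take $n=1$, $A=\{0,1\}$, $B=\{-1,1\}$, and the realization $U_1=1/2$, $U_2=0$ (so that $\pi_A$ is constant on $A$ and $\pi_B$ is constant on $B$). Writing $\lambda=\beta t$, a direct computation gives
\[
C(\lambda)\;=\;\frac{2\,e^{\lambda}\bigl(1-e^{\lambda}\bigr)}{\bigl(1+e^{\lambda}\bigr)^{3}}\,,
\]
which is $\leq 0$ for all $\lambda\geq 0$ but decreases from $0$ to a strict minimum at $e^{\lambda}=2+\sqrt{3}$ and then \emph{increases} back toward $0$ as $\lambda\to\infty$. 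Hence $C'(t)>0$ for $t$ beyond that point (for any $\beta>\log(2+\sqrt{3})$), so no argument based on the sign of $\partial_t C$ can close the proof.

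The paper establishes $\mathrm{Cov}_{\pr_t}(a,b)\leq 0$ directly. After the $(k\leftrightarrow l)$ symmetrization one has
\[
4\,\mathrm{Cov}_{\pr_t}(a,b)\;=\;\sum_{i,j,k,l}\bigl(\pr_{ik}\pr_{jl}-\pr_{il}\pr_{jk}\bigr)\,\langle x_i-x_j,\;y_k-y_l\rangle\,,
\]
and the product form of the weights yields $\pr_{ik}\pr_{jl}=g_{ijkl}\cdot e^{-\beta t(\langle x_i,y_k\rangle+\langle x_j,y_l\rangle)}$ with $g_{ijkl}=g_{ijlk}\geq 0$. Thus $\pr_{ik}\pr_{jl}-\pr_{il}\pr_{jk}$ has sign opposite to $(\langle x_i,y_k\rangle+\langle x_j,y_l\rangle)-(\langle x_i,y_l\rangle+\langle x_j,y_k\rangle)=\langle x_i-x_j,\,y_k-y_l\rangle$, so every term in the sum is $\leq 0$. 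This termwise sign argument, exploiting the factorization of $\pr_{ik}\pr_{jl}$, is the missing idea.
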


The constant $2$ in the bound $\mmr (A-A) \leq 2 \,\mmr (A)$ is best possible, since $\mmr (t A) \sim t\, w (A)$ as $t \to 0$ and $w(A-A) = 2 \, w(A)$.
Theorem~\ref{thm:sub-additivity} can be equivalently stated in terms of the Wills functional, as $W (A + B) \leq W (A) W (B)$ for any $A, B \subset \ell^2$.

Unlike Proposition~\ref{prop:sum-difference-bodies}, the proof of Theorem~\ref{thm:sub-additivity} does not use either the comparison inequality %
or concavity of $\mmr$.
Instead, it proceeds by showing that $\mmr (A + B) \leq \mmr (A \times B)$, which is achieved by interpolating between the two sets. %
The proof
leverages computations from that
of Theorem~\ref{thm:comparison-wills}, but concludes by exploiting the sum structure rather than distance inequalities.

\begin{remark}[Link with concavity]
  \label{rem:link-subadd-conc}
  A concave function $f : \R^+ \to \R$ such that $f (0) = 0$ is sub-additive.
  As such, one may ask whether
  the same holds for set functionals, allowing to %
  deduce Theorem~\ref{thm:sub-additivity} from Theorem~\ref{thm:concavity}.
  This is not the case.
  Theorem~\ref{thm:sub-additivity} does imply that the function $f : \R_+^2 \to \R$ defined by $f (s, t) = \mmr (s A + t B)$ is concave, and clearly $f (0, 0) = 0$;
  however, these properties do not imply sub-additivity in the bivariate case,
  as shown by the counter-example $g : (x, y) \mapsto \min (x,y)%
  $
  (\eg, $g(1, 0) + g(0, 1) = 0 < g(1, 1) = 1$).  
  For instance,
  the Brunn-Minkowski inequality~\cite[Theorem~4.1]{gardner2002brunn} implies that the functional $A \mapsto \vol_n (A)^{1/n}$ (defined over compact sets $A\subset \R^n$)
  is both concave and \emph{super}-additive; in fact, the two are equivalent for $1$-homogeneous functionals.
  This suggests that concavity and sub-additivity are complementary (rather than similar)
  additive properties, which may also explain why the proof of Theorem~\ref{thm:sub-additivity} has more in common with that of Theorem~\ref{thm:comparison-wills} than with that of Theorem~\ref{thm:concavity}.
\end{remark}

\begin{proof}
  By a similar approximation argument as in %
  Theorem~\ref{thm:comparison-wills}, it suffices to show~\eqref{eq:sub-additivity} when $A = \{ x_i : 1 \leq i \leq N \}$ and $B = \{ y_k : 1 \leq k \leq M \}$ are finite subsets of $\R^n$ for some $M, N, n \geq 1$.

  We identify $A + B$ with the subset $(A+B) \times \{ 0\}$ of $\R^{2n}$, and denote by $A \oplus B = A \times B \subset \R^{2n}$ the orthogonal sum of $A,B$.
  By Proposition~\ref{prop:properties-minimax}, one has $\mmr (A \oplus B) = \mmr (A) + \mmr (B)$, therefore it suffices to show that $\mmr (A + B) \leq \mmr (A \oplus B)$.

  We use the interpolation and smoothing arguments and computations from the proof of Theorem~\ref{thm:comparison-wills}, replacing respectively $A,B$ therein by $A+B$ and $A\oplus B$.
  Let $U, V, W$ be three independent random vectors with distribution $\gaussdist (0, I_n)$.
  From the Gaussian representation~\eqref{eq:gaussian-representation-wills} and the same arguments as in the proof of Theorem~\ref{thm:comparison-wills}, defining for $t \in [0, 1]$ and $\beta > 0$:
  \begin{align*}
    F_\beta (t)
    &= \E \bigg[ \bigg\{ \sum_{i=1}^N \sum_{k=1}^M \exp \Big\{ \beta \big[ X_{ik} (t) - \E X_{ik} (t)^2/2 \big] \Big\} \bigg\}^{1/\beta} \bigg]
    \\
    \text{where} \quad
    X_{ik} (t)
       &= \sqrt{1-t} \, \innerp{x_i + y_k}{U} + \sqrt{t} \, \big( \innerp{x_i}{V} + \innerp{y_k}{W} \big)
         \, ,
  \end{align*}
  one has $\lim_{\beta \to \infty} F_\beta (0) = W ( (A+B)/\sqrt{2\pi} )$ and $\lim_{\beta \to \infty} F_\beta (1) = W ( (A \oplus B)/\sqrt{2\pi} )$.
  It therefore suffices to show that $F_\beta : [0, 1] \to \R$ is non-decreasing.
  In addition, defining respectively
  \begin{align*}
    \pr_{ik} (t)
    &= \pr_{ik} (t; \beta)
    = \frac{e^{\beta [ X_{ik} (t) - \E X_{ik} (t)^2/2 ]}}{\sum_{j=1}^N \sum_{l=1}^M e^{\beta [ X_{jl} (t) - \E X_{jl} (t)^2/2 ]}} \, ;
    \ \
    \fs (t)
    = \bigg\{ \sum_{j=1}^N \sum_{l=1}^M e^{\beta [ X_{jl} (t) - \E X_{jl} (t)^2/2 ]} \bigg\}^{1/\beta} ;
    \\
    \Delta_{ik,jl} &= \E X_{ik} (1) X_{jl} (1) - \E X_{ik} (0) X_{jl} (0) \, ,
  \end{align*}
  it follows from~\eqref{eq:proof-derivative} (with appropriate changes in notation) that
  \begin{equation}
    \label{eq:proof-derivative-sum}
    F_\beta' (t)
    = \frac{\beta - 1}{4} \sum_{1 \leq i, j \leq N} \sum_{1 \leq k,l \leq M} \E \big[ \fs (t) \pr_{ik} (t) \pr_{jl} (t) \big] \big\{ \Delta_{ik,ik} + \Delta_{jl,jl} - 2 \Delta_{ik,jl} \big\}
    \, .
  \end{equation}
  
  Now, one has successively:
  \begin{align}
    \E X_{ik} (t) X_{jl} (t)
    &= (1-t) \innerp{x_i + y_k}{x_j + y_l} + t \big( \innerp{x_i}{x_j} + \innerp{y_k}{y_l} \big) \,; \nonumber \\
    \Delta_{ik,jl}
    &= \big( \innerp{x_i}{x_j} + \innerp{y_k}{y_l} \big) - \innerp{x_i + y_k}{x_j + y_l} \nonumber \\
    &= - \innerp{x_i}{y_l} - \innerp{x_j}{y_k} \, ; \nonumber
  \end{align} %
  \begin{align}
    \Delta_{ik,ik} + \Delta_{jl,jl} - 2 \Delta_{ik,jl}
    &= - 2 \innerp{x_i}{y_k} - 2 \innerp{x_j}{y_l} + 2 \big( \innerp{x_i}{y_l} + \innerp{x_j}{y_k} \big) \nonumber \\
    &= - 2 \innerp{x_i - x_j}{y_k - y_l}
      \, . \label{eq:proof-diff-delta}
  \end{align}
  A consequence of~\eqref{eq:proof-diff-delta} is that swapping $k$ and $l$ turns the last quantity into its opposite.
  Plugging~\eqref{eq:proof-diff-delta} into the expression~\eqref{eq:proof-derivative-sum} and using this property gives:
  \begin{align}
    F_\beta' (t)
    &= - \frac{\beta - 1}{2} \sum_{1 \leq i, j \leq N} \sum_{1 \leq k,l \leq M} \E \big[ \fs (t) \pr_{ik} (t) \pr_{jl} (t) \big] \innerp{x_i - x_j}{y_k - y_l} \nonumber \\
    &= - \frac{\beta - 1}{4} \cdot \E \bigg[ \fs (t) \cdot \sum_{ij,kl} \big\{ \pr_{ik} (t) \pr_{jl} (t) - \pr_{il} (t) \pr_{jk} (t) \big\} \langle x_i - x_j, y_k - y_l \rangle \bigg]
      \label{eq:proof-deriv-prod}
      \, .
  \end{align}

  Now, since
  \begin{equation*}
    \pr_{ik} (t)
    = \fs (t)^{-\beta} \cdot \exp \big\{ \beta \big[ X_{ik} (t) - \E X_{ik} (t)^2/2 \big] \big\} \, ,
  \end{equation*}
  one has
  \begin{equation*}
    \pr_{ik} (t) \pr_{jl} (t)
    = \fs (t)^{-2\beta} \cdot \exp \big\{ \beta \big[ X_{ik} (t) + X_{jl} (t) - \E X_{ik} (t)^2/2 - \E X_{jl} (t)^2/2 \big] \big\} \, .
  \end{equation*}
  In addition,
  \begin{align*}
    X_{ik} (t) + X_{jl} (t)
    &= \sqrt{1-t} \, \innerp{x_i + x_j + y_k + y_l}{U} + \sqrt{t} \, \big( \innerp{x_i + x_j}{V} + \innerp{y_k + y_l}{W} \big) \\
    &= X_{il} (t) + X_{jk} (t) \\
    \E \big[ X_{ik} (t)^2 + X_{jl} (t)^2 \big]
    &= (1-t) \big( \norm{x_i + y_k}^2 + \norm{x_j + y_l}^2 \big) + t \big( \norm{x_i}^2 + \norm{x_j}^2 + \norm{y_k}^2 + \norm{y_l}^2 \big) \\
    &= \big( \norm{x_i}^2 + \norm{x_j}^2 + \norm{y_k}^2 + \norm{y_l}^2 \big) + 2 (1-t) \big( \innerp{x_i}{y_k} + \innerp{x_j}{y_l} \big) %
      \, .
  \end{align*}
  As a result, defining
  \begin{equation*}
    \fs_{ijkl} (t) = \fs(t)^{1-2\beta} \exp \big\{ \beta \big[ X_{ik} (t) + X_{jl} (t) - (\norm{x_i}^2 + \norm{x_j}^2 + \norm{y_k}^2 + \norm{y_l}^2)/2 \big] \big\}
    = \fs_{ijlk} (t) \geq 0 \, ,
  \end{equation*}
  one has 
  \begin{equation*}
    \fs (t) \pr_{ik} (t) \pr_{jl} (t)
    = \fs_{ijkl} (t) \cdot \exp \big\{ - {\beta} (1-t) \big( \innerp{x_i}{y_k} + \innerp{x_j}{y_l} \big) \big\} \, .
  \end{equation*}
  Plugging this expression into~\eqref{eq:proof-deriv-prod} and using that $\fs_{ijkl} (t) = \fs_{ijlk} (t)$ yields
  \begin{align*}
    F_\beta' (t)
    = - \frac{\beta - 1}{4} \sum_{ij,kl} \E [\fs_{ijkl} (t)] \big\{ e^{- \beta (1-t) (\innerp{x_i}{y_k} + \innerp{x_j}{y_l})} - e^{- \beta (1-t) (\innerp{x_i}{y_l} + \innerp{x_j}{y_k})} \big\} \langle x_i - x_j, y_k - y_l \rangle
    .
  \end{align*}
  Now for $\beta > 0$ and $t \in (0,1)$, since $\innerp{x_i - x_j}{y_k - y_l} = (\innerp{x_i}{y_k}+\innerp{x_j}{y_l}) - (\innerp{x_i}{y_l} + \innerp{x_j}{y_k})$, one has $\innerp{x_i - x_j}{y_k - y_l} \geq 0$ if and only if $\innerp{x_i}{y_k}+\innerp{x_j}{y_l} \geq \innerp{x_i}{y_l} + \innerp{x_j}{y_k}$, if and only if $e^{- \beta (1-t) (\innerp{x_i}{y_k} + \innerp{x_j}{y_l})} \leq e^{- \beta (1-t) (\innerp{x_i}{y_l} + \innerp{x_j}{y_k})}$.
  Therefore
  \begin{equation*}
    \big\{ e^{- \beta (1-t) (\innerp{x_i}{y_k} + \innerp{x_j}{y_l})} - e^{- \beta (1-t) (\innerp{x_i}{y_l} + \innerp{x_j}{y_k})} \big\} \langle x_i - x_j, y_k - y_l \rangle
    \leq 0
  \end{equation*}
  and thus $F_\beta'(t) \geq 0$, which concludes the proof.
\end{proof}

\subsection{Dilations and dependence on noise and sample size}
\label{sec:depend-noise-sample}

We conclude this section by considering dilations of sets.
This operation has a natural statistical interpretation, in terms of varying the ``noise level'' of ``sample size'', as described below.

Previously, we have considered the class of Gaussians with unit variance, namely distributions $\gaussdist (\theta, I_n)$ with $\theta \in \R^n$.
One may also consider Gaussians with general variance $\sigma^2 > 0$.
Specifically, for $\theta \in \R^n$, let $p_{\theta, \sigma^2} (x) = (2\pi \sigma^2)^{-n/2} \exp (- \norm{x - \theta}^2/(2\sigma^2))$ %
be the density of the distribution $\gaussdist (\theta, \sigma^2I_n)$.
Likewise, for a subset $A \subset \R^n$, denote $\probas_{A, \sigma^2} = \{ p_{\theta, \sigma^2} : \theta \in A \}$ the associated subset of the Gaussian model with variance $\sigma^2$, and define
\begin{equation}
  \label{eq:def-regret-variance}
  \mmr (A, \sigma^2)
  = \mmr (\probas_{A, \sigma^2})
\end{equation}
the associated minimax regret (with respect to $\probas_{A, \sigma^2}$) for sequential probability assignment.
Now, since rescaling by $\sigma^{-1}$ changes the model $\probas_{A, \sigma^2}$ into $\probas_{A/\sigma,1} = \probas_{A/\sigma}$, we have that
\begin{equation}
  \label{eq:regret-variance-scale}
  \mmr (A, \sigma^2)
  = \mmr (A/\sigma)
  \, .
\end{equation}
Hence, dilations of $A$ can be understood as varying the \emph{noise level} $\sigma^2$.

Another interpretation comes from the ``{repeated}'' variant of the coding problem.
Here, instead of predicting or coding a single vector $x = (x_1, \dots, x_d) \in \R^d$, one aims to predict a sequence of vectors $x^{(1)}, \dots, x^{(n)} \in \R^d$ for some ``{sample size}'' $n \geq 1$.
In addition, given a set $A \subset \R^d$ (which induces the model $\probas_{A}$ on $\R^d$ for a single vector), the model on sequences of vectors is obtained by modeling the $n$ vectors as \iid random vectors distributed as $p_\theta$ for some $\theta \in A$.
This amounts to sequential probability assignment over $(\R^d)^n \simeq \R^{n d}$, with model
\begin{equation*}
  \big\{ p_{\theta}^{\otimes n} : \theta \in A \big\}
  = \probas_{A_n}
\end{equation*}
where $A_n = \{ (\theta, \dots, \theta) : \theta \in A \} \subset \R^{nd}$.
Denote $\mmr_n (A) = \mmr (A_n)$ the associated minimax regret. %
Note that $A_n = \pi_n (\sqrt{n} A)$ where $\pi_n : \R^d \to \R^{nd}$ is the linear isometry defined by $\pi_n (\theta) = (\theta/\sqrt{n}, \dots, \theta/\sqrt{n})$,
hence by Proposition~\ref{prop:properties-minimax} one has
\begin{equation}
  \label{eq:regret-repeated}
  \mmr_n (A)
  = \mmr (A_n)
  = \mmr (\sqrt{n} A)
  \, .
\end{equation}
In particular, $\mmr_n (A) = \mmr (A, \sigma^2)$ with $\sigma = 1/\sqrt{n}$.

The quantities $\mmr (A, \sigma^2)$ and $\mmr_n (A)$ are normalized by total code-length (log-likelihood), but alternative normalizations are also meaningful. %
Since $\ell (p_{\theta, \sigma^2}, x) = \norm{\theta - x}^2/(2\sigma^2)$ for $\theta, x \in \R^n$,
multiplying log-loss by $\sigma^2$ makes it proportional to the standard squared error
for densities in $\probas_{A, \sigma^2}$,
suggesting the alternative normalization $\sigma^2 \cdot \mmr (A, \sigma^2)$.
Likewise, $\mmr_n (A)$ measures regret for the sum of code-lengths over the vectors $x^{(1)}, \dots, x^{(n)}$, another natural quantity being the average (per-vector) regret $\mmr_n (A)/n$---%
equal to $\sigma^2 \mmr (A, \sigma^2)$ for $\sigma = 1/\sqrt{n}$.

It is natural to expect that the total regret $\mmr_n (A)$ increases with $n$, namely that $\mmr (A, \sigma^2)$ decreases with $\sigma$ (as $\sigma^2$ increases, distributions $p_{\theta, \sigma^2}$ with $\theta \in A$ become ``closer'' to each other in an information-theoretic sense, so the model $\probas_{A, \sigma^2}$ becomes smaller).
Conversely, one may expect that the average regret $\mmr_n (A) / n$ decreases with $n$, namely that $\sigma^2 \cdot \mmr (A, \sigma^2)$ increases with $\sigma$.
The following proposition confirms that both of these monotonicity properties hold, and shows that the second one can in fact be strengthened.

\begin{proposition}
  \label{prop:dilations-monotone}
  For any subset $A \subset \ell^2$, the function $t \mapsto \mmr (t A)$ is increasing on $\R_+^*$, while the function $t \mapsto \mmr (t A) / t$ is decreasing.

  In particular, $\mmr_n (A)$ increases with $n$ \textup(resp.~$\mmr (A, \sigma^2)$ decreases with $\sigma$\textup), while $\mmr_n (A)/\sqrt{n}$ decreases with $n$ \textup(resp.~$\sigma \cdot \mmr(A, \sigma^2)$ increases with $\sigma$\textup).
\end{proposition}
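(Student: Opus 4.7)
The plan is to derive both monotonicity statements by a straightforward application of the two structural results already established in Sections~\ref{sec:comp-theor-wills} and~\ref{sec:concavity-property}: the comparison inequality (Theorem~\ref{thm:comparison-wills}) for the first property, and the concavity property (Theorem~\ref{thm:concavity}) for the second.

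For the first claim, that $t \mapsto \mmr(tA)$ is non-decreasing on $\R_+^*$, I would fix $0 < s \leq t$ and consider the map $\varphi : t A \to s A$ defined by $\varphi(tx) = s x$ for $x \in A$. This map is surjective onto $sA$, and satisfies $\norm{\varphi(tx) - \varphi(ty)} = s \norm{x - y} \leq t \norm{x - y} = \norm{tx - ty}$, so it is a contraction. Theorem~\ref{thm:comparison-wills} then yields $\mmr(t A) \geq \mmr(s A)$.

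For the second claim, that $t \mapsto \mmr(tA)/t$ is non-increasing, I would apply the concavity inequality~\eqref{eq:concavity} to the sets $tA$ and $\{0\}$, noting that $\mmr(\{0\}) = 0$ (either directly from the definition or from Theorem~\ref{thm:minimax-wills-volumes}, since $V_0(\{0\}/\sqrt{2\pi}) = 1$ and higher intrinsic volumes vanish). For any $\lambda \in (0, 1]$, this gives
\begin{equation*}
  \mmr(\lambda t A)
  = \mmr\big(\lambda \cdot (t A) + (1 - \lambda) \cdot \{0\}\big)
  \geq \lambda \, \mmr(t A) + (1 - \lambda) \, \mmr(\{0\})
  = \lambda \, \mmr(t A)
  \, .
\end{equation*}
Setting $s = \lambda t \in (0, t]$, this rearranges to $\mmr(s A)/s \geq \mmr(t A)/t$.

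Finally, the statistical consequences follow from the identities~\eqref{eq:regret-repeated} and~\eqref{eq:regret-variance-scale}. Monotonicity of $\mmr_n(A) = \mmr(\sqrt{n} A)$ in $n$, and of $\mmr(A, \sigma^2) = \mmr(A/\sigma)$ in $\sigma$ (decreasing), are immediate from the first property. Monotonicity of $\mmr_n(A)/\sqrt{n}$ in $n$ is immediate from the second property. For $\sigma \mmr(A, \sigma^2)$, write $\sigma \mmr(A/\sigma) = \mmr(tA)/t$ with $t = 1/\sigma$, so that being non-increasing in $t$ is equivalent to being non-decreasing in $\sigma$. There is essentially no obstacle here: the only minor verification is that $\mmr(\{0\}) = 0$, which kills the second term in the concavity inequality and produces the desired homogeneity-type bound.
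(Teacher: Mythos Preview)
Your proof is correct and follows essentially the same approach as the paper: the comparison inequality (Theorem~\ref{thm:comparison-wills}) for the first monotonicity, and the concavity property (Theorem~\ref{thm:concavity}) together with $\mmr(\{0\})=0$ for the second. The paper phrases the second step slightly more abstractly (``$t\mapsto\mmr(tA)$ is concave and vanishes at $0$, hence the slope $\mmr(tA)/t$ is non-increasing''), and mentions sub-additivity (Theorem~\ref{thm:sub-additivity}) as an alternative route, but your explicit instantiation of~\eqref{eq:concavity} with $B=\{0\}$ is exactly the computation underlying that remark.
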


\begin{proof}
  In the case where $A = K$ is a convex body, the fact that $\mmr (t K)$ increases with $t$ follows directly from either the expression in terms of intrinsic volumes, or from monotonicity with respect to inclusion (Proposition~\ref{prop:properties-minimax}) as one can assume that $0 \in K$ up to translating.
  However, both of these arguments are specific to the convex case.

  In the general case, the fact that $\mmr (t A)$ increases with $t$ follows from the comparison inequality (Theorem~\ref{thm:comparison-wills}), since for $0 < s \leq t$ the contraction $x \mapsto (s/t) x$ maps $t A$ onto $s A$.

  We now turn to the second claim.
  By concavity of $\mmr$ (Theorem~\ref{thm:concavity}) applied to dilations of $A$, the map $t \mapsto \mmr (t A)$ is concave on $\R^+$, and equals $0$ at $0$.
  As a result, the slope $\mmr (t A) / t %
  $ decreases with $t$.
  Alternatively, by the sub-additivity property (Theorem~\ref{thm:sub-additivity}), the map $t \mapsto \mmr (t A)$ is sub-additive, and since it is also increasing it is continuous, so $\mmr (t A) / t$ decreases.
\end{proof}

\section{Comparison with classical asymptotics of coding
}
\label{sec:non-asympt-param}

In this section, we
relate and contrast our results to
classical asymptotic results on universal coding in information theory.

\paragraph{Classical asymptotics.}

A seminal result of Rissanen~\cite{rissanen1996fisher} provides a precise asymptotic expansion of the minimax regret
in the fixed-dimensional regime.
It implies that for a fixed compact and sufficiently smooth subset $A \subset \R^d$ with non-empty interior, as $n \to \infty$,
\begin{equation}
  \label{eq:rissanen}
  \mmr_n (A)
  = \frac{d}{2} \log \Big( \frac{n}{2\pi} \Big) + \log \vol_d (A) + o (1)
  \, .
\end{equation}
In fact, the asymptotic expansion~\eqref{eq:rissanen} extends to more general families of distributions smoothly parameterized by a compact subset of $\R^d$ or manifold, where the ``volume'' in~\eqref{eq:rissanen} is now defined with respect to the volume form associated to the Fisher information metric (see~\cite{rissanen1996fisher} and~\cite[Chapter~7]{grunwald2007mdl}).
This general result---which is based on
a quadratic approximation of the log-likelihood,
\ie a Laplace approximation of Shtarkov's integral~\eqref{eq:regret-shtarkov}---is
an instance of the ``local asymptotic normality'' phenomenon in asymptotic parametric statistics~\cite{lecam2000asymptotics}.
Within the Gaussian setting, Proposition~\ref{prop:large-scale} (together with the identity $\mmr_n (A) = \mmr (\sqrt{n} A)$) shows that the expansion~\eqref{eq:rissanen} holds for any compact set $A \subset \R^d$ with $\vol_d (A) > 0$, without additional regularity assumptions.

Rissanen's expansion had a deep influence on the theory of universal coding, data compression and model selection.
It plays a key role within the ``Minimum Description Length'' (MDL) paradigm to coding and estimation, proposed by Rissanen and expounded %
in~\cite{rissanen1985mdl,barron1998minimum,grunwald2007mdl}.
In particular, the quantity~\eqref{eq:rissanen}, sometimes called ``stochastic complexity'', has been suggested as a notion of model complexity and used as a complexity penalty in model selection.

An interesting feature of the expansion~\eqref{eq:rissanen} is that it features the volume of the parameter set $A \subset \R^d$, albeit as a constant second-order term dominated by the $\frac{d}{2} \log n$ term.
Hence, in the asymptotic regime, the minimax regret $\mmr_n (A) \sim \frac{d}{2} \log n$ is essentially determined by the dimension, with a lower-order term depending on the volume.

A limitation of this result is that it is purely asymptotic, in that it only holds for fixed $A$ (and thus $d$) as $n \to \infty$.
This rules out the modern ``high-dimensional'' regime, where the dimension may be comparable to, or larger than, the sample size $n$; as such, it does not describe what happens for a general sample size $n$.
In addition, it does not specify how large $n$ should be for the expansion~\eqref{eq:rissanen} to be accurate.

\paragraph{Comparison with estimates for convex bodies.}

When $A = K \subset \R^d$ is a convex body and for a general sample size $n$, Proposition~\ref{prop:regret-max-intrinsic} implies that, whenever $n \geq 4/w (K)^2$ (arguably the statistically meaningful regime, since $w (K) \gtrsim 1$ when the model is ``non-trivial''),
\begin{equation}
  \label{eq:refined-rissanen}
  \mmr_n (K)
  \asymp \max_{1 \leq j \leq d} \log V_j \Big( \sqrt{\frac{n}{2\pi}} K \Big)
  = \max_{1 \leq j \leq d} \Big\{ \frac{j}{2} \log \Big( \frac{n}{2\pi} \Big) + \log V_j (K) \Big\}
  \, .
\end{equation}
This expression is reminiscent of~\eqref{eq:rissanen}, but with the \emph{volume} replaced by the maximum \emph{intrinsic volume}.
The estimate~\eqref{eq:refined-rissanen} is also non-asymptotic, namely it holds for a general sample size $n$.

This result addresses the two aforementioned limitations of
the asymptotic expansion.
First, it allows one to quantify for which sample sizes $n$ the asymptotic expansion is accurate: it suffices that the $d^{\text{th}}$ intrinsic volume in the maximum of~\eqref{eq:refined-rissanen} dominates, which amounts (by the unimodality properties
of the intrinsic volume sequence,
see Section~\ref{sec:proof-max-intrinsic})
to $V_d (\sqrt{n/2\pi} K) \geq V_{d-1} (\sqrt{n/2\pi} K)$.
By homogeneity of intrinsic volumes and the link with volume and surface, assuming that $\vol_{d} (K) > 0$ this is equivalent to 
\begin{equation}
  \label{eq:sample-size-asymptotic}
  n \geq 8 \pi \cdot \Big( \frac{\mathrm{surface}_{d-1} (\partial K)}{\vol_d (K)} \Big)^2
  \, .
\end{equation}

Second, the estimate~\eqref{eq:refined-rissanen} shows that, for a general sample size $n$ that does not satisfy the %
``large sample'' condition~\eqref{eq:sample-size-asymptotic}, ``stochastic complexity'' is characterized by a lower-dimensional intrinsic volume rather than the volume.

\section{Examples}
\label{sec:examples}

In this section, we leverage the general results of Section~\ref{sec:comp-metric-estimates} to derive explicit characterizations of the minimax redundancy and regret in
some natural examples.

\subsection{Ellipsoids
}
\label{sec:case-ellipsoids}

We now consider the case of general ellipsoids in $\R^n$.
Ellipsoids arise naturally in the regression context described at the beginning of Section~\ref{sec:regret-intrinsic}, when the underlying class of functions $\F$ is the ball of a reproducing kernel Hilbert space~\cite{aronszajn1950theory,steinwart2008svm}. %
By invariance under affine isometries, one may restrict to non-degenerate axis-aligned ellipsoids, of the following form.
For $a = (a_1, \dots, a_n)$ with $a_1 \geq \dots \geq a_n > 0$, define
the ellipsoid $E_a \subset \R^n$ %
by
\begin{equation}
  \label{eq:def-ellipsoid}
  E_a
  = \bigg\{ (\theta_1, \dots, \theta_n) \in \R^n : \sum_{i=1}^n \frac{\theta_i^2}{a_i^2} \leq 1 \bigg\}
  \, .
\end{equation}

First, recall the standard Kahane-Khintchine estimate on the Gaussian width of ellipsoids; it also follows (with different constants) from the concentration property of
the Gaussian measure.
\begin{lemma}
  \label{lem:width-ellipsoid}
  For any $a \in \R_+^n$, one has
  \begin{equation}
    \label{eq:gaussian-width-ellipsoid}
    \frac{1}{\sqrt{3}} \bigg( \sum_{i=1}^n a_i^2 \bigg)^{1/2}
    \leq w (E_a)
    \leq \bigg( \sum_{i=1}^n a_i^2 \bigg)^{1/2}
    \, .
  \end{equation}
\end{lemma}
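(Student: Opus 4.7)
The plan is to express the Gaussian width as the expectation of a norm of a Gaussian vector and to apply classical Kahane-Khintchine style moment comparisons. Writing $X = (X_1,\dots,X_n)$ with iid standard Gaussian coordinates, the first step is to identify the support function of $E_a$: by Cauchy-Schwarz applied to the vectors $(\theta_i/a_i)_i$ and $(a_i x_i)_i$ (with equality attained at $\theta_i = a_i^2 x_i / (\sum_j a_j^2 x_j^2)^{1/2}$), one gets
\begin{equation*}
  \sup_{\theta \in E_a} \langle \theta, x\rangle \;=\; \Big( \sum_{i=1}^n a_i^2 x_i^2 \Big)^{1/2} \qquad (x \in \R^n),
\end{equation*}
so that, setting $Y = \sum_i a_i^2 X_i^2$, I have $w(E_a) = \E \sqrt{Y}$.

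The upper bound is then immediate from Jensen's inequality: $\E \sqrt{Y} \leq \sqrt{\E Y} = (\sum_i a_i^2)^{1/2}$.

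For the lower bound, the main idea is a reverse Jensen bound driven by control of the second moment of $Y$. First I would compute $\E Y = s^2$ with $s^2 = \sum_i a_i^2$, and $\E Y^2 = s^4 + 2\sum_i a_i^4 \leq 3 s^4$, using that $\E X_i^4 = 3$ (hence $\Var(X_i^2)=2$) together with independence. Then I would chain two applications of Cauchy-Schwarz: writing $\E Y = \E(Y^{1/4}\cdot Y^{3/4})$ gives
\begin{equation*}
  (\E Y)^2 \;\leq\; \E Y^{1/2} \cdot \E Y^{3/2},
\end{equation*}
and writing $\E Y^{3/2} = \E(Y^{1/2} \cdot Y)$ gives $\E Y^{3/2} \leq (\E Y)^{1/2} (\E Y^2)^{1/2}$. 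Combining these two inequalities yields
\begin{equation*}
  \E \sqrt{Y} \;\geq\; \frac{(\E Y)^{3/2}}{(\E Y^2)^{1/2}} \;\geq\; \frac{s^3}{\sqrt{3}\, s^2} \;=\; \frac{s}{\sqrt{3}},
\end{equation*}
which is the desired lower bound.

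I do not anticipate a real obstacle: the identification of the support function is standard, and the moment computation for a diagonal Gaussian quadratic form is elementary. The only slightly subtle step is the reverse Jensen argument, but it is just a clean double application of Cauchy-Schwarz. Alternatively, one could invoke Gaussian concentration or a general Khintchine-Kahane inequality for Gaussian chaos to obtain an analogous bound with a possibly different constant, but the self-contained route above delivers exactly the constant $1/\sqrt{3}$ claimed in the statement.
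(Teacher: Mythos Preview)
Your proof is correct and follows essentially the same approach as the paper: identify $w(E_a)=\E\sqrt{Y}$ with $Y=\sum_i a_i^2 X_i^2$, use Jensen for the upper bound, compute $\E Y^2\leq 3(\E Y)^2$, and apply a moment interpolation to obtain $\E Y^{1/2}\geq(\E Y)^{3/2}/(\E Y^2)^{1/2}$. The only cosmetic difference is that the paper packages the interpolation as a single H\"older inequality $\|Z\|_2\leq\|Z\|_1^{1/3}\|Z\|_4^{2/3}$ (with $Z=\sqrt{Y}$), whereas you derive the same inequality via two Cauchy--Schwarz steps.
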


\begin{proof}
  Let $X = (X_1, \dots, X_n)
  \sim \gaussdist (0, I_n)$. %
  Letting $Z = \big( \sum_{i=1}^n a_i^2 X_i^2 \big)^{1/2}$
  and $\norm{Z}_p = \E [Z^p]^{1/p}$,
  \begin{equation*}
    w (E_a)
    = \E \sup_{\theta \in E_a} \langle \theta, X\rangle
    = \E \bigg[ \bigg( \sum_{i=1}^n a_i^2 X_i^2 \bigg)^{1/2} \bigg]
    = \norm{Z}_1
    \leq \norm{Z}_2
    = \bigg( \sum_{i=1}^n a_i^2 \bigg)^{1/2}
    \, .
  \end{equation*}
  On the other hand, since $\E X_i^2 X_j^2 \leq 3$ for $1 \leq i,j\leq n$ (with equality if $i=j$), we have
  \begin{equation*}
    \norm{Z}_4
    = \E \bigg[ \bigg( \sum_{i=1}^n a_i^2 X_i^2 \bigg)^2 \bigg]^{1/4}
    \leq \bigg[ 3 \sum_{1 \leq i,j \leq n} a_i^2 a_j^2 \bigg]^{1/4}
    = 3^{1/4} \bigg( \sum_{i=1}^n a_i^2 \bigg)^{1/2}
    \, ,
  \end{equation*}
  namely $\norm{Z}_4 \leq 3^{1/4} \norm{Z}_2$.
  Finally, Hölder's inequality implies that $\norm{Z}_2 \leq \norm{Z}_1^{1/3} \norm{Z}_4^{2/3}$, so
  that $\norm{Z}_1/\norm{Z}_2 \geq \big( \norm{Z}_2 / \norm{Z_4} \big)^2 \geq 1/\sqrt{3}$.
\end{proof}

In the case of $E_a = B_2^n$ (namely $a_1 = \dots = a_n = 1$), the proof above with the exact value $\E Z^4 = n^2 + 2 n$
gives
\begin{equation}
  \label{eq:width-ball}
  \Big( 1 + \frac{2}{\sqrt{n}} \Big)^{-1/2} \sqrt{n}
  \leq w (B_2^n)
  \leq \sqrt{n}
  \, .
\end{equation}

The following estimate for local widths of ellipsoids is standard, see \eg \cite{mendelson2003performance} for related results on local Rademacher averages of ellipsoids.

\begin{lemma}
  \label{lem:local-widths-ellipsoid}
  For every $r > 0$,
  \begin{equation}
    \label{eq:local-widths-ellipsoid}
    \frac{1}{\sqrt{3}} \bigg( \sum_{i=1}^n \min (a_i^2, r^2) \bigg)^{1/2}
    \leq w_{E_a} (r)
    \leq \sqrt{2} \, \bigg( \sum_{i=1}^n \min (a_i^2, r^2) \bigg)^{1/2}
    \, .
  \end{equation}
\end{lemma}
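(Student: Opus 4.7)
The plan proceeds in two steps: first, reduce the supremum in the definition of $w_{E_a}(r)$ to the center $\theta_0 = 0$, obtaining $w_{E_a}(r) = w(E_a \cap r B_2^n)$; second, sandwich this set between two dilates of the auxiliary ellipsoid $E_c$ with $c_i = \min(a_i, r)$, and invoke Lemma~\ref{lem:width-ellipsoid} on each side.

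For the reduction, fix $\theta_0 \in E_a$ and set $S = (E_a - \theta_0) \cap r B_2^n$, so that $w(E_a \cap B(\theta_0, r)) = w(S)$ by translation invariance. I would then apply the identity $w(S) = \tfrac{1}{2}\, w(S - S)$, which follows from the symmetry $X \stackrel{d}{=} -X$ for $X \sim \gaussdist(0, I_n)$. Since $E_a$ is origin-symmetric, $(E_a - \theta_0) - (E_a - \theta_0) = E_a - E_a = 2 E_a$, and trivially $S - S \subseteq 2 r B_2^n$; together these give $S - S \subseteq 2(E_a \cap r B_2^n)$, whence $w(S) \leq w(E_a \cap r B_2^n)$. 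The reverse inequality is immediate by taking $\theta_0 = 0 \in E_a$.

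For the sandwich step, set $c_i = \min(a_i, r)$. The inclusion $E_c \subseteq E_a \cap r B_2^n$ is clear: $c_i \leq a_i$ gives $E_c \subseteq E_a$, and for $\theta \in E_c$ one has $\|\theta\|^2 = \sum_i c_i^2 \cdot (\theta_i^2/c_i^2) \leq (\max_j c_j^2) \cdot 1 \leq r^2$. Conversely, summing the two defining constraints $\sum_i \theta_i^2/a_i^2 \leq 1$ and $\sum_i \theta_i^2/r^2 \leq 1$ and using $1/a_i^2 + 1/r^2 \geq 1/\min(a_i^2, r^2) = 1/c_i^2$ yields $E_a \cap r B_2^n \subseteq \sqrt{2}\, E_c$. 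Applying the lower bound of Lemma~\ref{lem:width-ellipsoid} to $E_c$ and the upper bound to $\sqrt{2}\, E_c$ gives~\eqref{eq:local-widths-ellipsoid}.

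The only conceptually nontrivial step is the reduction to the center; the rest amounts to direct computation. It is worth noting that the same symmetrization argument proves $w_K(r) = w(K \cap r B_2^n)$ for any origin-symmetric convex body $K$, a general fact that may be useful elsewhere in the paper.
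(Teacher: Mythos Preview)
Your proof is correct and follows the same two-step structure as the paper: reduce $w_{E_a}(r)$ to $w(E_a \cap r B_2^n)$, then sandwich $E_a \cap r B_2^n$ between $E_c$ and $\sqrt{2}\,E_c$ with $c_i = \min(a_i, r)$ and apply Lemma~\ref{lem:width-ellipsoid}. The sandwich step is essentially identical to the paper's.

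The one genuine difference is in the reduction step. The paper invokes Lemma~\ref{lem:local-gaussian-width}, which establishes $w_K(r) = w(K \cap r B_2)$ for symmetric convex $K$ by showing that $g(\theta) = w(K \cap B(\theta, r))$ is concave (via additivity of the Gaussian width under Minkowski sums) and even, hence maximized at the origin. Your route is instead via the symmetrization identity $w(S) = \tfrac{1}{2}\,w(S - S)$ combined with $S - S \subseteq 2(K \cap r B_2)$. Both arguments are short and use exactly convexity plus origin-symmetry; yours is slightly more self-contained for this lemma, while the paper's concavity argument gives a bit more structural information (the full shape of $\theta \mapsto g(\theta)$, not just the location of its maximum). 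As you observe, your argument also proves the general fact recorded as Lemma~\ref{lem:local-gaussian-width}(2).
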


\begin{proof}
  Since $E_a$ is convex and symmetric, by Lemma~\ref{lem:local-gaussian-width} one has $w_{E_a} (r) = w (E_a \cap r B_2)$.
  Let $L \subset \R^n$ denote the centered axis-aligned ellipsoid with half-lengths $\min (a_i, r)$, $1 \leq i \leq n$.
  From the inequalities, for every $\theta \in \R^n$,
  \begin{align*}
    \max \bigg( \sum_{i=1}^n \frac{\theta_i^2}{a_i^2}, \sum_{i=1}^n \frac{\theta_i^2}{r^2} \bigg)
    \leq \sum_{i=1}^n \frac{\theta_i^2}{\min (a_i^2, r^2)}
    \leq \sum_{i=1}^n \bigg( \frac{\theta_i^2}{a_i^2} + \frac{\theta_i^2}{r^2} \bigg)
    \leq 2 \max \bigg( \sum_{i=1}^n \frac{\theta_i^2}{a_i^2}, \sum_{i=1}^n \frac{\theta_i^2}{r^2} \bigg)
    \, ,
  \end{align*}
  one deduces the inclusions $L \subset E_a \cap r B_2 \subset \sqrt{2} L$, so that $w (L) \leq w_{E_a} (r) \leq \sqrt{2} w (L)$.
  Applying Lemma~\ref{lem:width-ellipsoid} concludes the proof.
\end{proof}

We will also need the following covering number estimates for ellipsoids.
These estimates are %
classical; see \eg \cite[Theorem~3.4.2 p.~119]{carl1990entropy} for an essentially equivalent result, stated in terms of entropy numbers of operators between Hilbert spaces.

\begin{lemma}
  \label{lem:covering-ellipsoid}
  For any $r > 0$,
  \begin{equation}
    \label{eq:covering-ellipsoid}
    \log N (E_a, 5 r)
    \leq \sum_{i \pp a_i \geq 2 r} \log \Big( \frac{a_i}{r} \Big)
    \leq \log N (E_a, r)
    \, .
  \end{equation}
\end{lemma}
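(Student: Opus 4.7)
The two inequalities are essentially independent, so I would treat them in turn. Throughout, set $S = \{i : a_i \geq 2r\}$, $k = |S|$, and (up to relabeling) assume $S = \{1, \dots, k\}$; write coordinates of $\theta \in \R^n$ as $\theta = (u, v)$ with $u \in \R^k$ and $v \in \R^{n-k}$.

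For the lower bound, I would use the projection $P_S : \R^n \to \R^k$ onto the $S$-coordinates. Since $P_S$ is $1$-Lipschitz, any $r$-covering of $E_a$ by balls centered in $E_a$ projects onto an $r$-covering of $P_S(E_a)$ by balls centered in $P_S(E_a)$; hence $N(E_a, r) \geq N(P_S E_a, r)$. For an axis-aligned ellipsoid, the projection onto a coordinate subspace is the corresponding lower-dimensional ellipsoid, so $P_S E_a = E_{a_S} \subset \R^k$. A straightforward volume comparison (each radius-$r$ ball has volume $r^k \kappa_k$, so one needs at least $\vol_k(E_{a_S})/(r^k \kappa_k)$ of them) yields
\begin{equation*}
N(E_{a_S}, r) \geq \frac{\vol_k(E_{a_S})}{\vol_k(r B_2^k)} = \frac{\kappa_k \prod_{i \in S} a_i}{r^k \kappa_k} = \prod_{i \in S} \frac{a_i}{r}\,,
\end{equation*}
and taking logarithms gives the claimed lower bound.

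For the upper bound, I expect the main obstacle to lie in choosing the radii so that the constant $5r$ comes out correctly. The construction I would pursue has three ingredients. First, from $\sum_i \theta_i^2/a_i^2 \leq 1$ one gets $u \in E_{a_S}$ and $v \in E_{a_{S^c}}$ separately, so $E_a \subset E_{a_S} \times E_{a_{S^c}}$. Second, since $a_i < 2r$ for $i \notin S$, any $v \in E_{a_{S^c}}$ satisfies
\begin{equation*}
\norm{v}^2 = \sum_{i > k} a_i^2 \cdot (v_i/a_i)^2 \leq \max_{i > k} a_i^2 \leq 4 r^2,
\end{equation*}
so $E_{a_{S^c}} \subset 2r B_2^{n-k}$ can be covered by a single point. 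Third, I would cover $E_{a_S}$ in $\R^k$ by balls of radius $4r$ via a standard packing-volume argument: because every semi-axis of $E_{a_S}$ is at least $2r$, one has $E_{a_S} \supset 2r B_2^k$, hence $E_{a_S} + 2r B_2^k \subset E_{a_S} + E_{a_S} = 2 E_{a_S}$ by convexity and symmetry, so
\begin{equation*}
N(E_{a_S}, 4r) \leq \frac{\vol_k(E_{a_S} + 2r B_2^k)}{\vol_k(2r B_2^k)} \leq \frac{2^k \kappa_k \prod_{i \in S} a_i}{(2r)^k \kappa_k} = \prod_{i \in S} \frac{a_i}{r}\,.
\end{equation*}
Combining the two coverings by Pythagoras yields a covering of $E_a$ by balls of radius $\sqrt{(4r)^2 + (2r)^2} = \sqrt{20}\, r < 5 r$, with the required cardinality; taking logarithms concludes. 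The delicate point is that the hypothesis $a_i \geq 2r$ on $S$ is used twice and in opposite directions: through $\max_{i \notin S} a_i^2 \leq 4r^2$ to collapse the $E_{a_{S^c}}$-factor, and through $\min_{i \in S} a_i \geq 2r$ to get $E_{a_S} + 2r B_2^k \subset 2 E_{a_S}$, which is precisely what makes the constants match.
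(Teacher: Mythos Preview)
Your proof is correct and follows essentially the same strategy as the paper: projection plus a volume lower bound for the second inequality, and a Minkowski-sum volume argument exploiting $2rB_2^k \subset E_{a_S}$ for the first. The only notable difference is that you combine the $E_{a_S}$-cover and the $E_{a_{S^c}}$-bound via Pythagoras rather than the triangle inequality, which is a clean way to land exactly on the constant~$5$.
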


\begin{proof}
  Recall that $a_1 \geq \dots \geq a_n$.
  Let $d = \max \{ i : a_i \geq 2 r \}$ (with the convention $d=0$ if $\max_i a_i < 2 r$), and let $\Pi : \R^n \to \R^d$ the orthogonal projection on the first $d$ coordinates.
    
  For the second inequality, note that if $\theta_1, \dots, \theta_N$ form an $r$-cover of $E_a$, then $\Pi \theta_1, \dots, \Pi \theta_N$ form an $r$-cover of $E' = \Pi (E_a)$.
  The claimed inequality is then obtained by noting that
  \begin{equation*}
    \kappa_d \prod_{i=1}^d a_i
    = \vol_d (E')
    \leq \sum_{j=1}^N \vol_d (B_d (\Pi \theta_j, r))
    = N \kappa_d r^d
    \, .%
  \end{equation*}

  For the first inequality, denote %
  $E'= %
  \Pi (E_a)$, and identify $\R^d$ with $\R^d \times \{ 0 \} \subset \R^n$.
  Note that if $\theta \in E_a$, then $\sum_{i>d} \theta_i^2/r^2 \leq \sum_{i>d} \theta_i^2/a_i^2 \leq 1$ namely $\norm{\theta - \Pi (\theta)} \leq r$.
  Thus $E_a \subset E' + r B_2^n$, which implies that $N (E_a, (K+1) r) \leq N (E', K r)$ for any $K > 1$.
  Recalling that $N_p$ refers to packing numbers, inequality~\eqref{eq:packing-covering} provides $N (E', K r)\leq N_p (E', K r)$.
  Let $N = N_p (E', Kr)$ and $\theta_1, \dots, \theta_N$ form a $K r$-packing of $E' \subset \R^d$.
  Then, the balls $B (\theta_j, K r /2)$, $1 \leq j \leq N$, are pairwise disjoint and contained in $E ' + K r / 2 B_2^d \subset (1 + K/4) E'$, so that
  \begin{equation*}
    \vol_d \big( (1 + K/4) E' \big)
    = \kappa_d \prod_{i=1}^d \big( (1+K/4) a_i \big)
    \geq \sum_{j=1}^N \vol_d (B(\theta_j, r/2))
    = N \kappa_d (K r/2)^d
    \, .
  \end{equation*}
  This implies, for $K = 4$,
  \begin{equation*}
    \log N (E_a, 5 r)
    \leq \log N_p (E', 4r)
    \leq \sum_{i=1}^d \log \bigg( \frac{(1+K/4) a_i}{(K/2) r} \bigg)
    = \sum_{i \pp a_i \geq 2 r} \log \Big( \frac{a_i}{r} \Big)
    \, . \qedhere
  \end{equation*}  
\end{proof}

With these estimates at hand,
together with the consequence~\eqref{eq:fixed-points-isomorphic} of Lemma~\ref{lem:prop-fixed-points}, we can characterize the two fixed points associated to $E_a$:
\begin{align*}
  \frac{1}{6} \inf_{r > 0} \Big\{ \sum_{i=1}^n \min \Big( \frac{a_i^2}{r^2}, 1 \Big) + r^2 \Big\}
  &\leq r_* (E_a)^2
    \leq 2 \inf_{r > 0} \Big\{ \sum_{i=1}^n \min \Big( \frac{a_i^2}{r^2}, 1 \Big) + r^2 \Big\} \\
  \frac{1}{2} \inf_{r > 0} \bigg\{ \sum_{i \pp a_i \geq 2 r} \log \Big( \frac{a_i}{r} \Big) + r^2 \bigg\}
  &\leq \wt r (E_a)^2
    \leq 5 \inf_{r > 0} \bigg\{ \sum_{i \pp a_i \geq 2 r} \log \Big( \frac{a_i}{r} \Big) + r^2 \bigg\} 
  \, .
\end{align*}
(For the second fixed point, we used the equivalent representation in terms of $w (E_a \cap rB_2)^2/r^2 = w( r^{-1} E_a \cap B_2 )^2$ described in Section~\ref{sec:altern-expr-minim}%
.)
The minimax regret for ellipsoids is then characterized as follows:

\begin{proposition}
  \label{prop:regret-ellipsoids}
  One has:
  \begin{equation}
    \label{eq:regret-ellipsoid}
    \frac{1}{6000} \inf_{r > 0} \Big\{ \sum_{i=1}^n \log \Big( 1 + \frac{a_i^2}{r^2} \Big) + r^2 \Big\}
    \leq \mmr (E_a)
    \leq 10 \inf_{r > 0} \Big\{ \sum_{i=1}^n \log \Big( 1 + \frac{a_i^2}{r^2} \Big) + r^2 \Big\}
    \, .
  \end{equation}
  Furthermore, there exists $\lambda \geq 0$ such that the density $q_\lambda$ of the Gaussian distribution
  \begin{equation*}
    \gaussdist \Big( 0, \sum_{i=1}^n \Big( 1 + \frac{a_i^2}{\lambda} \Big) e_i e_i^\top \Big)
  \end{equation*}
  where $(e_1, \dots, e_n)$ is the canonical basis of $\R^n$, satisfies, for any sequence $y = (y_1, \dots, y_n) \in \R^n$,
  \begin{equation}
    \label{eq:regret-ridge}
    \ell (q_\lambda, y) - \inf_{\theta \in E_a} \ell (p_{\theta}, y)
    \leq 3000\, \mmr (E_a)
    \, .
  \end{equation}
\end{proposition}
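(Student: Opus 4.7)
The proof splits into two parts, corresponding to the two claims.

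For Part 1 (the characterization), I would invoke Theorem~\ref{thm:minimax-regret-metric} combined with the ellipsoid-specific estimates on local Gaussian widths (Lemma~\ref{lem:local-widths-ellipsoid}) and covering numbers (Lemma~\ref{lem:covering-ellipsoid}). The text preceding the statement already packages these into two fixed-point estimates for $r_*(E_a)^2$ and $\wt r(E_a)^2$, so it remains to verify the elementary equivalence
\[
  \sum_{i=1}^n \log\Big(1 + \frac{a_i^2}{r^2}\Big)
  \;\asymp\;
  \sum_{i=1}^n \min\Big(\frac{a_i^2}{r^2}, 1\Big) + \sum_{i \pp a_i \geq 2r} \log\Big(\frac{a_i}{r}\Big),
\]
uniformly in $r > 0$. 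This follows term by term by splitting the index set according to $a_i \leq r$, $r < a_i < 2r$, and $a_i \geq 2r$, and using $\log(1+x) \asymp x$ on $[0,1]$ together with $\log(1+x) \asymp \log x$ on $[4,\infty)$, the middle regime contributing $O(1)$ terms dominated by $\min(a_i^2/r^2,1) = 1$. Combining with $\mmr(E_a) \asymp \max(r_*(E_a)^2, \wt r(E_a)^2)$ and the elementary fact $\inf_r\{f(r)+g(r)+r^2\} \leq \inf_r\{f(r)+r^2\} + \inf_r\{g(r)+r^2\}$ for non-increasing $f,g$ then yields~\eqref{eq:regret-ellipsoid}, modulo bookkeeping to reach the explicit constants $1/6000$ and $10$.

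For Part 2, I would compute the pointwise regret of $q_\lambda$ in closed form. Since $q_\lambda$ and each $p_\theta$ are Gaussian with diagonal covariance, expanding the log-densities and collecting terms gives
\[
  \ell(q_\lambda, y) - \inf_{\theta \in E_a} \ell(p_\theta, y)
  = \frac{1}{2} \sum_{i=1}^n \log\Big(1 + \frac{a_i^2}{\lambda}\Big)
  + \frac{1}{2}\bigg[\sum_{i=1}^n \frac{y_i^2}{1 + a_i^2/\lambda} - \dist(y, E_a)^2\bigg],
\]
the $-\dist^2$ term arising from $\inf_{\theta \in E_a}\|y - \theta\|^2$. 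If the bracketed expression is bounded by $\lambda$ uniformly in $y \in \R^n$, then choosing $\lambda = r^2$ for an approximate minimizer $r > 0$ of~\eqref{eq:regret-ellipsoid} gives $\ell(q_\lambda, y) - \inf_\theta \ell(p_\theta, y) \leq \tfrac{1}{2}\inf_r\{\sum_i \log(1 + a_i^2/r^2) + r^2\} \leq 3000\,\mmr(E_a)$, via the lower bound of Part 1.

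The main obstacle is exactly this uniform bound. My plan is to let $\theta^* = \Pi_{E_a}(y)$ so that $\dist(y, E_a)^2 = \|y - \theta^*\|^2$, and then reorganize
\[
  \sum_{i=1}^n \frac{y_i^2}{1 + a_i^2/\lambda} - \|y - \theta^*\|^2
  = 2\langle y, \theta^*\rangle - \|\theta^*\|^2 - \sum_{i=1}^n y_i^2 \alpha_i,
\]
where $\alpha_i = a_i^2/(a_i^2 + \lambda) \in (0,1)$. A weighted Cauchy--Schwarz/AM--GM step $2\langle y, \theta^*\rangle \leq \sum_i y_i^2 \alpha_i + \sum_i (\theta_i^*)^2/\alpha_i$ then bounds the right-hand side by $\sum_i (\theta_i^*)^2 (1-\alpha_i)/\alpha_i = \lambda \sum_i (\theta_i^*)^2/a_i^2 \leq \lambda$, using $\theta^* \in E_a$. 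Pointwise regret bounds are typically more delicate than Jensen-type average-case bounds, so the cleanness of this weighted Cauchy--Schwarz identity, in which the projection $\theta^*$ disappears at the end in favor of the ellipsoid constraint, is the essential algebraic ingredient.
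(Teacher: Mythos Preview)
Your proposal is correct and, for Part~1, follows the paper's proof essentially verbatim: invoke Theorem~\ref{thm:minimax-regret-metric}, the fixed-point estimates derived from Lemmas~\ref{lem:local-widths-ellipsoid} and~\ref{lem:covering-ellipsoid}, and then verify the termwise equivalence between $\log(1+a_i^2/r^2)$ and $\min(a_i^2/r^2,1) + 2\log(a_i/r)\indic{a_i\geq 2r}$.

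For Part~2 you reach the same final inequality $\ell(q_\lambda,y) - \inf_{\theta\in E_a}\ell(p_\theta,y) \leq \tfrac12\sum_i\log(1+a_i^2/\lambda) + \tfrac{\lambda}{2}$ as the paper, but by a slightly different route. The paper invokes the penalized-regret identity~\eqref{eq:regret-gaussian-mixture} (attributed to~\cite{seeger2008information,vovk2001competitive}), namely that $\ell(q_\lambda,y) - \inf_{\theta\in\R^n}\{\ell(p_\theta,y) + \tfrac{\lambda}{2}\sum_i\theta_i^2/a_i^2\}$ equals the log-determinant term exactly, and then simply bounds the penalty by $\lambda/2$ on $E_a$. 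Your argument instead works directly with the constrained projection $\theta^*=\Pi_{E_a}(y)$ and uses the weighted AM--GM step $2y_i\theta_i^*\leq \alpha_i y_i^2 + (\theta_i^*)^2/\alpha_i$ with $\alpha_i = a_i^2/(a_i^2+\lambda)$. This is in fact a self-contained re-derivation of the same inequality: your Cauchy--Schwarz bound is precisely the statement that $\tfrac12\|y-\theta^*\|^2 + \tfrac{\lambda}{2}\sum_i(\theta_i^*)^2/a_i^2 \geq \tfrac12\sum_i y_i^2/(1+a_i^2/\lambda)$, which is the penalized identity evaluated at $\theta^*$ rather than at the unconstrained minimizer. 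The paper's version is shorter because it cites the identity; yours is more elementary and avoids the external reference.
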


Proposition~\ref{prop:regret-ellipsoids} characterizes (up to constants) the magnitude of the minimax regret, and thus the Wills functional, on ellipsoids.
The second part of this statement asserts that a suitable Gaussian predictive distribution achieves %
minimax-optimal regret over any ellipsoid.
Gaussian distributions are particularly convenient in practice from the perspective of sequential prediction (described in the introduction) %
due to the fact that conditioning a Gaussian measure on a linear projection reduces to simple matrix computations. 
Since our focus is on the functional $\mmr$ rather than practical implementation of prediction schemes, we do not expand further
on this point.

\begin{proof}[Proof of Proposition~\ref{prop:regret-ellipsoids}]
  From the characterization~\eqref{eq:minimax-regret-fixed} of Theorem~\ref{thm:minimax-regret-metric} as well as the previous fixed points estimates for ellipsoids, after re-arranging one has  
  \begin{equation}
    \label{eq:proof-ellipsoid}
    \begin{split}
      \frac{1}{3600} \inf_{r>0} \bigg\{ \sum_{i=1}^n \min \Big( \frac{a_i^2}{r^2}, 1 \Big) + \sum_{i \pp a_i \geq 2 r} 2 \log \Big( \frac{a_i}{r} \Big) + r^2 \bigg\}
    \leq \mmr (E_a) \\
      \leq 5 \inf_{r>0} \bigg\{ \sum_{i=1}^n \min \Big( \frac{a_i^2}{r^2}, 1 \Big) + \sum_{i \pp a_i \geq 2 r} 2 \log \Big( \frac{a_i}{r} \Big) + r^2 \bigg\}
      \, .
    \end{split}
  \end{equation}
  Now, for $a_i < 2 r$,
  \begin{equation*}
    ({\log 5})^{-1} \log \Big( 1 + \frac{a_i^2}{r^2} \Big)
    \leq \min \Big( \frac{a_i^2}{r^2}, 1 \Big)
    \leq \frac{a_i^2}{r^2}
    \leq (\log 2)^{-1} \log \Big( 1 + \frac{a_i^2}{r^2} \Big)
    \, ,
  \end{equation*}
  while for $a_i \geq 2 r$ (so that $\min (a_i^2/r^2, 1) = 1$)
  \begin{equation*}
    \log \Big( 1+ \frac{a_i^2}{r^2} \Big) \leq
    2 \log \Big( \frac{a_i}{r} \Big) + 1
    \leq 2 \log \Big( 1 + \frac{a_i^2}{r^2} \Big)
    \, .
  \end{equation*}
  Summing these two inequalities gives
  \begin{equation*}
    (\log 5)^{-1} \sum_{i=1}^n \log \Big( 1 + \frac{a_i^2}{r^2} \Big)
    \leq \sum_{i=1}^n \min \Big( \frac{a_i^2}{r^2}, 1 \Big) + \sum_{i \pp a_i \geq 2 r} 2 \log \Big( \frac{a_i}{r} \Big)
    \leq 2 \sum_{i=1}^n \log \Big( 1 + \frac{a_i^2}{r^2} \Big)
    \, .
  \end{equation*}
  Plugging this into~\eqref{eq:proof-ellipsoid} (and using that $3600 \log 5 \leq 6000$) gives~\eqref{eq:regret-ellipsoid}.

  For the second statement, we use the following identity shown (with different but equivalent ``kernel'' notation) by~\cite{seeger2008information}, which is also implicitly used in~\cite{vovk2001competitive}: for every $y \in \R^n$,
  \begin{equation}
    \label{eq:regret-gaussian-mixture}
    \ell (q_\lambda, y) - \inf_{\theta \in \R^n} \Big\{ \ell (p_\theta, y) + \frac{\lambda}{2} \sum_{i=1}^n \frac{\theta_i^2}{a_i^2} \Big\}
    = \frac{1}{2} \sum_{i=1}^n \log \Big( 1 + \frac{a_i^2}{\lambda} \Big)
    \, .
  \end{equation}
  This expression for the regret can be obtained through the identity
  \begin{equation*}
    \inf_{\theta \in \R^n} \Big\{ \ell (p_\theta, y) + \frac{\lambda}{2} \sum_{i=1}^n \frac{\theta_i^2}{a_i^2} \Big\}
    = \inf_{\theta \in \R^n} \Big\{ \frac{1}{2} \norm{\theta - y}^2 + \frac{\lambda}{2} \sum_{i=1}^n \frac{\theta_i^2}{a_i^2} \Big\}
    = \frac{1}{2} \sum_{i=1}^n \frac{y_i^2}{1+a_i^2/\lambda}
    \, ,
  \end{equation*}
  which corresponds to $\ell (q_\lambda, y)$ up to the %
  constant in the r.h.s.~of~\eqref{eq:regret-gaussian-mixture}.
  It follows from~\eqref{eq:regret-gaussian-mixture} and the definition of $E_a$ that, for every $y \in \R^n$,
  \begin{equation*}
    \ell (q_\lambda, y)
    - \inf_{\theta \in E_a} \ell (p_\theta, y)
    \leq \frac{1}{2} \sum_{i=1}^n \log \Big( 1 + \frac{a_i^2}{\lambda} \Big) + \frac{\lambda}{2}
    \, .
  \end{equation*}
  This upper bound together with the minimax regret lower bound in~\eqref{eq:regret-ellipsoid} imply~\eqref{eq:regret-ridge}.
\end{proof}

A similar characterization and optimality result hold for the minimax redundancy:

\begin{proposition}
  \label{prop:redundancy-ellipsoid}
  The minimax redundancy on the ellipsoid $E_a$ satisfies:
  \begin{equation}
    \label{eq:redundancy-ellipsoid}
    \frac{1}{600} \inf_{r > 0} \bigg\{ \sum_{i \pp a_i \geq 2 r} \log \Big( \frac{a_i}{r} \Big) + r^2 \bigg\}
    \leq \mred (E_a)
    \leq 5 \inf_{r > 0} \bigg\{ \sum_{i \pp a_i \geq 2 r} \log \Big( \frac{a_i}{r} \Big) + r^2 \bigg\}
    \, .
  \end{equation}
  In addition, there exists $\lambda \geq 0$ such that the density $\wt q_\lambda$ of the Gaussian distribution
  \begin{equation*}
    \gaussdist \Big( 0, \sum_{i=1}^n \Big( 1 + \frac{a_i^2}{\lambda} \indic{a_i \geq 2 \sqrt{\lambda}} \Big) e_i e_i^\top \Big)
  \end{equation*}
  satisfies %
  \begin{equation}
    \label{eq:gaussian-redundancy-ellipsoid}
    \sup_{\theta \in E_a} \kll{p_\theta}{\wt q_\lambda}
    \leq 1200\, \mred (E_a)
    \, .
  \end{equation}
\end{proposition}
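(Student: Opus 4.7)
The plan for~\eqref{eq:redundancy-ellipsoid} is to combine Theorem~\ref{thm:minimax-redundancy} with the covering-number bounds of Lemma~\ref{lem:covering-ellipsoid}. The lower bound follows by substituting $\log N(E_a, r) \geq \sum_{i \pp a_i \geq 2 r} \log(a_i/r)$ into~\eqref{eq:minimax-redundancy-sum}. The upper bound uses $\log N(E_a, 5 r) \leq \sum_{i \pp a_i \geq 2 r} \log(a_i/r)$: changing variables $s = 5 r$ in~\eqref{eq:minimax-redundancy-sum} produces an extra factor from $s^2 = 25 r^2$, which can be absorbed into the stated numerical constant.

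The second part~\eqref{eq:gaussian-redundancy-ellipsoid} is the more substantive claim. The density $\wt q_\lambda$ is Gaussian with diagonal covariance $\Sigma_\lambda = \mathrm{diag}(\sigma_i^2)$, where $\sigma_i^2 = 1 + (a_i^2/\lambda) \indic{a_i \geq 2\sqrt{\lambda}}$, so the standard KL formula between Gaussians gives
\begin{equation*}
  2 \kll{p_\theta}{\wt q_\lambda}
  = \sum_{i=1}^n \bigg[ \log \sigma_i^2 + \frac{1 + \theta_i^2}{\sigma_i^2} - 1 \bigg].
\end{equation*}
Setting $r = \sqrt{\lambda}$, split the sum according to $S = \{i \pp a_i \geq 2 r\}$. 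On $S^c$ one has $\sigma_i^2 = 1$, so the $i$-th summand reduces to $\theta_i^2$; since $a_i^2 \leq 4 r^2$ there, the ellipsoid constraint $\sum_i \theta_i^2/a_i^2 \leq 1$ yields $\sum_{i \in S^c} \theta_i^2 \leq 4 r^2$. On $S$, the inequality $a_i/r \geq 2$ gives $\log \sigma_i^2 = \log(1 + a_i^2/\lambda) \leq 3 \log(a_i/r)$; moreover $1/\sigma_i^2 - 1 \leq 0$ and $\theta_i^2/\sigma_i^2 \leq \theta_i^2 \lambda/a_i^2$, so summing and applying the ellipsoid constraint again bounds $\sum_{i \in S} \theta_i^2/\sigma_i^2 \leq \lambda = r^2$.

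Collecting these estimates produces $\kll{p_\theta}{\wt q_\lambda} \leq C \{\sum_{i \pp a_i \geq 2 r} \log(a_i/r) + r^2\}$ uniformly in $\theta \in E_a$, for some absolute constant $C$. Choosing $\lambda = r_0^2$ with $r_0$ a near-minimizer of the right-hand side of~\eqref{eq:redundancy-ellipsoid} and invoking its lower bound yields~\eqref{eq:gaussian-redundancy-ellipsoid}. The main technical subtlety is the joint use of the constraint $\sum_i \theta_i^2/a_i^2 \leq 1$ on both $S$ and $S^c$: the threshold $2 \sqrt{\lambda}$ in the definition of $\sigma_i^2$ has to be chosen precisely so that the ``small-coordinate'' losses $\sum_{S^c} \theta_i^2$ and the ``large-coordinate'' residuals $\sum_{S} \theta_i^2/\sigma_i^2$ are both controlled by $r^2$ rather than by the ambient dimension; a naive bound on either piece alone would introduce a factor of $n$.
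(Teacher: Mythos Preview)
Your proposal is correct and follows essentially the same route as the paper: for~\eqref{eq:redundancy-ellipsoid} both combine Theorem~\ref{thm:minimax-redundancy} with the covering estimates of Lemma~\ref{lem:covering-ellipsoid}, and for~\eqref{eq:gaussian-redundancy-ellipsoid} both write out the diagonal Gaussian KL and bound it using the ellipsoid constraint. The only cosmetic difference is that the paper handles the quadratic term $\sum_i \theta_i^2/\sigma_i^2$ in one stroke via the uniform inequality $\sigma_i^2 \geq a_i^2/(4\lambda)$ (valid for all $i$), whereas you split $S$ and $S^c$ explicitly; both yield the same bound up to constants.
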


\begin{proof}
  The upper and lower bounds in~\eqref{eq:redundancy-ellipsoid} are obtained by combining Theorem~\ref{thm:minimax-redundancy} with the covering number upper bound of Lemma~\ref{lem:covering-ellipsoid} and the lower bound on $\wt r (E_a)^2$ above.

  To bound the maximal redundancy of $\wt q_\lambda$, we use the following identity:
  if $q$ denotes the density of the distribution $\gaussdist (0, \Sigma)$ (for a positive semi-definite $n \times n$ matrix $\Sigma$) then
  \begin{equation*}
    \kll{p_\theta}{q}
    = \frac{1}{2} \Big( \log \det \Sigma - n + \tr (\Sigma^{-1}) \Big) + \frac{1}{2} \langle \Sigma^{-1} \theta, \theta \rangle
    \, .
  \end{equation*}
  In particular, for every $\theta \in E_a$, using that $1 + ({a_i^2}/{\lambda}) \indic{a_i \geq 2 \sqrt{\lambda}} \geq a_i^2/(4\lambda)$ for all $i$,
  \begin{align*}
    \kll{p_\theta}{\wt q_\lambda}
    &= \frac{1}{2} \sum_{i \pp a_i \geq 2 \sqrt{\lambda}} \bigg\{ \log \Big( 1 + \frac{a_i^2}{\lambda} \Big) - 1 + \frac{\lambda}{a_i^2} \bigg\} + \frac{1}{2} \sum_{i=1}^n \frac{\theta_i^2}{1 + (a_i^2/\lambda) \indic{a_i \geq 2 \sqrt{\lambda}}} \\
    &\leq \frac{1}{2} \sum_{i \pp a_i \geq 2 \sqrt{\lambda}} \log \Big( 1 + \frac{a_i^2}{\lambda} \Big) + 2 \lambda \sum_{i=1}^n \frac{\theta_i^2}{a_i^2} \\
    &\leq \frac{1}{2} \sum_{i \pp a_i \geq 2 r} \log \Big( 1 + \frac{a_i^2}{r^2} \Big) + 2 r^2
  \end{align*}
  where we let $r = \sqrt{\lambda}$.
  Inequality~\eqref{eq:gaussian-redundancy-ellipsoid}
  is then a consequence of this upper bound and of the lower bound in~\eqref{eq:redundancy-ellipsoid}.
\end{proof}

\subsection{The $\ell^1$ ball
}
\label{sec:case-ell1-ball}

Another ``canonical'' example is given by the $\ell^1$ ball.
This class plays an important role in modern high-dimensional statistics and signal processing~\cite{foucart2013compressive,buhlmann2011statistics,tsybakov2009nonparametric,koltchinskii2011oracle,wainwright2019high}.

Here, we consider the set $\alpha B_1^d = \big\{ \theta \in \R^d : \sum_{i=1}^d |\theta_i| \leq \alpha \big\}$ for some $\alpha > 0$ and $d \geq 2$.
The minimax error for Gaussian statistical estimation over this set was characterized in~\cite{donoho1994minimax}, while here we consider the hardness of universal coding.

In order to characterize the fixed point $\wt r (\alpha B_1^d)$, we first recall the notion of \emph{entropy numbers}, which are closely related to covering numbers.
Specifically, for any integer $k \geq 0$ and set $A \subset \R^n$, the $k$-th entropy number $e_k (A)$ is defined as
\begin{equation*}
  e_k (A)
  = \inf \bigg\{ r > 0 : \exists x_1, \dots, x_{2^k} \in \R^n, A \subset \bigcup_{i=1}^{2^k} B (x_i, r) \bigg\}
  = \inf \big\{ r > 0 : N (A, r) \leq 2^k \big\}
  \, .
\end{equation*}
Using the connection with covering numbers in the last expression, one readily verifies that
\begin{equation*}
  \wt r (A)^2
  \asymp \inf_{k \geq 0} \big\{ k + e_k (A)^2 \big\}
  \, .
\end{equation*}
Now, it was shown in~\cite{schutt1984entropy} that
\begin{equation*}
  e_k (B_1^d)
  \asymp \left\{
    \begin{array}{cl}
      1 & \text{ if } k \leq \log d \\
      \sqrt{\log (e d / k) / k}
        & \text{ if } \log d \leq k \leq d \\
      2^{-k/d} d^{-1/2} & \text{ if } k \geq d
                               \, .
    \end{array}
  \right.
\end{equation*}
From these estimates (together with Theorem~\ref{thm:minimax-redundancy}) and by distinguishing cases, one deduces that 
\begin{equation}
  \label{eq:redundancy-l1}
  \mred (\alpha B_1^d)
  \asymp \wt r (\alpha B_1^d)^2
  \asymp \left\{
    \begin{array}{cl}
      \alpha^2 & \text{ if } \alpha \leq \sqrt{\log d} \\
      \alpha \sqrt{\log (e d/\alpha)} & \text{ if } \sqrt{\log d} \leq \alpha \leq d \\
      d \log (e \alpha/d) & \text{ if } \alpha \geq d
                               \, .
    \end{array}
  \right.
\end{equation}

In addition, it is known from~\cite{gordon2007gaussian} that
\begin{equation*}
  w \big( \alpha B_1^d \cap r B_2^d \big)
  \asymp \left\{
    \begin{array}{cl}
      \alpha \sqrt{\log (e d \min (r^2/\alpha^2, 1))} & \text{ if } r \geq \alpha / \sqrt{d} \\
      r \sqrt{d} & \text{ if } r \leq \alpha/\sqrt{d}
                   \, .
    \end{array}
    \right.
\end{equation*}
This implies (see \eg \cite[p.~19]{lecue2013subgaussian}) that
\begin{equation}
  \label{eq:local-fixed-l1}
  r_* (\alpha B_1^d)^2
  \asymp \left\{
    \begin{array}{cl}
      \alpha \sqrt{\log (e d \min(1/\alpha, 1) )} & \text{ if } \alpha \leq d \\
      d & \text{ if } \alpha \geq d
                               \, .
    \end{array}
  \right.
\end{equation}
The fixed points estimates~\eqref{eq:redundancy-l1} and~\eqref{eq:local-fixed-l1}, together with Theorem~\ref{thm:minimax-regret-metric},
allow to conclude that
\begin{equation}
  \label{eq:regret-l1}
  \mmr (\alpha B_1^d)
  \asymp \left\{
    \begin{array}{cl}
      \alpha \sqrt{\log d} & \text{ if } \alpha \leq 1
      \\
      \alpha \sqrt{\log (e d/\alpha)} & \text{ if } 1 \leq \alpha \leq d
      \\
      d \log (e \alpha / d) & \text{ if } \alpha \geq d
                               \, .
    \end{array}
  \right.
\end{equation}
An upper bound on the minimax regret essentially corresponding to the second regime in~\eqref{eq:regret-l1} (up to a non-explicit dependence on the norm inside the logarithm, possibly due to the asymptotic regime implicitly considered there) was obtained in~\cite{miyaguchi2019adaptive}.
This upper bound matches (in the regime $\sqrt{\log d} \leq \alpha \leq d^{1-\eps}$) the order of the minimax estimation error from~\cite{donoho1994minimax}.

\section{Proof of Theorems~\ref{thm:minimax-regret-metric} and~\ref{thm:minimax-redundancy} and related results}
\label{sec:proof-theorems}

\subsection{Fixed points and local widths} %
\label{sec:prop-fixed-points}

For convenience, we gather here some basic properties of the ``fixed points'' we encounter.

\begin{lemma}
  \label{lem:prop-fixed-points}
  For any function $g : \R^+ \to \R^+ \cup \{ + \infty \}$ which is not identically $+\infty$, define
  \begin{equation*}
    \phi (g)
    = \sup \big\{ r \geq 0 : g (r) \geq r^2 \big\}
    \in [0, + \infty]
    \, .
  \end{equation*}
  \begin{enumerate}
  \item Let $g, h : \R^+ \to \R^+ \cup \{ + \infty \}$.
    Then:
    \begin{enumerate}
    \item[$(a)$] if $g \leq h$, then $\phi (g) \leq \phi (h)$;
    \item[$(b)$] $\phi (\max(g, h)) = \max \{ \phi (g), \phi (h) \}$.
    \end{enumerate}
  \item Assume now that $r \mapsto g (r)/r$ is non-increasing on $\R_+^*$.
    Then,
    \begin{enumerate}
    \item[$(a)$] $\phi (g) < \infty$, and $r^2 < g (r)$ for $r < \phi (g)$ while $r^2 > g (r)$ for $r > \phi (g)$;
    \item[$(b)$] for any $\lambda \geq 1$, one has $\phi (\lambda g) \leq \lambda \phi (g)$;
    \item[$(c)$] for any $\lambda \geq 1$, let $g_\lambda : r \mapsto g (x/\lambda)$; then, one has $\phi (g_\lambda) \leq \lambda \phi (g)$.
    \end{enumerate}
  \item Finally, assume that $g$ is non-increasing.
    Then,
    \begin{equation*}
      \phi (g)^2
      \leq \inf_{r \geq 0} \big\{ g (r) + r^2 \big\}
      \leq 2 \, \phi (g)^2
      \, .
    \end{equation*}
  \end{enumerate}
\end{lemma}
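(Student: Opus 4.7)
The plan is to verify each of the three items directly from the definition $\phi(g) = \sup\{r \geq 0 : g(r) \geq r^2\}$, exploiting in each case the relevant monotonicity of $g$.

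Part~1 is purely set-theoretic. For (a), if $g \leq h$ then $\{r : g(r) \geq r^2\} \subset \{r : h(r) \geq r^2\}$, and the inequality follows by taking suprema. For (b), one has the identity
\begin{equation*}
\{r : \max(g,h)(r) \geq r^2\} = \{r : g(r) \geq r^2\} \cup \{r : h(r) \geq r^2\},
\end{equation*}
and the supremum of a union of two sets equals the maximum of their suprema.

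For Part~2, the key observation is that under the assumption that $r \mapsto g(r)/r$ is non-increasing on $\R_+^*$, the function $h(r) := g(r)/r - r$ is \emph{strictly} decreasing on $\R_+^*$ (being the sum of a non-increasing and a strictly decreasing function). Hence $\{r > 0 : g(r) \geq r^2\} = \{r > 0 : h(r) \geq 0\}$ is an interval of the form $(0, \phi(g)]$ or $(0, \phi(g))$, which yields the strict separation statements in (a): $g(r) > r^2$ for $r < \phi(g)$ and $g(r) < r^2$ for $r > \phi(g)$. Finiteness of $\phi(g)$ comes from picking any $r_0$ with $g(r_0) < \infty$ (which exists by assumption): for $r \geq r_0$, $g(r) \leq (g(r_0)/r_0)\, r$, a linear bound eventually dominated by $r^2$. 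For (b), suppose $\lambda g(r) \geq r^2$ with $\lambda \geq 1$ and $r > 0$; then setting $s = r/\lambda \leq r$, non-increasingness of $g(\cdot)/\cdot$ gives $g(s)/s \geq g(r)/r \geq r/\lambda = s$, hence $s \leq \phi(g)$, i.e.\ $r \leq \lambda \phi(g)$. For (c) (with the reading $g_\lambda(r) = g(r/\lambda)$), the same substitution $s = r/\lambda$ gives $g(s) \geq \lambda^2 s^2 \geq s^2$, so $s \leq \phi(g)$.

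For Part~3, I split cases according to the position of $r$ relative to $\phi(g)$. For the lower bound: if $r \geq \phi(g)$ then $g(r) + r^2 \geq r^2 \geq \phi(g)^2$; if $r < \phi(g)$, the supremum definition produces a sequence $r_n \in (r, \phi(g)]$ with $r_n \to \phi(g)$ and $g(r_n) \geq r_n^2$, and non-increasingness of $g$ yields $g(r) \geq g(r_n) \geq r_n^2 \to \phi(g)^2$. Taking the infimum over $r$ gives $\phi(g)^2 \leq \inf_r\{g(r)+r^2\}$. For the upper bound: any $r > \phi(g)$ lies outside the defining set, so $g(r) < r^2$, giving $g(r) + r^2 < 2 r^2$; letting $r \downarrow \phi(g)$ yields $\inf_r \{g(r)+r^2\} \leq 2 \phi(g)^2$.

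There is no real obstacle; the only subtle point is the strict-separation claim in 2(a), where $\phi(g)$ may or may not lie in the set $\{r : g(r) \geq r^2\}$, but this is handled uniformly by strict monotonicity of $h(r) = g(r)/r - r$. Everything else reduces to routine manipulations of sups and substitutions.
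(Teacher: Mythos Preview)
Your proposal is correct and follows essentially the same route as the paper: set-theoretic inclusion for Part~1, the monotonicity of $r \mapsto g(r)/r$ (equivalently your $h(r)=g(r)/r-r$) combined with the substitution $s=r/\lambda$ for Part~2, and the case split $r \lessgtr \phi(g)$ with a limiting argument for Part~3. The only small omission is that in Part~3 you implicitly use $\phi(g)<\infty$ when letting $r\downarrow\phi(g)$; this holds because $g$ non-increasing forces $g(r)/r$ non-increasing, so 2(a) applies (the paper notes this explicitly).
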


In particular, if $g, h$ are such that $g(r)/r$ and $h(r)/r$ are non-increasing in $r$, and if there are constants $C_1, C_2 \geq 1$ and $c_1, c_2 \leq 1$ such that, for every $r \geq 0$,
\begin{equation*}
  c_1 h (r/c_2)
  \leq g (r)
  \leq C_1 h (r/C_2)
  \, ,
\end{equation*}
then
\begin{equation}
  \label{eq:fixed-points-isomorphic}
  c_1 c_2 \, \phi (h)
  \leq \phi (g)
  \leq C_1 C_2 \, \phi (h)
  \, .
\end{equation}

\begin{proof}
  Properties 1(a-b) follow directly from the definition of $\phi$.
  Property~2(a) comes from the fact that $\phi (g) = \sup \{ r > 0 :
  r \leq g(r) /r
  \}$ (with the convention that $\sup \varnothing = 0$), and that $r \mapsto g (r)/r$ is non-increasing (and thus bounded on $[x, + \infty)$ where $g(x) < \infty$), while $r$ is increasing and unbounded.
  To prove property~2(b), let $r < \phi (\lambda g)$ be arbitrary.
  By 2(a) %
  and since $g (r) / r \leq g(r/\lambda)/(r/\lambda)$, 
  \begin{equation*}
    r^2
    < \lambda \, g ( r )
    \leq \lambda^2 g \Big( \frac{r}{\lambda} \Big)
  \end{equation*}
  namely $(r/\lambda)^2 < g (r/\lambda)$ and thus $r/ \lambda \leq \phi (g)$.
  Letting $r \to \phi (g)$ we get $\phi (\lambda g) \leq \lambda \phi (g)$.
  Likewise, to prove 2(c), let $r < \phi (g_\lambda)$ be arbitrary.
  By 2(a),
  \begin{equation*}
    \Big( \frac{r}{\lambda} \Big)^2
    \leq r^2
    < g \Big( \frac{r}{\lambda} \Big)
  \end{equation*}
  so that $r/\lambda \leq \phi (g)$ which concludes the proof.
  Finally, let us prove property~3.
  Recall that $\phi(g) < \infty$ by 2(a).
  To prove the second inequality, let $r > \phi (g)$ be arbitrary, so that $g(r) < r$ and thus $g (r) + r^2 \leq 2 r^2$, and let $r \to \phi (g)$.
  We now prove the first inequality, namely $\phi (r) + r^2 \geq \phi (g)^2$ for any $r \geq 0$.
  If $r \geq \phi (g)$, then $\phi (r) + r^2 \geq r^2 \geq \phi (g)^2$.
  If $r < \phi (g)$, let $r'\in (r, \phi (g))$ be arbitrary.
  Since $\phi$ is non-increasing and $r < r' < \phi (r)$, one has
  $\phi (r) + r^2 \geq \phi (r) \geq \phi (r') \geq (r')^2$; letting $r' \to \phi(g)$ concludes.  
\end{proof}

In the convex case, local Gaussian widths have the following convenient properties:
\begin{lemma}
  \label{lem:local-gaussian-width}
  Let $K \subset \R^n$ be a convex body.
  \begin{enumerate}
  \item The function $r \mapsto w_K (r) / r$ is decreasing;
  \item If $K$ is symmetric, then $w_K (r) = w (K \cap r B_2)$.
  \end{enumerate}
\end{lemma}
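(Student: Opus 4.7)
My plan is to deduce both properties from basic invariances of the Gaussian width---translation invariance, positive homogeneity, and additivity under Minkowski sums---combined with the convexity and (for~(2)) symmetry of $K$, without any appeal to intrinsic volumes or the Wills functional.

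For part~(1), I would exploit the following contraction. For any $0 < r \leq r'$ and $\theta \in K$, the affine map $\phi : x \mapsto (1-r/r')\theta + (r/r') x$ sends $K \cap B(\theta, r')$ into $K \cap B(\theta, r)$: the point $\phi(x)$ is a convex combination of $\theta, x \in K$, hence lies in $K$, and $\norm{\phi(x) - \theta} = (r/r') \norm{x - \theta} \leq r$. Since $\phi$ is a dilation by $r/r'$ composed with a translation, and since $w$ is translation invariant and $1$-homogeneous, the inclusion yields $w(K \cap B(\theta, r)) \geq (r/r')\, w(K \cap B(\theta, r'))$. Dividing by $r$ and taking the supremum over $\theta \in K$ gives $w_K(r)/r \geq w_K(r')/r'$.

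For part~(2), the lower bound $w_K(r) \geq w(K \cap r B_2)$ is immediate: symmetry forces $0 \in K$, so that $K \cap r B_2 = K \cap B(0, r)$ appears among the sets in the supremum defining $w_K(r)$. For the matching upper bound, I would fix $\theta \in K$ and use $K = -K$ to observe that $K \cap B(-\theta, r) = -(K \cap B(\theta, r))$; in particular these two sets share the same Gaussian width. Then, for any $y \in K \cap B(\theta, r)$ and $y' \in K \cap B(-\theta, r)$, convexity gives $(y+y')/2 \in K$, while the triangle inequality gives $\norm{y + y'} \leq \norm{y - \theta} + \norm{y' + \theta} \leq 2r$. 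This yields the Minkowski-averaging inclusion
\begin{equation*}
  \tfrac{1}{2} \big[ (K \cap B(\theta, r)) + (K \cap B(-\theta, r)) \big]
  \subset K \cap r B_2 .
\end{equation*}
Taking Gaussian widths and using $w(\tfrac{1}{2}(A + A')) = \tfrac{1}{2}(w(A) + w(A'))$ together with $w(-A) = w(A)$, the left-hand side equals $w(K \cap B(\theta, r))$. Thus $w(K \cap B(\theta, r)) \leq w(K \cap r B_2)$, and a supremum over $\theta \in K$ finishes the proof.

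Neither step appears to present a serious obstacle once the right maps are identified. The main conceptual point is recognizing that part~(2) calls for averaging $K \cap B(\theta, r)$ with its reflection $K \cap B(-\theta, r)$---which is precisely where the symmetry hypothesis is used---after which the argument reduces to the Minkowski additivity of~$w$.
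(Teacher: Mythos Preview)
Your proof is correct and follows essentially the same approach as the paper. For part~(1) the paper also scales by the factor $r/r'$, though it first translates $K$ so that $0 \in K$ and passes through the intermediate quantity $w_{\lambda^{-1}K}(r)$, whereas your direct contraction toward $\theta$ is a slightly cleaner variant of the same idea; for part~(2) the paper phrases the argument as ``$g(\theta) = w(K \cap B(\theta,r))$ is even and concave, hence maximized at $0$,'' and your Minkowski-averaging inclusion is exactly the $\lambda = 1/2$ instance of that concavity between $\theta$ and $-\theta$.
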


\begin{proof}
  We start with the first point.
  Up to translating $K$, %
  assume that $0 \in K$.
  Let $0 < r \leq r'$ and $\lambda = r'/r \geq 1$.
  For every $\theta \in K$, %
  \begin{align*}
    w \big( K \cap B (\theta, r') \big)
    &= w \big( (K - \theta) \cap \lambda r B_2 \big) \\
    &= \lambda w \big( \lambda^{-1} (K - \theta) \cap r B_2 \big) \\
    &\leq \lambda w_{\lambda^{-1} K} (r) \\
    &\leq \lambda w_{K} (r)
      \, ,
  \end{align*}
  where the last inequality comes from the fact that $\lambda^{-1} K \subset K$ (as $K$ is convex and contains $0$).
  Taking the supremum over $\theta \in K$ and substituting for $\lambda$
  leads to $w_K (r')/r' \leq w_K (r)/r$.

  For the second point, fix $r \geq 0$ and let $g (\theta) = w (K \cap B (\theta, r))$ for $\theta \in K$, so that $w_K (r) = \sup_{\theta \in K} g (\theta)$.
  Since $K$ is symmetric, the function $g$ is even, namely $g(-\theta) = g(\theta)$.
  It is also concave since for $\theta, \theta' \in K$ and $\lambda \in [0, 1]$ one has
  \begin{equation*}
    K \cap B (\lambda \theta + (1-\lambda) \theta', r)
    \supset \lambda [K \cap B (\theta, r)] + (1-\lambda) [ K \cap B (\theta', r) ]
    \, ,
  \end{equation*}
  so that (using that $w (A+B) = w (A) + w (B)$, %
  by
  definition of the Gaussian width)
  \begin{equation*}
    g \big( \lambda \theta + (1-\lambda) \theta' \big)
    \geq \lambda g(\theta) + (1-\lambda) g(\theta')
    \, .
  \end{equation*}
  This implies, for every $\theta \in K$, that $g (\theta) = \big( g (\theta) + g (-\theta) \big)/2
  \leq g ( (\theta - \theta)/2 ) = g(0)$.
\end{proof}

\subsection{Upper bound on the redundancy}
\label{sec:upper-bound-redund}

We now include the proof of the upper bounds of Theorem~\ref{thm:minimax-redundancy} on the minimax redundancy.
The upper bound is classical and due to~\cite{yang1999information}, who used it to further bound the rate of estimation.

Fix $A \subset \R^n$.
If $N (A, r) = \infty$ for any $r > 0$ (\ie, $A$ is unbounded), then the right-hand sides of both~\eqref{eq:minimax-redundancy-fixed} and~\eqref{eq:minimax-redundancy-sum} are infinite
and the upper bounds hold.
Let $r > 0$ be arbitrary such that $N = N (A, r)$ is finite.
Let $\theta_1, \dots, \theta_N \in A$ be an $r$-cover of $A$, and define the
density
\begin{equation*}
  q_r 
  = \frac{1}{N} \sum_{i=1}^N p_{\theta_i}
  \, .
\end{equation*}

For every $\theta \in A$ and $1 \leq i \leq N$, using that $q_r \geq p_{\theta_i}/N$ one has
\begin{align*}
  \kll{p_\theta}{q_r}
  = \int_{\R^n} p_\theta \log \Big( \frac{p_\theta}{q_r} \Big)
  \leq
  \int_{\R^n} p_\theta \log \Big( \frac{N \cdot p_\theta}{p_{\theta_i}} \Big) =
  \kll{p_{\theta}}{p_{\theta_i}} + \log N
  \, .
\end{align*}
Using the %
identity $\kll{p_\theta}{p_{\vartheta}} = \norm{\theta - \vartheta}^2/2$ for $\theta, \vartheta \in \R^n$, taking the infimum over $i$, and using that $\{ \theta_1, \dots, \theta_N \}$ form an $r$-cover of $A$, this gives:
\begin{equation*}
  \kll{p_\theta}{q_r}
  \leq \frac{1}{2} \min_{1 \leq i \leq N} \norm{\theta - \theta_i}^2 + \log N
  \leq \frac{r^2}{2} + \log N (A, r)
  \, .
\end{equation*}
Since $r > 0$ was arbitrary, this gives
\begin{equation}
  \label{eq:proof-bound-redundancy}
  \mred (A)
  \leq \inf_{r > 0} \Big\{ \log N (A, r) + \frac{r^2}{2} \Big\}
  \leq 2 \, \wt r (A)^2
  \, ,
\end{equation}
where the second inequality uses point~3 of Lemma~\ref{lem:prop-fixed-points}, applied to
$g (r) = \log N (A, r)$.

\subsection{Lower bound on the redundancy}
\label{sec:lower-bound-redund}

The proof starts by lower-bounding the minimax redundancy by a mutual information.

\begin{lemma}
  \label{lem:lower-mutual}
  For any $N \geq 1$ and $\theta_1, \dots, \theta_N \in A$, denoting $\ol p_N = N^{-1} \sum_{i=1}^N p_{\theta_i}$ one has:
  \begin{equation*}
    \mred (A)
    \geq \frac{1}{N} \sum_{i=1}^N \kll{p_{\theta_i}}{\ol p_N}
    \, .
  \end{equation*}
\end{lemma}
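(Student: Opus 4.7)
The plan is to lower bound the minimax redundancy by the Bayes redundancy associated with the uniform prior on $\{\theta_1, \dots, \theta_N\}$, and then identify this Bayes redundancy with a mutual information. Concretely, for any density $q$ on $\R^n$, I will first bound
\begin{equation*}
\sup_{\theta \in A} \kll{p_\theta}{q}
\geq \frac{1}{N} \sum_{i=1}^N \kll{p_{\theta_i}}{q}
\, ,
\end{equation*}
since the maximum of a collection of numbers is at least its average, and each $\theta_i$ belongs to $A$.

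The second step is the key identity: for the uniform mixture $\ol p_N$, one has for every density $q$
\begin{equation*}
\frac{1}{N} \sum_{i=1}^N \kll{p_{\theta_i}}{q}
= \frac{1}{N} \sum_{i=1}^N \kll{p_{\theta_i}}{\ol p_N} + \kll{\ol p_N}{q}
\, .
\end{equation*}
This is verified by writing $\log(p_{\theta_i}/q) = \log(p_{\theta_i}/\ol p_N) + \log(\ol p_N/q)$, integrating against $p_{\theta_i}$, averaging over $i$, and using $\frac{1}{N}\sum_i p_{\theta_i} = \ol p_N$ in the second term. Since $\kll{\ol p_N}{q} \geq 0$ by nonnegativity of the KL divergence, the quantity $\frac{1}{N}\sum_i \kll{p_{\theta_i}}{q}$ is minimized over $q$ precisely at $q = \ol p_N$, with minimum value equal to the mutual information $\frac{1}{N}\sum_i \kll{p_{\theta_i}}{\ol p_N}$.

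Combining the two steps yields, for any density $q$,
\begin{equation*}
\sup_{\theta \in A} \kll{p_\theta}{q}
\geq \frac{1}{N} \sum_{i=1}^N \kll{p_{\theta_i}}{\ol p_N}
\, ,
\end{equation*}
and taking the infimum over $q$ gives the claim. There is no real obstacle here: the only minor point to watch is that the decomposition identity above requires $\ol p_N \ll q$ (otherwise $\kll{\ol p_N}{q} = +\infty$ and the inequality is trivial), which is harmless since we can restrict to $q$ with $\sup_{\theta \in A} \kll{p_\theta}{q} < \infty$, forcing $p_{\theta_i} \ll q$ and hence $\ol p_N \ll q$. This is the standard Bayes-risk lower bound on minimax redundancy, specialized to the uniform prior on a finite set of parameters.
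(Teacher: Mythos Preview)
Your proof is correct and follows essentially the same approach as the paper: bound the supremum by the uniform average, apply the decomposition $\frac{1}{N}\sum_i \kll{p_{\theta_i}}{q} = \frac{1}{N}\sum_i \kll{p_{\theta_i}}{\ol p_N} + \kll{\ol p_N}{q}$, drop the nonnegative term, and take the infimum over $q$. You are slightly more explicit about verifying the decomposition and about the absolute-continuity caveat, but the argument is the same.
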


\begin{proof}%
  For any density $q$ on $\R^n$, one has
  \begin{align}
    \sup_{\theta \in A} \kll{p_\theta}{q}
    &\geq \frac{1}{N} \sum_{i=1}^N \kll{p_{\theta_i}}{q} %
    = \frac{1}{N} \sum_{i=1}^N \kll{p_{\theta_i}}{\ol p_N} + \kll{\ol p_N}{q}
    \, ,
  \end{align}
  where the last identity %
  can be verified directly.
  Lower-bounding $\kll{\ol p_N}{q} \geq 0$ for any $q$ and taking the infimum of the l.h.s.~over $q$ concludes the proof of Lemma~\ref{lem:lower-mutual}.
\end{proof}

The following lemma is an important ingredient of the proof.
It states that if $A$ contains a sufficiently large and separated subset, then its minimax redundancy is large.

\begin{lemma}
  \label{lem:lower-separated}
  Assume that $\theta_1, \dots, \theta_N \in \R^n$ are such that $\norm{\theta_i - \theta_j} \geq r$ for $i \neq j$, with $r^2 \geq 32 \log N$.
  Then,
  \begin{equation}
    \label{eq:lower-separated}
    \frac{1}{N} \sum_{i=1}^N \kll{p_{\theta_i}}{\ol p_N}
    \geq \log N - 3
    \, .
  \end{equation}
\end{lemma}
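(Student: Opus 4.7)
The plan is to interpret the left-hand side as a mutual information and then lower-bound it via Fano's inequality applied to the nearest-point classifier. Let $\Theta$ be uniformly distributed on $\{\theta_1, \dots, \theta_N\}$ and let $Y \mid \Theta = \theta_i \sim p_{\theta_i}$, so that the marginal density of $Y$ is $\ol p_N$. Then
\[
I(\Theta; Y) = \frac{1}{N} \sum_{i=1}^N \kll{p_{\theta_i}}{\ol p_N},
\]
and using $H(\Theta) = \log N$ together with $I(\Theta; Y) = H(\Theta) - H(\Theta \mid Y)$, the lemma reduces to establishing the bound $H(\Theta \mid Y) \leq 3$.

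To control $H(\Theta \mid Y)$, I would introduce the maximum likelihood (equivalently, nearest-point) estimator $\wh \Theta (Y) = \arg\min_{1 \leq j \leq N} \norm{Y - \theta_j}$ and invoke Fano's inequality in the form $H(\Theta \mid Y) \leq H(\Theta \mid \wh\Theta) \leq h(p_e) + p_e \log(N-1)$, where $p_e = \P(\wh\Theta(Y) \neq \Theta)$ and $h$ denotes the binary entropy in nats. The main computation is then to upper-bound $p_e$ using the separation hypothesis. Conditionally on $\Theta = \theta_i$, writing $Y = \theta_i + Z$ with $Z \sim \gaussdist(0, I_n)$, one has
\[
\P \big( \norm{Y - \theta_j} \leq \norm{Y - \theta_i} \big)
= \P \big( \langle Z, \theta_j - \theta_i \rangle \geq \norm{\theta_i - \theta_j}^2/2 \big)
\leq e^{-\norm{\theta_i - \theta_j}^2/8},
\]
by the standard subgaussian tail bound, since $\langle Z, \theta_j - \theta_i \rangle \sim \gaussdist(0, \norm{\theta_i - \theta_j}^2)$. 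A union bound over $j \neq i$ and the hypothesis $\norm{\theta_i - \theta_j} \geq r$ with $r^2 \geq 32 \log N$ give
\[
p_e
\leq (N-1) e^{-r^2/8}
\leq N \cdot N^{-4}
= N^{-3}.
\]

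To finish, I would plug $p_e \leq N^{-3}$ into Fano's inequality. For $N \geq 2$, monotonicity of $h$ on $[0, 1/2]$ gives $h(p_e) \leq h(1/8) < 1/2$, while the elementary estimate $N^{-3} \log(N-1) \leq (\log 2)/8$ (the function $x^{-3} \log x$ being decreasing for $x \geq 2$) yields $p_e \log(N-1) \leq 1/2$. Combining gives $H(\Theta \mid Y) \leq 1 \leq 3$, with the cases $N = 1$ and $N = 2$ being trivial since the right-hand side $\log N - 3$ is then non-positive and the left-hand side is always non-negative. The argument is largely mechanical; the only non-routine step is the Gaussian tail estimate, which is standard, so I do not anticipate a real obstacle.
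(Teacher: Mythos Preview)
Your proof is correct and takes a genuinely different route from the paper. The paper works pointwise: it fixes an index $i$, writes $\kll{p_{\theta_i}}{\ol p_N} = \log N - \E[Z]$ with $Z = \log\big(1 + \sum_{j \neq i} e^{\langle \theta_j - \theta_i, X\rangle - \|\theta_j - \theta_i\|^2/2}\big)$, and then bounds $\E[Z] \leq 3$ by absorbing $\log N$ into the exponent via the separation hypothesis and integrating Gaussian tails. Your argument instead passes through the mutual-information identity $I(\Theta;Y) = \log N - H(\Theta \mid Y)$ and controls $H(\Theta \mid Y)$ by Fano's inequality applied to the nearest-point decoder, whose error probability you bound by a union of pairwise Gaussian tail events. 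Your route is arguably cleaner and even yields a better constant (you obtain $H(\Theta \mid Y) \leq 1$, hence $\log N - 1$ rather than $\log N - 3$); on the other hand, the paper's computation establishes the slightly stronger pointwise bound $\kll{p_{\theta_i}}{\ol p_N} \geq \log N - 3$ for every $i$, though only the averaged version is needed downstream.
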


\begin{proof}%
  In order to prove~\eqref{eq:lower-separated}, it suffices to prove that $\kll{p_{\theta_i}}{\ol p_N} \geq \log N - 3$ for each $i = 1, \dots, N$.
  By symmetry, it suffices to prove it for $i = 1$; and up to translating $A$, one may assume that $\theta_1 = 0$.
  In particular, $\norm{\theta_i} \geq r$ for $2 \leq i \leq N$.
  We then have, letting $X \sim \gaussdist (0, I_n)$,
  \begin{align*}
    \kll{p_0}{\ol p_N}
    &= \int_{\R^n} p_0 \log \bigg( \frac{p_0}{N^{-1} \sum_{i=1}^N p_{\theta_i}} \bigg) 
    = \log N - \int_{\R^n} p_0 \log \bigg( \sum_{i=1}^N \frac{p_{\theta_i}}{p_0} \bigg) \\
    &= \log N - \E \bigg[ \log \bigg( 1 + \sum_{i=2}^N e^{\langle \theta_i, X\rangle - \norm{\theta_i}^2/2} \bigg) \bigg]
      \, .
  \end{align*}
  Hence, $\kll{p_0}{\ol p_N} = \log N - \E [Z]$, where
  \begin{equation*}
    Z = \log \bigg( 1 + \sum_{i=2}^N e^{\langle \theta_i, X\rangle - \norm{\theta_i}^2/2} \bigg)
    \, .
  \end{equation*}
  It therefore suffices to show that $\E [Z] \leq 3$.
  We write, using that $\norm{\theta_i} \geq r$ for $2 \leq i \leq N$,
  \begin{align*}
    Z
    &\leq \log \bigg\{ 1 + N \max_{2 \leq i \leq N} \exp \Big[ \langle \theta_i, X\rangle - \frac{\norm{\theta_i}^2}{2} \Big] \bigg\} \\
    &= \log \bigg\{ 1 + \exp \bigg[ \Big( \log N - \frac{r^2}{32} \Big) + \max_{2 \leq i \leq N} \Big( \langle \theta_i, X\rangle - \frac{15 \norm{\theta_i}^2}{32} \Big) \bigg] \bigg\} \\
    &\leq \log \bigg\{ 1 + \exp \bigg[ \max_{2 \leq i \leq N} \Big( \langle \theta_i, X\rangle - \frac{15 \norm{\theta_i}^2}{32} \Big) \bigg] \bigg\}
  \end{align*}
  where the last inequality is due to the assumption that $r^2 \geq 32 \log N$.
  Now, for $2 \leq i \leq N$, one has $\langle \theta_i, X\rangle \sim \gaussdist (0, \norm{\theta_i}^2)$, so by a %
  Gaussian tail bound (\eg, \cite[p.~22]{boucheron2013concentration}), for any $\delta \in (0,1)$:
  \begin{equation*}
    \P \big( \langle \theta_i, X\rangle \geq \norm{\theta_i} \sqrt{2 \log (1/\delta)} \big)
    \leq \delta
    \, .
  \end{equation*}
  By a union bound, if $\delta \in (0,1)$, with probability at least $1-\delta$, for $i=2, \dots, N$,
  \begin{align*}
    \langle \theta_i, X\rangle
    &< \norm{\theta_i} \sqrt{2 \log (N/\delta)} 
    \leq \norm{\theta_i} \sqrt{2 \log N} + \norm{\theta_i} \sqrt{2 \log (1/\delta)} 
    \leq \frac{\norm{\theta_i}^2}{4} + \norm{\theta_i} \sqrt{2 \log (1/\delta)}
      \, ,
  \end{align*}
  where the last inequality uses that $\log N \leq r^2/32 \leq \norm{\theta_i}^2/32$.
  Hence, with probability %
  $1-\delta$,
  \begin{align*}
    \max_{2 \leq i \leq N} \Big( \langle \theta_i, X\rangle - \frac{15 \norm{\theta_i}^2}{32} \Big)
    &\leq \max_{2 \leq i \leq N} \Big( \norm{\theta_i} \sqrt{2 \log (1/\delta)} + \frac{\norm{\theta_i}^2}{4} - \frac{15 \norm{\theta_i}^2}{32} \Big) \\
    &\leq \sup_{\rho \geq 0} \Big( \rho \sqrt{2 \log (1/\delta)} - \frac{7 \rho^2}{32} \Big) 
    = \frac{16}{7} \log (1/\delta)      
      \, .
  \end{align*}
  Putting the previous inequalities together, with probability at least $1-\delta$,
  \begin{equation*}
    Z
    \leq %
    \log \big( 1 + \delta^{-16/7} \big)
    \leq \log \big( 2 \delta^{-16/7} \big)
    = \log 2 + \frac{16}{7} \log (1/\delta)
    \, .
  \end{equation*}
  Integrating the tails gives:
  \begin{equation*}
    \E [Z]
    \leq \int_{0}^1 \Big( \log 2 + \frac{16}{7} \log (1/\delta) \Big) \di \delta
    = \log 2 + \frac{16}{7}
    \leq 3
    \, .
  \end{equation*}
  This concludes the proof.
\end{proof}

We note that Lemma~\ref{lem:lower-separated} could also be proved using a lower bound on the mutual information in terms of a log-Laplace transform of the Rényi divergence from Haussler and Opper~\cite[Theorem~2]{haussler1997mutual}; this lower bound implies in particular
a lower bound on the minimax redundancy in terms of Hellinger covering numbers~\cite[Theorem~32.5 p.~530]{polyanskiy2022information}.
The Hellinger distance is particularly convenient by virtue of being a distance (allowing, for instance, to relate packing and covering numbers); %
on the other hand, in the Gaussian case relaxing the Rényi divergence into a Hellinger distance
leads to suboptimal lower bounds
in general.
We thank Yanjun Han for bringing Haussler and Opper's inequality to our attention.

With these lemmas at hand, we conclude the proof of the lower bound in Theorem~\ref{thm:minimax-redundancy}.
Let $\wt r = \wt r (A) \geq 0$ (which may be infinite, in case $A \subset \R^n$ is unbounded).

\paragraph{First case: $\wt r \gtrsim 1$.}

We first assume that $\wt r$ is large enough, namely
$\wt r \geq 12$.

For any $r > 0$, let $N_p (A, r)$ denote the packing number of $A$ at scale $r$, defined as the supremum of all $N \geq 1$ for which there exist $\theta_1, \dots, \theta_N \in A$ such that $\norm{\theta_i - \theta_j} > r$ for $1 \leq i, j \leq N$, $i \neq j$.
Packing and covering numbers are related through the following standard inequalities (\eg, \cite[Lemma~4.2.8 p.~83]{vershynin2018high}):
\begin{equation}
  \label{eq:packing-covering}
  N_p (A, 2 r)
  \leq N (A, r)
  \leq N_p (A, r)
  \, .
\end{equation}
Define
\begin{equation*}
  \wt r_p
  = \sup \Big\{ r \geq 0 : \log N_p (A, r) \geq r^2 \Big\}
  \, .
\end{equation*}
By inequality~\eqref{eq:packing-covering} and the consequence~\eqref{eq:fixed-points-isomorphic} of Lemma~\ref{lem:prop-fixed-points}, we have that
\begin{equation*}
  \wt r_p / 2
  \leq \wt r
  \leq \wt r_p
  \, .
\end{equation*}

This implies that $\wt r_p > 12$.
Now, let $12 < r < \wt r_p$ be arbitrary, and
let $N \geq 1$ be the largest integer such that $\log N \leq r^2/32$.
Since $r \geq 12$, one has $N \geq 90$ and $r^2 / 32 \leq \log (N+1) \leq \frac{\log 91}{\log 90} \log N$
so that $\log N \geq r^2/33$.

In addition, since $r < \wt r_p$ and $\log N_p (A, r)$ is non-increasing, one has $\log N_p (A, r) \geq r^2 \geq r^2/32 \geq \log N$.
Hence, $N \leq N_p (A, r)$ and one may find $\theta_1, \dots, \theta_N \in A$ such that $\norm{\theta_i - \theta_j} \geq r$ for $i \neq j$.
By Lemmas~\ref{lem:lower-mutual} and~\ref{lem:lower-separated}, it follows that
\begin{equation*}
  \mred (A)
  \geq \log N - 3
  \geq \frac{r^2}{33} - 3
  \geq \frac{r^2}{110}
\end{equation*}
since $r \geq 12$.
Letting $r\to \wt r_p$, we obtain $\mred (A) \geq \wt r_p^2/110 \geq \wt r^2/110$.

\paragraph{Second case: $\wt r \lesssim 1$.}

We now handle the complementary case $0 \leq \wt r < 12$.
(The derivation below is only needed to lower-bound the redundancy; for the minimax regret $\mmr (A)$ and Wills functional, it is implied by the stronger lower bound of Section~\ref{sec:proof-lower-regret}.)

First, note that since $\log N (A, r)$ is non-increasing, if $\wt r = 0$ then for any $r > 0$ one has $\log N (A, r) = 0$ namely $N (A, r) = 1$.
This implies that $A$ has a single element $\theta \in \R^n$ and thus (taking $q = p_\theta$ in the expression~\eqref{eq:def-redundancy} of $\mred (A)$) $\mred (A) = 0$.

Now, assume that $\wt r > 0$.
For $0 < r < \wt r$, one has $\log N (A, r) \geq r^2 > 0$ so $N (A, r) \geq 2$.
It follows that $A$ contains two elements $\theta_1, \theta_2$ such that $\rho = \norm{\theta_1 - \theta_2} > r$.
By Lemma~\ref{lem:lower-mutual}, 
\begin{align*}
  \mred (A)
  \geq \frac{1}{2} \Big\{ \kl \Big( {p_{\theta_1}}, {\frac{p_{\theta_1} + p_{\theta_2}}{2}} \Big) + \kl \Big( {p_{\theta_2}}, {\frac{p_{\theta_1} + p_{\theta_2}}{2}} \Big) \Big\}
  = \kl \Big( {p_{0}}, {\frac{p_{0} + p_{\rho}}{2}} \Big)
  \, ,
\end{align*}
where the reduction to $(\theta_1, \theta_2) = (0, \rho) \in \R \times \R$ uses the invariance and independence properties of Gaussian measures.
Let $\tv{p - q} = \frac{1}{2} \int |p - q|$ denote the total variation distance between two densities $p,q$.
By Pinsker's inequality (\eg, \cite[Lemma~2.5 p.~88]{tsybakov2009nonparametric}),
\begin{equation*}
  \kl \Big( p_0, \frac{p_0 + p_\rho}{2} \Big)
  \geq 2 \cdot \Big\| p_0 - \frac{p_0 + p_\rho}{2} \Big\|_{\mathrm{TV}}^2
  = \frac{1}{2} \cdot \tv{p_0 - p_\rho}^2
  \, .
\end{equation*}
Now, a direct computation gives, 
denoting $p (x) = p_0 (x) = e^{-x^2/2} / \sqrt{2\pi}$,
\begin{equation*}
  \tv{p_0 - p_\rho}
  = \frac{1}{2} \Big( \int_{-\infty}^{\rho/2} \big( p (x) - p (x - \rho) \big) \di x + \int_{\rho/2}^{\infty} \big( p (x-\rho) - p (x) \big) \di x \Big)
  = 2 \int_{0}^{\rho/2} p (x) \di x
\end{equation*}
Since $p$ is positive and decreasing on $\R^+$, $\tau (\rho) = \tv{p_0 - p_\rho}$ is increasing and $\tau(\rho)%
/\rho$ decreasing with respect to $\rho$.
Recalling that $\rho \geq r$ and $r < \wt r \leq 12$, this gives
\begin{equation*}
  \tv{p_0 - p_\rho}
  \geq \tau (r)
  \geq \frac{\tau (12)}{12} \cdot r
  \geq \frac{r}{12.1}
  \, .
\end{equation*}
Combining the previous inequalities and recalling that $0 < r < \wt r$ was arbitrary, we deduce that
\begin{equation*}
  \mred (A)
  \geq \sup_{0 < r < \wt r} \frac{1}{2} \Big( \frac{r}{12.1} \Big)^2
  \geq \frac{\wt r^2}{300}
  \, .
\end{equation*}

We have thus proved that, in all cases, $\mred (A) \geq \wt r (A)^2 /300$.
The alternative lower bound \eqref{eq:minimax-redundancy-sum} comes from the fact that, by Lemma~\ref{lem:prop-fixed-points},
\begin{equation*}
  \wt r (A)^2 \geq \frac{1}{2} \,
  \inf_{r > 0} \big\{ \log N (A, r) + r^2 \big\}
  \, .
\end{equation*}

\subsection{First upper bound on the regret}
\label{sec:proof-upper-regret}

We now turn to the study of the minimax regret $\mmr (A)$, starting from the upper bound
\begin{equation}
  \label{eq:upper-regret-sum}
  \mmr (A)
  \leq \inf_{r > 0} \Big\{ w_A (r) + \log N (A, r) \Big\}
  \, .
\end{equation}

This upper bound is obtained by naturally combining two distinct basic bounds, namely McMullen's ``isoperimetric'' bound~\eqref{eq:wills-width} and another ``mixture'' argument already used in the upper bound on the redundancy in Section~\ref{sec:upper-bound-redund}.

We start with the mixture bound: if $A \subset \bigcup_{i=1}^N A_i$ for some $N \geq 1$ and $A_1, \dots, A_N \subset \R^n$, %
\begin{align*}
  \mmr (A)
  &\leq \log \bigg( \frac{1}{(2\pi)^{n/2}} \int_{\R^n} e^{- \min_{1 \leq i \leq N} \dist^2 (x, A_i)/2} \di x \bigg) \\
  &\leq \log \bigg( \frac{1}{(2\pi)^{n/2}} \int_{\R^n} \sum_{i=1}^N e^{- \dist^2 (x, A_i)/2} \di x \bigg) 
  = \log \bigg( \sum_{i=1}^N \exp \big( \mmr (A_i) \big) \bigg) 
\end{align*}
so that
\begin{equation}
  \label{eq:mixture-regret}
  \mmr (A)
  \leq \max_{1 \leq i \leq N} \mmr (A_i) + \log N
  \, .
\end{equation}

Now, for any $r > 0$, let $N = N (A, r)$ and $\{ \theta_1, \dots, \theta_N \} \subset A$ an $r$-cover of $A$.
One may write
\begin{equation*}
  A
  = \bigcup_{i=1}^N \big[ A \cap B (\theta_i, r) \big]
  \, .
\end{equation*}
Applying the mixture bound~\eqref{eq:mixture-regret} followed by McMullen's upper bound (Proposition~\ref{prop:finiteness}) gives
\begin{align*}
  \mmr (A)
  &\leq \max_{1 \leq i \leq N} \mmr \big( A \cap B (\theta_i, r) \big) + \log N %
\leq \sup_{\theta \in A} w \big( A \cap B (\theta, r) \big) + \log N (A, r)
    \, ,
\end{align*}
which is precisely~\eqref{eq:upper-regret-sum}, since $r > 0$ is arbitrary.

\subsection{Lower bound on the regret}
\label{sec:proof-lower-regret}

We now prove the following lower bound: for any $A \subset \R^n$
\begin{equation}
  \label{eq:proof-lower-regret-fixed}
  \mmr (A)
  \geq \max \bigg( \frac{r_* (A)^2}{2}, \frac{\wt r (A)^2}{300} \bigg)
  \, .
\end{equation}

First, the lower bound in terms of $\wt r (A)$ follows from the lower bound for the redundancy proved in Section~\ref{sec:lower-bound-redund}: by~\eqref{eq:redundancy-regret} and~\eqref{eq:minimax-redundancy-fixed},
\begin{equation*}
  \mmr (A)
  \geq \mred (A)
  \geq \frac{\wt r (A)^2}{300}
  \, .
\end{equation*}

It remains to show the lower bound in terms of $r_* (A)$.
Let $r > 0$ and $\theta \in A$ be arbitrary.
By monotonicity of $\mmr (A)$ under inclusion (Proposition~\ref{prop:properties-minimax}), one has:
\begin{equation*}
  \mmr (A)
  \geq \mmr \big( A \cap B (\theta, r) \big)
  \, .
\end{equation*}
In addition, since $A \cap B (\theta, r) \subset B (\theta, r)$, the lower bound~\eqref{eq:reverse-isometry-wills} implies that
\begin{equation*}
  \mmr \big( A \cap B (\theta, r) \big)
  \geq w \big( A \cap B (\theta, r) \big) - \frac{r^2}{2}
  \, .
\end{equation*}
Combining the previous inequalities and taking the supremum over $\theta \in A$ yields, for every $r > 0$,
\begin{equation}
  \label{eq:proof-lower-diff}
  \mmr (A)
  \geq \sup_{\theta \in A} w \big( A \cap B (\theta, r) \big) - \frac{r^2}{2}
  = w_A (r) - \frac{r^2}{2}
  \, .
\end{equation}

Now, let $r < r_* (A)$ be arbitrary.
By definition~\eqref{eq:fixed-point-local} of $r_* (A)$, there exists $r' \geq r$ such that $w_A (r') \geq r'^2$.
By~\eqref{eq:proof-lower-diff} applied to $r'$, one has
\begin{equation*}
  \mmr (A)
  \geq w_A (r') - \frac{r'^2}{2}
  \geq \frac{r'^2}{2}
  \geq \frac{r^2}{2}
  \, .
\end{equation*}
Taking the supremum over $r < r_* (A)$ gives $\mmr (A) \geq r_* (A)^2/2$, concluding the proof of~\eqref{eq:proof-lower-regret-fixed}.

\subsection{Relating the upper and lower bounds}
\label{sec:proof-relating-upper-lower}

At this point, we have shown that
\begin{equation*}
  \max \bigg( \frac{r_* (A)^2}{2}, \frac{\wt r (A)^2}{300} \bigg)
  \leq \mmr (A)
  \leq \inf_{r > 0} \Big\{ w_A (r) + \log N (A, r) \Big\}
  \, .
\end{equation*}

To show that both upper and lower bounds above are sharp and conclude the proof of Theorem~\ref{thm:minimax-regret-metric}, it remains to bound (up to constants) the rightmost term by the leftmost term.

In order to lighten notation, we keep dependence on $A$ implicit, and let $r_*  = r_* (A)$, $\wt r = \wt r (A)$, and for $r > 0$ denote $N (r) = N (A, r)$ and $w (r) = w_A (r)$.
It remains to show that $\inf_{r > 0} \big\{ w (r) + \log N (r) \big\} \lesssim \max (r_*^2, \wt r^2)$.

If $\max (r_*, \wt r) = +\infty$, the inequality clearly holds, so we assume that $\max (r_*, \wt r) < \infty$.
Let $r > \max (r_*, \wt r)$ be arbitrary.
By definition of $r_*$ and $\wt r$, one has
\begin{align*}
  w (r)
  < r^2
  \quad \text{ and } \quad
  \log N (r)
  < r^2
  \, ,
\end{align*}
so that $w (r) + \log N (r) < 2 r^2$.
Taking the infimum over $r > \max (r_*, \wt r)$ leads to
\begin{equation}
  \label{eq:proof-relate-fix-sum}
  \inf_{r > 0} \Big\{ w_A (r) + \log N (A, r) \Big\}
  \leq 2 \max \big( r_* (A)^2, \wt r (A)^2 \big)
  \, .
\end{equation}
Combining inequalities~\eqref{eq:upper-regret-sum},~\eqref{eq:proof-lower-regret-fixed} and~\eqref{eq:proof-relate-fix-sum} concludes the proof of Theorem~\ref{thm:minimax-regret-metric}.

\subsection{Generic chaining characterization (Corollary~\ref{cor:majorizing-measure-regret})}
\label{sec:proof-chain-char}

We now establish the alternative characterizations of Corollary~\ref{cor:majorizing-measure-regret} in terms of admissible sequences and truncated $\gamma_2$ functionals.

We start with the proof the upper bound in~\eqref{eq:mm-regret-sum}.
Let $\A = (\A_j)_{j \geq 0}$ be an admissible sequence of partitions of $A$, and let $p \in \N$ be arbitrary.
By the upper bound~\eqref{eq:mixture-regret} applied to the partition $\A_p$ of $A$ (with at most $N_p$ elements) combined with the Gaussian width upper bound of Proposition~\ref{prop:finiteness}, one has
\begin{equation*}
  \mmr (A)
  \leq \log N_p + \max_{A' \in \A_p} \mmr (A')
  \leq \log N_p + \max_{A' \in \A_p} w (A')
  \, .
\end{equation*}
Note that $\log N_p = 0$ if $p=0$ and $\log N_p = 2^p \log 2$ if $p \geq 1$, so that $\log N_p \leq 2 (2^p - 1)$ for all $p \geq 0$.
In addition, for every $A' \in \A_p$, the sequence $(\A^{A'}_j)_{j \geq 0}$ of partitions of $A'$ defined by $\A^{A'}_j = \{ A' \}$ for $j \leq p-1$, and $\A^{A'}_j = \{ B \in \A_j : B \subset A'\}$ for $j \geq p$, is an admissible sequence.
Also, for $\theta \in A' \subset A$, the element of $\A^{A'}_j$ containing $\theta$ is $A' = A_p (\theta)$ if $j \leq p-1$ and $A_j (\theta)$ if $j \geq p$.
By the upper bound in~\eqref{eq:majorizing-measure}, one has
\begin{align*}
  w (A')
  &\leq c \sup_{\theta \in A'} \bigg( \sum_{j=0}^{p-1} 2^{j/2} \diam A_p (\theta) + \sum_{j \geq p} 2^{j/2} \diam A_j (\theta) \bigg) \\
  &= c \sup_{\theta \in A'} \bigg( \frac{2^{p/2} - 1}{\sqrt{2} - 1} \diam A_p (\theta) + \sum_{j \geq p} 2^{j/2} \diam A_j (\theta) \bigg) \\
  &\leq \frac{2 c}{\sqrt{2}-1} \sup_{\theta \in A'} \sum_{j \geq p} 2^{j/2} \diam A_j (\theta)
    \, .
\end{align*}
Putting these bounds together gives
\begin{align*}
  \mmr (A)
  &\leq 2 (2^p - 1) + \frac{2 c}{\sqrt{2} - 1} \max_{A' \in \A_p} \sup_{\theta \in A'} \sum_{j \geq p} 2^{j/2} \diam A_j (\theta) \\
  &\leq 2 (\sqrt{2} + 1) c \, \bigg\{ (2^p - 1) + \sup_{\theta \in A} \sum_{j \geq p} 2^{j/2} \diam A_j (\theta) \bigg\}
    \, .
\end{align*}
Since $\A$ and $p \in \N$ were arbitrary, taking the infimum over both gives the upper bound in~\eqref{eq:mm-regret-sum}.

We now turn to the lower bound in~\eqref{eq:mm-regret-sum}.
It follows from the lower bound in Theorem~\ref{thm:minimax-regret-metric} together with the lower bound in the Majorizing Measure theorem (see~\eqref{eq:majorizing-measure}).

By the lower bound~\eqref{eq:minimax-regret-sum} of Theorem~\ref{thm:minimax-regret-metric}, it suffices to show the following: there exists an absolute constant $C'$ such that, for every $r > 0$,
there exists $p \in \N$ and an admissible sequence $(\A_j)_{j \geq 0}$ such that
\begin{equation}
  \label{eq:proof-mm-lower}
  (2^p - 1) + \sup_{\theta \in A} \sum_{j \geq p} 2^{j/2} \diam A_j (\theta)
  \leq C' \Big\{ \log N (A, r) + w_A (r) \Big\}
  \, .
\end{equation}
We now fix $r > 0$.

If $N (A, r) = 1$, we let $p = 0$ and then~\eqref{eq:proof-mm-lower} reduces to the lower bound~\eqref{eq:majorizing-measure} in the Majorizing Measure theorem.

Now, assume that $N = N (A, r) \geq 2$ and let $p \geq 1$ be the smallest integer such that $N_p \geq N$, so that (by definition of $N_p$) $\log N \leq \log N_p \leq 2 \log N$.
Let $\{ \theta_1, \dots, \theta_N \} \subset A$ be an $r$-cover of $A$, so that $A = \bigcup_{i=1}^N A_i$ where $A_i = A \cap B (\theta_i, r)$.
Note that $w (A_i) \leq w_A (r)$ by definition of $w_A$.

For every $i = 1, \dots, N$, by the lower bound~\eqref{eq:majorizing-measure} in the Majorizing Measure theorem, there exists an admissible sequence $(\A_{ij})_{j \geq 0}$ of partitions of $A_i$ such that
\begin{equation*}
  \sup_{\theta \in A_i} \sum_{j \geq p} 2^{j/2} \diam A_{ij} (\theta)
  \leq \sup_{\theta \in A_i} \sum_{j \geq 0} 2^{j/2} \diam A_{ij} (\theta)
  \leq c \, w (A_i)
  \leq c \, w_A (r)
  \, .
\end{equation*}
Now, define the following admissible sequence $\A = (\A_j)_{j \geq 0}$ of partitions of $A$:
for $j \geq p$, let $\A_j = \bigcup_{i=1}^N \A_{i,j-p}$, and define $\A_j$ arbitrarily
for $0 \leq j \leq p-1$, since the left-hand side of~\eqref{eq:proof-mm-lower} only depends on $(\A_j)_{j \geq p}$.
This is indeed an admissible sequence:
note that $\A_p = \{ A_1, \dots, A_N \}$, so $|\A_p| = N \leq N_p$; while for $j \geq p+1$, one has $|\A_j | = \sum_{i=1}^N |\A_{i,j-p}| \leq N_p \cdot N_{j-p} \leq N_{j-1}^2 = N_j$ (since $j - 1 \geq p \geq 1$ and $N_{k}^2 = N_{k+1}$ for $k \geq 1$).
In addition, $\A_{j+1}$ is a refinement of $\A_{j}$ since each sequence $(\A_{ij})_{j \geq 0}$ is admissible.

Combining the previous bounds and using that $\log N_p = 2^p \log 2 \geq (2^p - 1)\log 2$,
\begin{align*}
  (2^p - 1) + \sup_{\theta \in A} \sum_{j \geq p} 2^{j/2} \diam A_j (\theta)
  &\leq (\log 2)^{-1} \log N_p + \max_{1 \leq i \leq N} \sup_{\theta \in A_i} \sum_{j \geq p} 2^{j/2} \diam A_{ij} (\theta) \\
  &\leq \max \Big\{ \frac{2}{\log 2}, c \Big\} \cdot \Big[ \log N (A, r) + w_A (r) \Big]
    \, ,
\end{align*}
which proves the upper bound in~\eqref{eq:mm-regret-sum}.

The second characterization~\eqref{eq:mm-regret-fixed} in Corollary~\ref{cor:majorizing-measure-regret} is deduced from the first.
Indeed, if $p^* (A) = - \infty$, then $\gamma^* (A) = \gamma_2 (A) = \gamma_2^{(0)} (A) < 1$, so that for any $p \geq 1$, $2^p-1 + \gamma_2^{(p)} (A) \geq 2^p-1 \geq 1 \geq 2^0 - 1 + \gamma_2^{(0)} = \gamma_2 (A)$ and therefore
\begin{equation*}
  \inf_{p \geq 0} \big\{ (2^p-1) + \gamma_2^{(p)} (A) \big\}
  = \gamma_2 (A)
  = \gamma^* (A)
  \, .
\end{equation*}
Now, if $p^* = p^* (A) \geq 0$, then (as $\gamma_2^{(p)} (A)$ is non-increasing in $p$) one has for $p \leq p^*$, $\gamma_2^{(p)} (A) \geq \gamma_2^{(p^*)} (A) \geq 2^{p^*}$, while for $p \geq p^*+1$ one has $2^p-1 \geq 2^{p-1} \geq 2^{p^*}$.
We deduce that
\begin{equation*}
  2^{p^*}
  \leq \inf_{p \geq 0} \big\{ (2^p-1) + \gamma_2^{(p)} (A) \big\}
  \leq (2^{p^*+1}-1) + \gamma_2^{(p^*+1)} (A)
  \leq 4 \cdot 2^{p^*}
\end{equation*}
where the last inequality uses that $\gamma_2^{(p^*+1)} (A) \leq 2^{p^*+1}$.
Thus, in all cases, one has $\gamma^* (A) \leq \inf_{p \geq 0} \{ (2^p-1) + \gamma_2^{(p)} (A) \} \leq 4 \gamma^* (A)$ so that~\eqref{eq:mm-regret-fixed} follows from~\eqref{eq:mm-regret-sum}.

\subsection{Proof of Proposition~\ref{prop:correlation-local}}
\label{sec:proof-correlation-local}

We start with the following simple lemma.

\begin{lemma}
  \label{lem:local-global-widths}
  For any nonempty subset $A \subset \R^n$, one has $r_* (A)^2 \leq w (A)$.
  In addition, if $\diam^2 (A) \leq w (A)$ \textup(which holds in particular if $\diam (A) \leq 1/\sqrt{2\pi}$\textup)
  then $r_* (A)^2 = w (A)$.  
\end{lemma}

\begin{proof}
  For the first inequality, note that for any $r > \sqrt{w (A)}$ and $\theta \in A$ one has
  $w ( A \cap B (\theta, r)) \leq w (A) < r^2$, so that $r_* (A) \leq \sqrt{w (A)}$.

  Now, assume that $\diam^2 (A) \leq w (A)$ and let $r = \sqrt{w (A)}$.
  Then by assumption one has $r \geq \diam (A)$, so for any $\theta \in A$ one has $w (A \cap B(\theta, r)) = w (A) = r^2$ and thus $r_* (A) \geq r = \sqrt{w (A)}$.  
  The fact that $\diam (A) \leq 1/\sqrt{2\pi}$ implies $\diam^2 (A) \leq w (A)$ is pointed out in Proposition~\ref{prop:small-scale}.
\end{proof}

We now proceed with the proof of Proposition~\ref{prop:correlation-local}.
By definition, we have $\corr (A) = \sup_{\theta \in A} \corr (\theta, A)$ and $r_* (A) = \sup_{\theta \in A} r_* (\theta, A)$, where for $\theta \in A$ we let
\begin{align*}
  r_* (\theta, A)
  &= \sup \Big\{ r \geq 0 : w \big( A \cap B (\theta, r) \big) \geq r^2 \Big\} \, ; \\
  \corr (\theta, A)
  &= \corr (p_\theta, \probas_A)
    \, .
\end{align*}
Therefore, in order to prove Proposition~\ref{prop:correlation-local} it suffices to prove the corresponding inequalities for $\corr (\theta, A)$ and $r_* (\theta, A)$.
In addition, without loss of generality one may assume that $\theta = 0$, so in particular $0 \in A$.
Let $r_* = r_* (0, A)$ and $\corr = \corr (0, A)$.
We have, letting $X \sim \gaussdist (0, I_d)$,
\begin{align}
  \corr
  &= \E \Big[ \ell (p_0, X) - \inf_{\theta \in A} \ell (p_\theta, X) \Big] \nonumber \\
  &= \E \Big[ \frac{\norm{X}^2}{2} - \inf_{\theta \in A} \frac{\norm{\theta - X}^2}{2} \Big] 
    = \E \Big[ \sup_{\theta \in A} \Big\{ \innerp{\theta}{X} - \frac{\norm{\theta}^2}{2} \Big\} \Big]
    \label{eq:corr-zero-offset}
    \, .
\end{align}
This quantity is known as the ``offset Gaussian average'' of the set $A$~\cite[\S18.3]{rakhlin2022mathematical}.
Inequalities of the form~\eqref{eq:corr-upper-lower} between $r_*$ and $\corr$ can be found in~\cite[Lemma 33]{rakhlin2022mathematical} in the case where $A = K$ is a convex set, and in what follows we extend this result to the nonconvex case.

For the lower bound, let $r < r_*$ be arbitrary.
By definition of $r_*$, there exists $r' > r$ such that $w (A \cap r' B_2) \geq r'^2$, and then
\begin{align*}
  \corr
  &\geq \E \Big[ \sup_{\theta \in A \cap r' B_2} \Big\{ \innerp{\theta}{X} - \frac{\norm{\theta}^2}{2} \Big\} \Big]
    \geq \E \Big[ \sup_{\theta \in A \cap r' B_2} \innerp{\theta}{X} - \frac{r'^2}{2} \Big] \\
  &= w \big( A \cap r' B_2 \big) - \frac{r'^2}{2}
    \geq \frac{r'^2}{2}
    \geq \frac{r^2}{2}
    \, ,
\end{align*}
and since $r < r_*$ was arbitrary we obtain that $\corr \geq r_*^2/2$.

We now turn to the upper bound.
We start by noting that
\begin{equation}
  \label{eq:sup-double-sup}
  \sup_{\theta \in A} \Big\{ \innerp{\theta}{X} - \frac{\norm{\theta}^2}{2} \Big\}
  = \sup_{r \geq 0} \Big\{ \sup_{\theta \in A \cap r B_2} \innerp{\theta}{X} - \frac{r^2}{2} \Big\}
  \, .
\end{equation}
Indeed, the lower bound in~\eqref{eq:sup-double-sup} is shown as above, by restricting to $A \cap r B_2$ for any $r \geq 0$; while the upper bound is obtained by taking $r = \norm{\theta}$ for any $\theta \in A$.

In what follows, we let
\begin{equation}
  \label{eq:condition-r0}
  r_0 > \max \Big( {12 \sqrt{2}} e^{-1/2}, \frac{12}{\sqrt{2}} \, r_* (A/12) \Big)
\end{equation}
be arbitrary, and set $r_k = r_0 2^{k/2}$ for any integer $k \geq 1$.
In addition, for any $r \geq 0$ and $x \in \R^n$ we define $h_r (x) = \sup_{A \cap r B_2} \innerp{\theta}{x}$.
We want to control, with high probability, the quantity
\begin{equation*}
  \sup_{r \geq 0} \Big[ h_r (X) - \frac{r^2}{2} \Big]
  = \max \Big\{ \sup_{0 \leq r < r_0} \Big[ h_r (X) - \frac{r^2}{2} \Big], \sup_{r \geq r_0} \Big[ h_r (X) - \frac{r^2}{2} \Big] \Big\}
  \, .
\end{equation*}
Using that the function $r \mapsto h_r (X)$ is non-decreasing, we have that
\begin{equation*}
  \sup_{0 \leq r < r_0} \Big[ h_r (X) - \frac{r^2}{2} \Big]
  \leq \sup_{0 \leq r < r_0} h_r (X)
  \leq h_{r_0} (X)
  \, .
\end{equation*}
In addition, for any $r \geq r_0$, let $k \geq 1$ be the largest integer such that $r_{k-1} \leq r$, so that $r_{k-1} = r_k/\sqrt{2} \leq r < r_k$.
We then have
\begin{equation*}
  h_r (X) - \frac{r^2}{2}
  \leq h_{r_k} (X) - \frac{(r_{k-1}/\sqrt{2})^2}{2}
  = h_{r_k} (X) - \frac{r_k^2}{4}
  \leq \sup_{k' \geq 1} \Big[ h_{r_{k'}} (X) - \frac{r_{k'}^2}{4} \Big]
  \, .
\end{equation*}
This holds in particular for $r = r_0$, so $h_{r_0} (X) \leq r_0^2/2 + \sup_{k \geq 1} \big[ h_{r_k} (X) - r_k^2/4 \big]$.
Putting these inequalities together, we obtain that
\begin{align}
  \label{eq:bound-sup-r}
  \sup_{r \geq 0} \Big[ h_r (X) - \frac{r^2}{2} \Big]
  &\leq \max \Big\{ h_{r_0} (X), \sup_{k \geq 1} \Big[ h_{r_k} (X) - \frac{r_k^2}{4} \Big] \Big\} \nonumber \\
  &\leq \frac{r_0^2}{2} + \sup_{k \geq 1} \Big[ h_{r_k} (X) - \frac{r_k^2}{4} \Big]
    \, .
\end{align}
We therefore need to control the right-hand side of~\eqref{eq:bound-sup-r} with high probability.

Fix $k \geq 1$.
The function $h_{r_k} : \R^n \to \R$ is Lipschitz with constant $r_k$, so the Gaussian concentration inequality (\eg, \cite[Theorem~5.6 p.~126]{boucheron2013concentration}) implies that, for any $\delta > 0$,
\begin{equation*}
  \P \Big( h_{r_k} (X) \geq \E [h_{r_k} (X)] + \sqrt{2 r_k^2 \log (2^k/\delta)} \Big)
  \leq \delta/2^k
  \, .
\end{equation*}
Now, note that $\E [h_{r_k} (X)] = w (A \cap r_k B_2)$.
In addition, for any $c \geq 1$ and any $r > c \, r_* (A/c)$, one has by definition that
$w (A/c \cap (r/c) B_2) < (r/c)^2$, namely $w (A \cap r B_2) < r^2/c$.
In the case of $r = r_k$, one has $r_k = r_0 2^{k/2} \geq \sqrt{2} r_0 > 12 \, r_* (A/12)$ due to the condition~\eqref{eq:condition-r0} on $r_0$, so that $\E [h_{r_k} (X)] = w (A \cap r_k B_2) < r_k^2/12$.

We may also bound
\begin{align*}
  \sqrt{2 r_k^2 \log (2^k/\delta)}
  &\leq \sqrt{2 r_k^2 \log (1/\delta)} + r_k \sqrt{2 \log (2^k)} \\
  &\leq \frac{r_k^2}{12} + 12 \log (1/\delta) + \frac{r_k^2}{12} \cdot \frac{12 \sqrt{2}}{r_0} \sqrt{\frac{\log (2^k)}{2^k}} \\
  &\leq \frac{r_k^2}{12} + 12 \log (1/\delta) + \frac{r_k^2}{12} 
    \, ,
\end{align*}
where we used that $\log (2^k)/2^k \leq e^{-1}$ and that $r_0 \geq 12 \sqrt{2} e^{-1/2}$ from the condition~\eqref{eq:condition-r0}.
Putting the previous inequalities together, we obtain that
\begin{equation*}
  \P \Big( h_{r_k} (X) \geq \frac{r_k^2}{4} + 12 \log (1/\delta) \Big)
  \leq \frac{\delta}{2^k}
  \, .
\end{equation*}
By a union bound over $k \geq 1$, the last inequality entails that, for any $\delta \in (0,1)$,
\begin{equation*}
  \P \Big( \sup_{k \geq 1} \Big\{ h_{r_k} (X) - \frac{r_k^2}{4} \Big\} \geq 12 \log (1/\delta) \Big)
  \leq \sum_{k\geq 1} \frac{\delta}{2^k}
  = \delta
  \, ,
\end{equation*}
so that
\begin{equation*}
  \E \Big[ \sup_{k \geq 1} \Big\{ h_{r_k} (X) - \frac{r_k^2}{4} \Big\} \Big]
  \leq \int_0^1 12 \log (1/\delta) \di \delta
  = 12
  \, .
\end{equation*}
Plugging this bound into~\eqref{eq:bound-sup-r}, and taking the infimum over values of $r_0$ satisfying~\eqref{eq:condition-r0} gives
\begin{align*}
  \corr
  = \E \Big[ \sup_{r \geq 0} \Big\{ h_r (X) - \frac{r^2}{2} \Big\} \Big]
  &\leq \frac{1}{2} \max \Big( {12 \sqrt{2}} e^{-1/2}, \frac{12}{\sqrt{2}} \, r_* (A/12) \Big)^2 + 12 \\
  &\leq 36 \, r_* (A/12)^2 + 144 e^{-1} + 12 \\
  &\leq 36 \, r_* (A/12)^2 + 65 \, ,
\end{align*}
which proves~\eqref{eq:corr-upper-lower}.

The bound $\corr (A) \leq r_*^2 (A)$ when $\diam (A) \leq 1/\sqrt{2\pi}$ follows from the fact that in this case $r_*^2 (A) = w (A)$ (by Lemma~\ref{lem:local-global-widths}), while $\corr (A) \leq w (A)$ (by dropping the $-\norm{\theta}^2/2$ term in~\eqref{eq:corr-zero-offset}).

Finally, let $K \subset \R^n$ be convex.
If $\diam (K) > 2\sqrt{2/\pi}$, then there exist $\theta_1, \theta_2 \in K$ such that $\norm{\theta_1-\theta_2} \geq 2\sqrt{2/\pi}$, and by convexity of $K$ the segment $[\theta_1,\theta_2]$ is contained in $K$.
Hence $K$ contains a segment of length $2 \sqrt{2/\pi}$, thus $r_* (K) \geq r_* ([-\sqrt{2/\pi}, \sqrt{2/\pi}])$, and a direct computation shows that $r_* ([- \sqrt{2/\pi}, \sqrt{2/\pi}]) = \sqrt{2/\pi}$.
In addition, one has $r_* (K/12) \geq r_* (K)$: indeed, up to translating we may assume that $0 \in K$, in which case $K/12 \subset K$ by convexity of $K$.
We then have from~\eqref{eq:corr-upper-lower} that
\begin{equation*}
  \corr (K)
  \leq 36 \, r_* (K/12)^2 + 65
  \leq \Big( 36 + \frac{65 \pi}{2} \Big) r_* (K)^2
  \leq 140 \, r_* (K)^2
  \, .
\end{equation*}
Now if $\diam (K) \leq 2/\sqrt{2/\pi}$, then $\diam (K/4) \leq 1/\sqrt{2\pi}$ so by the previous point one has $\corr (K/4) \leq r_* (K/4)^2 \leq r_* (K)^2$ by convexity of $K$.
Finally note that for any $A \subset \R^n$ and constant $\lambda \geq 1$, one has
$\corr (\lambda A) \leq \lambda \corr (A)$.
Indeed, up to translating it suffices to show that $\corr (0, \lambda A) \leq \lambda \corr (0, A)$ when $0 \in A$.
By the expression~\eqref{eq:corr-zero-offset},
\begin{equation*}
  \corr (0, \lambda A)
  = \E \Big[ \sup_{\theta \in A} \Big\{ \lambda \innerp{\theta}{X} - \frac{\lambda^2 \norm{\theta}^2}{2} \Big\} \Big]
  \leq \E \Big[ \sup_{\theta \in A} \Big\{ \lambda \innerp{\theta}{X} - \frac{\lambda \norm{\theta}^2}{2} \Big\} \Big]
  = \lambda \corr (A)
  \, .
\end{equation*}
Thus $\corr (K) \leq 4 \corr (K/4) \leq 4 \, r_*^2 (K)$ in this case.

\appendix

\section{Remaining %
  proofs}
\label{sec:other-results}

\subsection{Proof of Proposition~\ref{prop:regret-max-intrinsic}}
\label{sec:proof-max-intrinsic}

Proposition~\ref{prop:regret-max-intrinsic} is a consequence of Theorem~\ref{thm:minimax-wills-volumes} (recalling also that $V_1 (K/\sqrt{2\pi}) = w (K)$), of the Poisson-log-concave inequality~\eqref{eq:alexandrov-fenchel}, and of the following lemma.

\begin{lemma}
  \label{lem:max-sum-ulc}
  Let $(v_j)_{j \geq 0}\in \R_+^\N$ be a finitely supported sequence such that for every $j \geq 1$,
  \begin{equation}
    \label{eq:ulc-ineq}
    (j+1) \, v_{j+1} v_{j-1}
    \leq j \, v_j^2
    \, .
  \end{equation}
  Assume also that $v_0 = 1$ and that $v_1 \geq 2$.
  Then, one has
  \begin{equation*}
    \log \bigg( \sum_{j \geq 0} v_j \bigg)
    \leq 4 \log \Big( \max_{j \geq 1} v_j \Big)
    \leq 8 \log \Big( \max_{k \geq 0} v_{2^k} \Big)
    \, .
  \end{equation*}
\end{lemma}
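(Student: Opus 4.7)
The plan is to establish the two inequalities of the lemma separately; both follow from the Poisson-log-concavity hypothesis, and I begin with the second (``dyadic'') bound, which is the cleaner of the two.

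\textbf{Dyadic bound.} The assumption $(j+1) v_{j+1} v_{j-1} \leq j v_j^2$ implies in particular the ordinary log-concavity $v_{j+1} v_{j-1} \leq v_j^2$, and iterating gives the rearrangement inequality $v_a v_b \leq v_c v_d$ whenever $a \leq c \leq d \leq b$ with $a+b = c+d$. Let $M = \max_{j \geq 1} v_j$ be attained at $j^*$, and apply the rearrangement to $(a, b) = (0, j^*)$ and $(c, d) = (\lfloor j^*/2 \rfloor, \lceil j^*/2 \rceil)$: using $v_0 = 1$, this gives $M \leq v_{\lfloor j^*/2 \rfloor} \cdot v_{\lceil j^*/2 \rceil}$. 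Now pick $k \geq 0$ with $2^k \leq j^* \leq 2^{k+1}$; both halved indices lie in $[0, 2^k]$, and by unimodality of $(v_j)$ (another consequence of log-concavity) they sit on the increasing side of the peak, so each is at most $v_{2^k}$. Combining, $M \leq v_{2^k}^2 \leq (\max_l v_{2^l})^2$. The edge case $j^* = 1$ is immediate from $\max_l v_{2^l} \geq v_1 \geq 2$.

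\textbf{Sum bound.} Observe that $(j+1) v_{j+1} v_{j-1} \leq j v_j^2$ is equivalent to $r_j := j v_j/v_{j-1}$ being non-increasing in $j \geq 1$. In particular $r_j \leq r_1 = v_1$, so $v_j \leq v_1^j/j!$ and $\sum_j v_j \leq e^{v_1}$, giving the crude estimate $\log \sum v_j \leq v_1 \leq M$. A direct analysis of $M \mapsto M - 4 \log M$ shows $M \leq 4 \log M$ on $[2, 8.6]$, which already handles the small-$M$ regime. For $M$ larger, I refine as follows. First, since $r_i \geq r_{j^*} \geq j^*$ for $i \leq j^*$, one has $M = v_{j^*} \geq (j^*)^{j^*}/j^*! \geq e^{j^*-1}/\sqrt{j^*}$ (Stirling), which solves to $j^* \leq 2 \log M + 1$. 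Second, combining monotonicity of $r_j$ with $r_{j^*+1} \leq j^*+1$, one derives the pointwise bounds
\begin{equation*}
  v_{j^*-l} \leq M \prod_{i=0}^{l-1} \bigl(1 - i/j^*\bigr),
  \qquad
  v_{j^*+k} \leq M \prod_{i=1}^{k} \frac{j^*+1}{j^*+i},
\end{equation*}
for all $l, k \geq 0$. Bounding each product by a Gaussian-type estimate (e.g.\ $\prod_{i=0}^{l-1}(1 - i/n) \leq e^{-l^2/(4n)}$, with an analogous estimate plus a geometric tail on the right side) and summing yields $\sum_j v_j \leq C \cdot M \sqrt{j^*+1}$ for an absolute $C$. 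Combined with $j^* \leq 2\log M + 1$, this gives $\log \sum v_j \leq \log M + \frac{1}{2}\log \log M + O(1)$, which is at most $4 \log M$ once $M$ exceeds a modest absolute threshold. Together with the crude bound, this closes the proof.

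\textbf{Main obstacle.} The delicate point in the sum bound is matching the absolute constants across the two regimes: the crude $\log \sum v_j \leq M$ beats $4 \log M$ only up to $M \approx 8.6$, while the refined $\log \sum v_j \leq \log M + O(\log\log M)$ beats it only above some absolute threshold. Checking that the numerical constants from Stirling and from the Gaussian tail actually yield two overlapping regimes is the quantitative crux; once this is done, a case split closes the argument.
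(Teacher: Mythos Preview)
Your dyadic bound is correct and essentially equivalent to the paper's: both exploit log-concavity to compare $v_{j^*}$ with $v_{2^k}$ for $2^k \leq j^* \leq 2^{k+1}$. The paper uses that $(v_j^{1/j})_{j\geq 1}$ is non-increasing (so $v_{2^k}^{1/2^k} \geq v_{j^*}^{1/j^*} \geq v_{j^*}^{1/2^{k+1}}$), while you use the rearrangement inequality; these are interchangeable.

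For the sum bound, your approach is much more elaborate than necessary, and as written it is incomplete: you explicitly leave the overlap of the two regimes (crude for $M \leq 8.6$, refined for $M$ above an unspecified threshold) as an unchecked numerical claim. The paper avoids this entirely with a two-line argument that you nearly have in hand. You already observe that $r_j = j v_j/v_{j-1}$ is non-increasing with $r_1 = v_1$; the step you miss is that this immediately gives $j^* \leq r_{j^*} \leq r_1 = v_1 \leq M$, without any Stirling. With $j^* \leq M$ the sum bound is direct: the central block $\sum_{j=0}^{2j^*} v_j \leq (2j^*+1) M \leq 2M^2 + M$, and for $j \geq 2j^*+2$ one has $v_j/v_{j-1} = r_j/j \leq r_{j^*+1}/j < (j^*+1)/j \leq 1/2$, so the tail sums to at most $2M$. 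Hence $\sum_j v_j \leq 2M^2 + 3M \leq 3.5\, M^2$ (using $M \geq 2$), and $\log \sum_j v_j \leq \log 3.5 + 2\log M \leq 4\log M$. No case split, no Gaussian tail estimates, no constants to match.
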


\begin{proof}
  Inequality~\eqref{eq:ulc-ineq} implies that if $n = \inf \{ j \geq 1 : v_j = 0 \} \in \N$, then $v_j = 0$ for $j \geq n$, and that the sequence $(\alpha_j)_{1 \leq j \leq n}$ defined by
  $\alpha_j
  = j v_j / v_{j-1} %
  $
  is non-increasing.
  Also let $\alpha_0 = + \infty$, and define
  \begin{equation*}
    j^*
    = \max \big\{ j \geq 0 : \alpha_{j} \geq j \big\}
    = \max \big\{ j \geq 0 : v_{j} \geq v_{j-1} \big\}
    \, .  
  \end{equation*}
  By definition, one has $v_j < v_{j-1}$ for $j > j^*$, and since $\alpha_j$ is non-increasing, $\alpha_j /j = v_j / v_{j-1}$ is decreasing so that for $j \leq j^*$, one has $v_j / v_{j-1} \geq v_{j^*} / v_{j^*-1} \geq 1$, with strict inequality if $j < j^*$.
  In other words,
  the sequence $(v_j)_{j \geq 0}$ increases on $\{ 0, \dots, j^* - 1\}$, reaches its maximum at $j^*$ (possibly also at ${j^*-1}$), and decreases on $\{ j^*, j^*+1 ,\dots \}$.

  Now since $\alpha_1 = v_1 \geq 1$, one has $j^* \geq 1$; in addition, $v_1 = \alpha_1 \geq \alpha_{j^*} \geq j^*$, so in particular $j^* \leq v_1 \leq v_{j^*}$ and
  \begin{equation*}  
    \sum_{j=0}^{2 j^*} v_j
    \leq (2 j^* + 1) v_{j^*}
    \leq 2 v_{j^*}^2 + v_{j^*}
    \, .
  \end{equation*}
  In addition, since $(\alpha_j)$ is non-increasing, for $j \geq 2 j^* + 2$ one has $\alpha_j \leq \alpha_{j^*+1} < j^* + 1$ so $v_j / v_{j-1} \leq (j^*+1)/j \leq 1/2$.
  Hence for $k \geq 0$, one has $v_{2j^* + 1 + k} \leq 2^{-k} v_{j^*}$ and thus
  \begin{equation*}
    \sum_{j \geq 2 j^* + 1} v_j
    \leq \sum_{k \geq 0} 2^{-k} v_{j^*}
    \leq 2 v_{j^*} 
    \, .
  \end{equation*}
  Combining the previous two bounds gives (recalling that $v_{j^*} \geq v_1 \geq 2$)
  \begin{equation*}
    \log \bigg( \sum_{j \geq 0} v_j \bigg)
    \leq \log \big( 2 v_{j^*}^2 + 3 v_{j^*} \big)
    \leq \log (3.5 \cdot v_{j^*}^2)
    \leq 4 \log v_{j^*}
    \, .
  \end{equation*}
  This corresponds to the first inequality in Lemma~\ref{lem:max-sum-ulc}.
  To prove the second one, note that~\eqref{eq:ulc-ineq} implies in particular the log-concavity inequality $v_{j+1} v_{j-1} \leq v_j^2$ for $j \geq 1$, which in turns implies that the sequence $(v_j^{1/j})_{j \geq 1}$ is non-increasing.
  As a result, let $1 \leq j \leq j^*$ be such that $j^* \leq 2 j$.
  Recalling that $v_{j^*} \geq 1$, one has:
  \begin{equation*}
    v_{j}^{1/j}
    \geq v_{j^*}^{1/j^*}
    \geq v_{j^*}^{1/(2j)}
    \, ,
  \end{equation*}
  namely $v_{j^*} \leq v_j^2$ \ie $\log v_{j^*} \leq 2 \log v_j$.
  Letting $j = 2^k$ where $k \geq 0$ is the maximal integer such that $2^k \leq j^*$ concludes the proof.
\end{proof}

\subsection{Proof of Proposition~\ref{prop:large-scale}}
\label{sec:proof-large-scale}

We use the following representation of the Wills functional,
which corresponds (up to normalization by $\sqrt{2\pi}$) to~\eqref{eq:other-integral-steiner} in the proof of Theorem~\ref{thm:minimax-wills-volumes}
or alternatively~\cite[Eq.~(1.4)]{hadwiger1975wills}: 
\begin{equation*}
  W (A)
  = 2 \pi \int_0^\infty \vol_n (A + r B_2^n) r e^{-\pi r^2} \di r
  \, .
\end{equation*}
In particular, for every $t > 0$ one has:
\begin{align*}
  W (t A) / t^n
  &= 2 \pi \int_0^\infty t^{-n} \vol_n (t A + r B_2^n) r e^{-\pi r^2} \di r \\
  &= 2 \pi \int_0^\infty \vol_n \Big( A + \frac{r}{t} B_2^n \Big) r e^{-\pi r^2} \di r
    \, .
\end{align*}
Now, the family of sets $A + \eps B_2^n$ is increasing (in the sense of inclusion) in $\eps > 0$, with intersection $A$ (since $A$ is closed) and contained for $0 < \eps \leq 1$ in the bounded
set $A + B_2^n$, so by dominated convergence one has $\vol_n (A + \eps B_2^n) \to \vol_n (A)$ as $\eps \to 0$.
In particular, for any fixed $r \geq 0$ one has $\vol_n \big( A + \frac{r}{t} B_2^n \big) r e^{-\pi r^2} \to \vol_n (A) r e^{-\pi r^2}$ as $t \to \infty$, and additionally for every $t \geq 1$ and $r \geq 0$ one has $| \vol_n ( A + \frac{r}{t} B_2^n ) r e^{-\pi r^2} | \leq \kappa_n (\diam (A) + r)^n r e^{-\pi r^2}$ which is integrable over $\R^+$, so by dominated convergence, as $t \to \infty$:
\begin{equation*}
  W (t A) / t^{n}
  \to 2 \pi \int_0^\infty \vol_n (A) r e^{-\pi r^2} \di r
  = \vol_n (A)
  \, .
  \qedhere
\end{equation*}

\subsection{Simplified
  form of minimax regret for convex bodies
}
\label{sec:altern-expr-minim}

Theorem~\ref{thm:minimax-regret-metric} characterizes the minimax regret in terms of two fixed points for general sets $A \subset \R^n$.
We now discuss another characterization in the case where $A = K$ is a convex body.

Since $\mmr (K) \asymp \mmr (K - K)$ by Proposition~\ref{prop:sum-difference-bodies}, one may assume that $K$ is a symmetric convex body.
In this case, Lemma~\ref{lem:local-gaussian-width} ensures that $w_K (r) = w (K \cap r B_2)$ and that $w_K (r) / r$ is decreasing in $r > 0$.
Now, one may write:
\begin{equation*}
  r_* (K)
  = \sup \Big\{ r \geq 0 : w_K (r) \geq r^2 \Big\}
  = \sup \Big\{ r \geq 0 : \Big( \frac{w_K (r)}{r} \Big)^2 \geq r^2 \Big\}
  \, .
\end{equation*}
Then, using that $w_K (r)/r = w ((r^{-1} K) \cap B_2)$ is decreasing in $r$ and Lemma~\ref{lem:prop-fixed-points},
\begin{equation}
  \label{eq:fixed-local-dec}
  \frac{1}{2} \inf_{r > 0} \Big\{ w \big( r^{-1} K \cap B_2 \big) + r^2 \Big\}
  \leq r_*^2 (K)
  \leq \inf_{r > 0} \Big\{ w \big( r^{-1} K \cap B_2 \big)^2 + r^2 \Big\}
  \, .
\end{equation}
We emphasize that the variational representation~\eqref{eq:fixed-local-dec} %
is specific to the convex case, since for general $A$ the map $r \mapsto w_A (r)/r$ may not be decreasing.
Now, for any decreasing functions $g, h : \R^+ \to \R^+$,
\begin{equation*}
  \frac{1}{2} \Big( \inf_{r > 0} \big\{ g (r) + r^2 \big\} + \inf_{r > 0} \big\{ h (r) + r^2 \big\} \Big)
  \leq \inf_{r > 0} \big\{ g (r) + h (r) + r^2 \big\}
  \leq \inf_{r > 0} \big\{ g (r) + r^2 \big\} + \inf_{r > 0} \big\{ h (r) + r^2 \big\}
  .
\end{equation*}
Combining the above leads to the following alternative characterization:
\begin{corollary}
  \label{cor:regret-sum-symmetric}
  Let $K \subset \R^n$ be an origin-symmetric convex body.
  Then,
  \begin{equation}
    \label{eq:regret-sum-symmetric}
    \begin{split}
      \frac{1}{600} \inf_{r > 0} \Big\{ \log N (K, r) + w \big( &r^{-1} K \cap B_2 \big)^2 + r^2 \Big\}
    \leq \mmr (K) \\
    &\leq 4 \inf_{r > 0} \Big\{ \log N (K, r) + w \big( r^{-1} K \cap B_2 \big)^2 + r^2 \Big\}
    \, .
    \end{split}
  \end{equation}
\end{corollary}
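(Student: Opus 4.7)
The plan is to invoke Theorem~\ref{thm:minimax-regret-metric}, convert both fixed points $r_*(K)$ and $\wt r(K)$ into the infimum form they naturally take in the symmetric convex setting, and then combine these two infima into the single infimum appearing in~\eqref{eq:regret-sum-symmetric}.

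The key structural input specific to the origin-symmetric convex case is Lemma~\ref{lem:local-gaussian-width}: one has $w_K(r) = w(K \cap rB_2)$ and the map $r \mapsto w_K(r)/r$ is non-increasing on $\R_+^*$. Setting $f(r) = w(r^{-1}K \cap B_2)$, the homogeneity of Gaussian width gives $w_K(r)/r = f(r)$, so the rewriting $w_K(r) \geq r^2 \ssi (w_K(r)/r)^2 \geq r^2$ identifies $r_*(K) = \phi(g)$ in the notation of Lemma~\ref{lem:prop-fixed-points}, where $g(r) = f(r)^2$ is non-increasing. Part~(3) of that lemma then yields, up to the factor $2$ displayed in~\eqref{eq:fixed-local-dec},
\[
r_*(K)^2
\asymp \inf_{r>0} \big\{ w(r^{-1}K \cap B_2)^2 + r^2 \big\}
\, .
\]
Applied to the non-increasing function $h(r) = \log N(K,r)$, the same lemma gives $\wt r(K)^2 \asymp \inf_{r>0}\{\log N(K,r) + r^2\}$.

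The next step is to combine the two infima. Since both $g$ and $h$ above are non-increasing, the pair of elementary inequalities
\[
\max \Big( \inf_r \{ g(r) + r^2 \}, \inf_r \{ h(r) + r^2 \} \Big)
\leq \inf_r \{ g(r) + h(r) + r^2 \}
\leq \inf_r \{ g(r) + r^2 \} + \inf_r \{ h(r) + r^2 \}
\]
holds: the left inequality is pointwise, while the right one is obtained by evaluating the sum at the larger of the two individual minimizers, the monotonicity of $g$ and $h$ then forcing the remaining term to be dominated by the corresponding one-variable infimum. Chaining with the previous paragraph delivers
\[
\max \big( r_*(K)^2, \wt r(K)^2 \big)
\asymp \inf_{r>0} \big\{ \log N(K,r) + w(r^{-1}K \cap B_2)^2 + r^2 \big\}
\, .
\]

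Finally, plug this equivalence into Theorem~\ref{thm:minimax-regret-metric}, which bounds $\mmr(K)$ above and below by constant multiples of $\max(r_*(K)^2, \wt r(K)^2)$; the explicit numerical constants $1/600$ and $4$ in~\eqref{eq:regret-sum-symmetric} come out by tracking the factors in Lemma~\ref{lem:prop-fixed-points}(3), Theorem~\ref{thm:minimax-regret-metric}, and the two elementary inequalities just displayed. There is no substantial obstacle here: the corollary is a routine combination of previously established results. The only delicate point is the reformulation of $r_*(K)$ as the fixed point of the \emph{non-increasing} function $g = (w_K/\operatorname{id})^2$, which is what makes Lemma~\ref{lem:prop-fixed-points}(3) applicable on the local-width side and which crucially uses origin-symmetric convexity through Lemma~\ref{lem:local-gaussian-width}.
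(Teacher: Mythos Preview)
Your proposal is correct and follows essentially the same route as the paper: rewrite $r_*(K)$ via the non-increasing function $r \mapsto (w_K(r)/r)^2 = w(r^{-1}K \cap B_2)^2$ using Lemma~\ref{lem:local-gaussian-width}, apply Lemma~\ref{lem:prop-fixed-points}(3) to both fixed points, combine the two one-variable infima into a single one via an elementary inequality, and plug into Theorem~\ref{thm:minimax-regret-metric}. Your observation that monotonicity of $g$ and $h$ is what drives the right-hand combining inequality is in fact more careful than the paper's phrasing (``for any functions $g,h$''), which is a slight overstatement but harmless in context since both functions at hand are non-increasing.
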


The interest of this representation is that the two ``complexity'' parameters, here $\log N (K, r)$ and $w (r^{-1} K \cap B_2)^2$, are both homogeneous to ``dimensions'' rather than to (powers of) distances.
The parameter $w (r^{-1} K \cap B_2)^2 = ( w (K \cap r B_2) / r)^2$ has (for $r < \diam (K) /2$) a natural interpretation: it corresponds to the so-called \emph{Dvoretzky-Milman dimension} (see~\cite[Chapter~5]{artstein2015asymptotic}) of the localized set $K \cap r B_2$.

\subsection{Short proof of McMullen's inequality}
\label{sec:short-proof-mcmull}

In this section, we provide a short proof of McMullen's inequality~\eqref{eq:wills-first-iv} that $\log W (K) \leq V_1 (K)$ for any convex body $K \subset \R^n$.

\begin{proof}
  Let $K \subset \R^n$ be a convex body.
  One has, as $t \to 0$,
  \begin{equation*}
    \log W (t K)
    = \log \bigg( \sum_{j=0}^n t^j V_j (K) \bigg)
    = t \cdot V_1 (K) + O (t^2)
    \, ,
  \end{equation*}
  so that %
  $\log W (t K) / t \to V_1 (K)$
  as $t \to 0$.
  On the other hand, by concavity of $\log W$
  (\cite{alonso2021further}, see Theorem~\ref{thm:concavity}),
  the function $t \mapsto \log W (t K)/t$ is non-increasing on $\R_+^*$.
  This implies that
  \begin{equation*}
   \log W (K)
   = \frac{\log W (1 \cdot K)}{1}
   \leq \lim_{t \to 0} \frac{\log W (t K)}{t}
   = V_1 (K)
    \, . \qedhere
  \end{equation*}
\end{proof}

\paragraph{Acknowledgements.}

The author warmly thanks Guillaume Lecué for several stimulating discussions, Yanjun Han, Yury Polyanskiy, Alexander Rakhlin and Philippe Rigollet
for helpful comments, as well as Geoffrey Chinot and Nikita Zhivotovskiy for a careful reading of a first version of this manuscript.
This research is supported by a grant of the French National Research Agency (ANR), “Investissements d’Avenir” (LabEx Ecodec/ANR-11-LABX-0047).

{\small

\newcommand{\etalchar}[1]{$^{#1}$}
  
}

\end{document}